\DeclareMathAlphabet\mathbfcal{OMS}{cmsy}{b}{n}
\newcommand{\diag}{{\rm diag}}
\newcommand{\dist}{{\rm dist}}
\newcommand{\subscript}[2]{$#1 _ #2$}
\newcommand*{\rom}[1]{\expandafter\@slowromancap\romannumeral #1@}
\newcommand{\cG}{\mathcal{G}}
\newcommand{\cX}{\mathcal{X}}
\newcommand{\cL}{\mathcal{L}}
\newcommand{\cT}{\mathcal{T}}
\newcommand{\cH}{\mathcal{H}}
\newcommand{\cR}{\mathcal{R}}
\newcommand{\cF}{\mathcal{F}}
\newcommand{\cE}{\mathcal{E}}
\newcommand{\cM}{\mathcal{M}}
\newcommand{\Var}{\text{Var}}
\newcommand{\bA}{\boldsymbol{A}}
\newcommand{\bB}{\boldsymbol{B}}
\newcommand{\bZ}{\boldsymbol{Z}}
\newcommand{\bY}{\boldsymbol{Y}}
\newcommand{\bK}{\boldsymbol{K}}
\newcommand{\bSigma}{\boldsymbol{\Sigma}}
\newcommand{\bLambda}{\boldsymbol{\Lambda}}
\newcommand{\bU}{\boldsymbol{U}}
\newcommand{\bW}{\boldsymbol{W}}
\newcommand{\bI}{\boldsymbol{I}}
\newcommand{\bcX}{{\mathbfcal{X}}}
\newcommand{\bcA}{{\mathbfcal{A}}}
\newcommand{\bcY}{{\mathbfcal{Y}}}
\newcommand{\bcZ}{{\mathbfcal{Z}}}
\newcommand{\bbK}{{\mathbb{K}}}
\newcommand{\bbE}{{\mathbb{E}}}
\newcommand{\bbP}{{\mathbb{P}}}
\newcommand{\bbR}{{\mathbb{R}}}
\newcommand{\bbS}{{\mathbb{S}}}
\newcommand{\bbG}{{\mathbb{G}}}
\newcommand{\bbO}{{\mathbb{O}}}
\newcommand{\argmin}{\mathop{\rm arg\min}}
\newcommand{\argmax}{\mathop{\rm arg\max}}
\newcommand{\tF}{{\rm F}}
\newcommand{\SVD}{{\rm SVD}}
\newcommand{\tr}{{\rm tr}}
\newtheorem{Theorem}{Theorem}
\newtheorem{Assumption}{Assumption}
\newtheorem{Lemma}{Lemma}
\newtheorem{Remark}{Remark}
\newtheorem{Corollary}{Corollary}
\newtheorem{Proposition}{Proposition}
\title{Guaranteed Functional Tensor Singular Value Decomposition\footnote{Rungang Han is a postdoc fellow, Department of Statistical Science, Duke University, Durham, NC 27710, E-mail: rungang.han@duke.edu; Pixu Shi is Assistant Professor, Department of Biostatistics \& Bioinformatics, Duke University, E-mail: pixu.shi@duke.edu; Anru R. Zhang is Eugene Anson Stead, Jr. M.D. Associate Professor, Department of Biostatistics \& Bioinformatics and Department of Computer Science, Duke University, E-mail: anru.zhang@duke.edu. The research of R. Han and A. R. Zhang was supported in part by NSF Grant CAREER-1944904 and NIH Grant R01 GM131399.}
}
\author{Rungang Han, ~ Pixu Shi, ~ and ~ Anru R. Zhang}
\date{}
\begin{document}

	\maketitle
	
	\begin{abstract}
		This paper introduces the functional tensor singular value decomposition (FTSVD), a novel dimension reduction framework for tensors with one functional mode and several tabular modes. The problem is motivated by high-order longitudinal data analysis. Our model assumes the observed data to be a random realization of an approximate CP low-rank functional tensor measured on a discrete time grid. Incorporating tensor algebra and the theory of Reproducing Kernel Hilbert Space (RKHS), we propose a novel RKHS-based constrained power iteration with spectral initialization. Our method can successfully estimate both singular vectors and functions of the low-rank structure in the observed data. With mild assumptions, we establish the non-asymptotic contractive error bounds for the proposed algorithm. The superiority of the proposed framework is demonstrated via extensive experiments on both simulated and real data.
	\end{abstract}
	
	\begin{sloppypar}
		
		\section{Introduction}\label{sec:intro}

		In recent decades, the analysis of tensor data has become an active research topic. Datasets in the form of high-order tensors or multiway arrays arise from various scientific applications, such as neuroimaging~\citep{zhang2019tensor}, microscopy imaging \citep{zhang2020denoising}, and longitudinal microbiome study \citep{martino2021context}. Such high-order data pose significant challenges in both theoretical analysis and computation implementation due to their complicated structures and a large number of entries involved, making it inappropriate to extend many existing methods for matrices to the analysis of tensor data.
		
		In real applications, the different modes of tensor datasets (or different directions that the tensor arrays align in) can come in various formats. Two prominent formats are (1) {\it tabular modes}, such as subject ID, genomics ID, treatments, where shuffling the indices does not essentially change the data structure; (2) {\it functional modes}, such as time, location, spectrum (in the hyperspectral imaging), where the order of indices exhibits structures such as continuity. Here we provide two scenarios to illustrate tensor data with both types of modes.
		
		\begin{itemize}
			\item \emph{Multivariate functional data analysis.} Data with different functional features appear in various applications and can be formatted into a tensor with two tabular modes representing units and variables, and a functional mode representing time point. One example is the longitudinal microbiome studies, where microbiome samples are taken from multiple subjects (units) at multiple time points to study the abundance of bacteria (variables) over time. Depending on what taxonomic level is being studied, there can be hundreds and thousands of bacterial taxa in the feature mode and many of these taxa have strong correlation in their abundance.
			
			\item \emph{Dynamic networks.} In network analysis, one often observes multiple snapshots of dynamic networks. The adjacency matrix in each snapshot can be stacked together into an adjacency tensor, where two tabular modes correspond to the vertices of the network, and a functional mode corresponds to time. The dimension of the tabular modes, i.e., number of nodes, can be as large as thousands, but the hidden community structure of the network can often be modeled by low-rankness of adjacency tensor.
		\end{itemize}
		Just as in the examples above, the tabular modes may be very high-dimensional. The functional mode is often continuous, which is essentially infinite-dimensional. Therefore, dimension reduction tools are crucial for the visualization and analysis of high-dimensional tensor data with both tabular modes and functional modes. Various lines of methods can be applied, whereas one prominent framework is the {\it multivariate/multilevel functional principal component analysis (MFPCA)} in the functional data analysis (FDA) literature. The classic MFPCA methods often focus on a moderately large number of variables and dimension reduction on the functional mode. The resulting number of estimated eigenfunctions is often in the same order as the number of variables, making the eigenfunctions hard to interpret when the variable mode is high-dimensional. In addition, most existing MFPCA frameworks focus on characterizing the covariance structure among functional variables at the population level and assuming the i.i.d. observations. Such assumption can be violated when dealing with heterogeneous samples. A more detailed discussion on the literature of multivariate functional analysis is given in Section \ref{sec:related-work}.
		Another important class of works utilizes the high-order structure of the tensor data via different types of low-rank tensor decomposition models including (sparse) CP low-rankness \citep{anandkumar2014guaranteed,sun2017provable}, (sparse) Tucker low-rankness \citep{zhang2018tensor,zhang2019optimal-statsvd}. While the low-rank models achieve efficient dimension reduction and are interpretable when the modes are tabular, they do not directly characterize or utilize the information included in the functional modes such as time and location. Therefore, new methods and theory are needed to tackle both the high-dimensional and functional aspects of functional tensor data.
		
		In this paper, we introduce a new framework for the dimension reduction of functional tensor data which we refer to as \emph{Functional Tensor Singular Value Decomposition} (FTSVD). Suppose $\bcY \in \bbR^{p_1 \times p_2 \times \cT}$ is the underlying function of interest, where $\cT \subset \bbR$ is some compact set on the real axis. For any fixed $(i,j)$ pair, $\bcY_{ijt}$ is a function of $t \in \cT$. We aim to decompose $\bcY$ into the following canonical polyadic (CP) format:
		\begin{equation}\label{eq:function-cp-approxi}
			\bcY = \bcX+\bcZ, ~~ \text{where} ~~ \bcX = \sum_{l=1}^r \lambda_l a_l \circ b_l \circ \xi_l ~~ \text{or equivalently}\quad \bcX_{ijt}=\sum_{l=1}^r\lambda_l (a_{l})_i(b_{l})_j\xi_{l}(t).
		\end{equation}
		Here, $a_l\in \mathbb{R}^{p_1}$, $b_l\in \mathbb{R}^{p_2}$ are singular vectors of tabular modes; $\xi_l: \cT\to \mathbb{R}$ is the singular function, which corresponds to the eigenfunction in functional PCA; $\bcZ$ is the remainder term of the rank-$r$ CP decomposition of $\bcY$. We assume the observed data $\widetilde{\bcY}$ to be $\bcY$ measured over a discrete grid $\{s_k\}_{k=1}^n \subset \cT$ with observational noise, i.e.,
		\begin{equation}\label{eq:model-observe-intro}
			\widetilde\bcY \in \bbR^{p_1\times p_2 \times n},\qquad \widetilde\bcY_{ijk} = \bcY_{ijs_k} + \varepsilon_{ijk}.
		\end{equation}
		Without loss of generality, we assume that $\mathcal{T} = [0,1]$ throughout the paper. 
		
		The proposed model setting is flexible and adaptive. For example, in the scenario of multivariate functional data analysis, different from many existing literature on (multivariate) functional PCA~\citep{happ2018multivariate,wang2020low}, which assumed the samples are i.i.d. distributed and try to estimate the covariance function, our model allows significant heterogeneity among samples (which is characterized by the singular vector of the sample mode). In the scenario of dynamic network, to model the temporal connectivity dynamics, one can assume the $r$ latent factors for all vertices $Z = [\sqrt{\lambda_1}a_1,\ldots,\sqrt{\lambda_r}a_r]$ are time-invariant, and the similarity (connectivity probability) matrix at time $s_k$ can be modeled as a weighted inner product: $Z\Lambda_{s_k} Z^\top$, where the weight matrix $\Lambda_{s_k} = \diag\left(\xi_1(s_k),\ldots,\xi_r(s_k)\right)$. The model can be reduced to the regular symmetric CP decomposition with $a_i = b_i$. Additional applications that can be formatted under this framework include (hyper)imaging analysis, online recommendation system, etc. 
		
		Departing from the existing literature, the proposed framework FTSVD can provide important new insights to the analysis of data tensors with both tabular and functional modes. Firstly, through the lens of functional tensor SVD, we can describe the trend of all variables $\bcX_{ij\cdot}$ over time using only a small set of nonparametric singular functions $\xi_l(t)$. Such low-dimensional representation of the functional mode is more interpretable than high-dimensional ones and may cast light on the driving factors behind the trend of variables. Secondly, the singular vectors $a_l$ and $b_l$ provide a low-dimensional depiction of the information in the tabular modes, which can be used to improve the efficiency and interpretability of subsequent analysis such as clustering, classification, and regression.

		\subsection{Contributions}\label{sec:contributions}
		
		The main contributions of this paper are three-fold. First, to the best of our knowledge, we are the first to study multivariate/multilevel functional data analysis through a lens of a low-rank tensor decomposition model. We are also the first to adopt the reproducing kernel Hilbert space (RKHS) framework for the singular function in the low-rank decomposition. Compared to recently emerged state-of-the-art tabular tensor models, the proposed model can better characterize the longitudinal behaviors of the data when there exist temporal modes. Compared to most classic FDA literature, our model does not assume the samples to be i.i.d. Instead, our method may provide new insight on how to analyze heteroskedastic functional effects for mixed-population samples. 
		
		Secondly, we propose a new power iteration algorithm for the estimation of our model. In each iteration step, we use projection to update the estimators for the tabular components and constrained/regularized empirical risk minimization (ERM) to update the estimators for the functional component, respectively. The novel inclusion of multiple-times ERM in our iterative algorithm imposes new challenges on both statistical theory and numeric optimization.
		
		Thirdly, we establish the finite-sample statistical error bounds for the estimation of both singular vectors and functions \eqref{ineq:error-bound-thm1}. Our error bound includes contribution from the remainder term of CP approximation (Eq. \eqref{eq:function-cp-approxi}) and observational noises (Eq. \eqref{eq:model-observe-intro}) relative to the signal level in $\bcX$. It also reflects the effect of time grid density and incoherence condition of the CP decomposition. To the best of our knowledge, this is the first contractive error bound for iterative algorithms in functional data analysis.
		
		\subsection{Related Work}\label{sec:related-work}
		
		This work is related to a series of papers on \emph{low-rank tensor decomposition} from statistics and machine learning, which aims at recovering the low-dimensional linear representations of the noisy tabular tensors~\citep{anandkumar2014tensor}. Commonly-used low-rank models include CP~\citep{anandkumar2014guaranteed} and Tucker~\citep{zhang2018tensor}
		where various power iteration methods were proposed on the singular vector (or principal component) estimation with statistical consistency guarantees. Motivated by real data applications in practice, related statistical models were further studied under non-Gaussian or missing-data scenarios. For example, \cite{hong2020generalized,han2021optimal} considered the likelihood-based generalized low-rank tensor decomposition under general exponential family.
		
		In addition, various dimension-reduced structures are imposed on the singular vectors or subspace of the low-rank tensor decomposition model to better capture other intrinsic properties of the data, such as sparsity~\citep{sun2017provable,zhang2019optimal-statsvd} and  blocking~\citep{han2020lloyd}. The most relevant paper to our work is the spatial/temporal structure, which is incorporated to characterize time/location-varying patterns for one or more of the tensor modes. There are various formulations for modeling such structures based on different real data motivations. For example, \cite{sun2019dynamic} considered a fusion structure and assumed the time-varying trend to be piece-wise constant; \cite{han2019isotonic} considered the multidimensional isotonic regression, where there are monotonous trends in each tensor mode. In most of such literature, the temporal structures are still assumed on tabular tensors and make the corresponding models restrictive. In contrast, the target of our framework is the underlying non-parametric functional structures, which reflects the trend of observed data over time.
		
		Another related topic is the \emph{functional data analysis} (FDA), which is a popular branch of statistical research. The readers are referred to books and surveys in \cite{ramsay2006functional,wang2016functional}. Starting from samples with univariate functions, traditional FDA involves various statistical learning tasks, such as regression~\citep{yao2005functional-regression,cai2012minimax}, covariance function estimation~\citep{rice1991estimating}, principal component analysis (PCA)~\citep{james2000principal}, etc. Under the multivariate regime, \cite{hasenstab2017multi} considered the multilevel/multidimensional functional data PCA. \cite{wang2020low} studied the low-rank covariance estimation for multidimensional functional data. \cite{fan2015functional} considered the functional additive regression method for high-dimensional functional regression. \cite{hu2021dynamic} considered the dimension reduction via dynamic principal subspace for multivariate functional data. \cite{chen2017modelling} studied the functional data analysis for scenarios where the observations are functional at each location. The works on multivariate functional principal component analysis \citep{allen2013multi,happ2018multivariate} also address dimension reduction of multivariate functional data. Among these literature, \cite{happ2018multivariate,wang2020low} are the most relevant to our work. In particular, \cite{wang2020low} focused on i.i.d. $p$-dimensional function objects, and they imposed tensor low-rankness on the corresponding $(2p)$-dimensional covariance functions; \cite{happ2018multivariate} considered the multivariate functional PCA for i.i.d. samples from some multivariate functional population and the data can be formulated as a typical functional tensor stacked by ``sample-feature-function.'' Both papers assumed i.i.d. samples and focused on the statistical inference in the covariance function. Compared to these methods, our model does not assume independence of the signal tensor along any mode. Such flexible modeling is able to encompass a range of problems on heterogeneous or heteroskedastic data. The resulting estimated singular vectors can also be used to perform many important tasks, such as clustering, classification, and regression.
		
	    Our framework is also related to a line of recent papers on high-order tensor factor analysis. For example, \cite{chen2021factor, han2020tensor} considered a factor model for dynamic tensor time series, where they assumed the factor tensors are (weakly) stationary without imposing any deterministic or structured time-varying trends. \cite{chen2020semiparametric} proposed a semi-parametric Tucker decomposition model, where the loadings on one or more modes can be approximated by smooth functions from an H{\"o}lder class. Our framework is in the spirit of CP decomposition, which allows for more general functional classes via RKHS theory. 
		
		\section{Low-rank Functional Tensor Decomposition Model}\label{sec:notation-model}
		
		\subsection{Notation and Preliminaries}\label{sec:notation}
		
		We use the lowercase letters, e.g., $a, b, u, v$, to denote scalars or vectors. For any $a, b \in \bbR$, let $a\wedge b$ and $a\vee b$ be the minimum and maximum of $a$ and $b$, respectively. For a vector $u \in \bbR^n$, $\|u\|_2$ denotes its Euclidean norm. The unit sphere in $\bbR^n$ is denoted as $\bbS^{n-1}$. Matrices are denoted as uppercase letters such as $\bA$ and $\bB$. In addition, we let $\bbO_{p,r}$ be the collection of all $p$-by-$r$ matrices with orthonormal columns: $\bbO_{p,r} := \{\bU \in \bbR^{p\times r}: \bU^\top \bU = \bI\}$, where $\bI$ is the identity matrix. We also denote $\bbO_r := \bbO_{r,r}$ as the set of all $r$-dimensional orthogonal matrices. Let $\lambda_1(\bA) \geq \lambda_2(\bA) \geq \cdots \geq 0$ be the singular values of $\bA$ in descending order and let $\SVD_r(\bA)$ be the matrix comprised of the top $r$ left singular vectors of $\bA$. Let $\|\bA\| = \lambda_1(\bA)$ and $\|\bA\|_\tF = \sqrt{\sum_{i=1}^{p_1}\sum_{j=1}^{p_2}\bA_{ij}^2} = \sqrt{\sum_{i=1}^{p_1\wedge p_2}\lambda_i^2(\bA)}$ be the matrix spectral and Frobenius norms, respectively. For any matrix $\bA=[a_1,\ldots, a_J]\in \mathbb{R}^{I\times J}$ and $\bB\in \mathbb{R}^{K\times L}$, the \emph{Kronecker product} is defined as the $(IK)$-by-$(JL)$ matrix $\bA\otimes\bB = [a_1\otimes \bB \cdots a_J\otimes \bB]$.
		
		Without loss of generality, we assume the function domain of interest is $\cT = [0, 1]$ and denote $\cL^2([0,1])$ as the functional space of all square-integrable functions on $[0,1]$, i.e., 
		$$
		\cL^2\left([0,1]\right) = \left\{f : [0,1]\rightarrow \bbR, \|f\|_{\cL^2}^2 < \infty\right\},\quad \text{where}\quad  \|f\|_{\cL^2}:=\left(\int_0^1 f^2(t)dt\right)^{1/2}.
		$$
		Denote the inner product of any two functions $f,g \in \cL^2([0,1])$ as $\langle f, g \rangle_{\cL^2} = \int_0^1 f(t)g(t)dt$. For a sequence $\{s_k\}_{k=1}^n$ in $[0,1]$, we denote $f_n := (f(s_1),\ldots, f(s_n)) \in \bbR^n$ and $\|f\|_n :=   \sqrt{\sum_{k=1}^n f^2(s_k)/n}$. We use $C, C_0, C_1, \ldots$ and $c, c_0, c_1,\ldots$ to represent generic large and small positive constants, respectively. The actual values of these generic symbols may differ from line to line. We denote $a\lesssim b$ if $a \leq Cb$ for a constant $C>0$ that does not depend on other model parameters; we say $a \asymp b$ if $a\lesssim b$ and $b\lesssim a$ both hold.
		
		Throughout this paper, the tensors are denoted by uppercase calligraphy letters, such as $\bcX,\bcY,\bcZ$. For a tabular tensor $\bcA \in \bbR^{p_1 \times p_2 \times p_3}$, let $\bcA_{ijk}$ be the $(i,j,k)$th entry. For any $u \in \bbR^{p_1}, v \in \bbR^{p_2}, w \in \bbR^{p_3}$, the mode-1 tensor-vector product is defined as:
		$\bcA \times_1 u \in \mathbb R^{p_2 \times p_3}$, $(\bcA \times_1 u)_{jk}:= \sum_{i=1}^{p_1} u_i \bcA_{ijk}.$ The mode-2 and mode-3 tensor-vector products, $\mathcal{A}\times_2 v$ and $\mathcal{A}\times_3 w$ for $v\in \mathbb{R}^{p_2}$ and $w\in\mathbb{R}^{p_3}$, can be defined in the parallel way. The multiplication along different modes is cumulative and commutative. For example, $\bcA \times_1 u \times_2 v \in \bbR^{p_3}$ with $\left(\bcA \times_1 u \times_2 v\right)_k = \sum_{i=1}^{p_1}\sum_{j=1}^{p_2} u_iv_j \bcA_{ijk}$; and $\bcA \times_1 u \times_2 v \times_3 w = \sum_{i=1}^{p_1}\sum_{j=1}^{p_2}\sum_{j=1}^{p_3} u_iv_jw_k \bcA_{ijk} \in \bbR$. We also introduce the matricization operator that transforms tensors to matrices. Particularly, the mode-$1$ matricization of $\bcA \in \bbR^{p_1 \times p_2 \times p_3}$ is defined as $\cM_1 (\bcA) \in \bbR^{p_1 \times (p_2p_3)}$, where $[\cM_1(\bcX)]_{i,j + p_2(k-1)} = \bcA_{ijk}.$ In other words, each row of $\cM_k(\bcA)$ is the vectorization of a mode-$k$ slice. 
		
		Next, we extend the notions of tabular tensors to tensors with a hybrid of tabular and functional modes. We still denote the tensors with hybrid modes by uppercase calligraphy letters. Suppose $\bcA = \{\bcA_{ijt}, i \in [p_1], j \in [p_2], t \in [0,1]\} \in \mathbb R^{p_1 \times p_2 \times [0,1]}$ is an order-3 tensor with two tabular modes and a functional mode. The tensor-vector products on the tabular modes are defined as convention: $\bcA \times_1 u \in \mathbb R^{p_2 \times [0,1]}$, $(\bcA \times_1 u)_{jt}:= \sum_{i=1}^n u_i \bcA_{ijt}.$ Suppose $f \in \cL^2([0,1])$, we define the tensor-functional product as $\bcA \times_3 f \in \mathbb R^{p_1 \times p_2}, (\bcA \times_3 f)_{ij}:= \int_0^1 \bcA_{ijt}f(t)dt.$ The multiple-mode tensor-vector/function multiplication can be similarly defined by combining the operators of single-mode multiplication. For example, $\bcA \times_1 u \times_2 v$ yields a function $\sum_{i=1}^{p_1}\sum_{j=1}^{p_2} u_iv_j \bcA_{ijt}$ and $\bcA \times_1 a \times_2 b \times_3 f$ yields a scalar $\sum_{i=1}^{p_1}\sum_{j=1}^{p_2} \int_0^1 u_iv_j \bcA_{ijt}f(t)dt$.
		
		Next, we provide the preliminaries for the reproducing kernel Hilbert space (RKHS). Consider a Hilbert space $\mathcal H \subset \cL^2([0,1])$ with the associated inner product $\langle \cdot, \cdot \rangle_{\cH}$. We assume there exists a continuous symmetric positive-semidefinite kernel function $\mathbb K:[0,1]\times [0,1] \rightarrow \mathbb R_{+}$ that satisfies the following RKHS conditions: (1) for any $s \in [0,1]$, $\mathbb K(\cdot,s) \in \cH$; (2) for each $g \in \cH$, $g(t) = \langle g, \mathbb K(\cdot, t) \rangle_\cH,~\forall t \in [0,1]$. By Mercer's Theorem~\citep{mercer1909xvi}, there exists an orthonormal basis $\{\phi_k\}_{k=1}^\infty$ of $\cL^2([0,1])$, such that $\mathbb K(\cdot,\cdot)$ admits the following eigendecomposition: $\mathbb K(s,t) = \sum_{k=1}^\infty \mu_k\phi_k(s)\phi_k(t),~~\forall s,t \in [0,1].$
		Here, $\mu_1 \geq \mu_2 \geq \cdots \geq 0$ are the non-negative eigenvalues of $\mathbb K$. We assume $\mathbb K$ has a finite trace norm: $\sum_{k=1}^\infty \mu_k < \infty$.  Then any $f \in \cH$ can be decomposed on $\{\phi_k\}_{k=1}^\infty$ as
		\begin{equation*}
			f(t) = \sum_{k=1}^\infty a_k\phi_k(t), \qquad a_k := \int_{0}^1 f(s)\phi_k(s)ds,  \qquad t \in [0,1].
		\end{equation*}
		The RKHS norm of $f$ can be then represented as $\|f\|_\cH = \sqrt{\langle f,f\rangle_\cH} = \sqrt{\sum_{k=1}^\infty a_k^2/\mu_k}$. Accordingly, for any two functions $f = \sum_{k=1}^\infty a_k \phi_k$ and $g = \sum_{k=1}^\infty b_k \phi_k$ in $\cH$, the inner product with respect to $\cH$ can be calculate as $\langle f, g \rangle_\cH = \sum_{k=1}^\infty a_kb_k/\mu_k$.
		
		Assume the following regularity condition on $\bbK$ throughout the paper. \begin{Assumption}\label{asmp:K-regularity}
			(1) $\bbK$ satisfies $\max\{\mu_1, \sup_{s\in [0,1]}\bbK(s,s)\} \leq 1$;\\ (2) there exists an absolute constant $C_\cH$, which only depends on $\cH$, such that
			\begin{equation}\label{ineq:H-cauchy}
				\|fg\|_\cH \leq C_\cH \|f\|_\cH \|g\|_\cH,\qquad \forall f,g \in \cH.
			\end{equation}
		\end{Assumption}
		Assumption \ref{asmp:K-regularity} (a) can be ensured by rescaling $\bbK$, i.e., multiplying $\bbK$ by some positive constant. This condition also implies
		\begin{equation*}
			\begin{split}
				\|f\|_{\cL^2} \leq & \|f\|_\infty := \sup_{t \in [0,1]}|f(t)| = \sup_{s \in [0,1]}\left|\langle f(\cdot),\bbK(\cdot,s)\rangle_\cH\right| \leq \|f\|_{\cH}\sup_{s \in [0,1]}\|\bbK(\cdot,s)\|_\cH \\
				= & \|f\|_\cH \sup_{s \in [0,1]}\langle \bbK(\cdot, s), \bbK(\cdot,s)\rangle_{\cH}^{1/2}
				= \|f\|_\cH \sup_{s \in [0,1]}\sqrt{\bbK(s,s)} \leq \|f\|_\cH.
			\end{split}
		\end{equation*}
		Assumption \ref{asmp:K-regularity} (b) is introduced for convenience of the later theoretical analysis. This condition appears in the literature on high-dimensional functional linear regression/auto-regression \citep{wang2020functional2,wang2020functional} and also holds for some prevalent functional spaces of interest. For example, \cite{wang2020functional2,wang2020functional} showed that the Sobolev space
		\begin{equation}\label{eq:W^alpha,2}
			W^{\alpha,2} = \left\{f: f^{(r)}, \text{the $r$-th derivative of $f$, is absolutely continuous,  $r=0,\ldots,\alpha$; } f^{(\alpha)} \in \cL^2([0,1])\right\}
		\end{equation}
		satisfies \eqref{ineq:H-cauchy} with some constant $C_\cH$ that only depends on $\alpha$ and showed in particular that $W^{1,2}$ satisfies \eqref{ineq:H-cauchy} with $C_\cH = \sqrt{5}$.
		
		Define the \emph{effective dimension} of $\mathcal H$ as $p_{\mathcal H}:=\sum_{k=1}^\infty \mu_k^2/\mu_1^2.$ Intuitively speaking, $p_\cH$ captures the intrinsic dimension of the Hilbert space $\cH$ in the sense that the Gaussian processes over $\cH$ has the similar incoherence as the Gaussian random vectors in a $p_\cH$-dimensional Euclidean space (see the forthcoming Proposition \ref{prop:sampling-incoherence} for details). As a concrete example, suppose $\mu_k=1$ for $k\leq d$ and $\mu_k=0$ for $k>d$. Then, $p_{\cH}=d$ and $\cH$ is equivalent to a $d$-dimensional Euclidean space. 
		
		\subsection{Functional Tensor Decomposition}\label{sec:model}
		
		Let $\bcY \in \bbR^{p_1 \times p_2 \times [0,1]}$ be the functional tensor of interest. Assume $\bcY$ is approximately CP rank-$r$:
		\begin{equation}\label{eq:functional-CP-model}
			\bcY = \bcX + \bcZ,\qquad \bcX = \sum_{l=1}^r \lambda_l a_l \circ b_l \circ \xi_l \in \mathbb R^{p_1\times p_2\times [0,1]}.
		\end{equation}
		Here, each tuple $(\lambda_l, a_l,b_l,\xi_l)$ corresponds to a {\it singular component} of $\bcX$: $\lambda_l$ is the singular value, $a_l,b_l$ are the singular vectors, and $\xi_l$ is the singular function. We also assume $\|a_l\|_2 = \|b_l\|_2 = \|\xi_l\|_{\cL_2} = 1$ and $\lambda_l>0$ for scaling identifiability. We denote $\lambda_{max} = \max_{l \in [r]} \lambda_l$ and $\lambda_{min} = \min_{l \in [r]}\lambda_l$. The functional tensor $\bcZ \in \bbR^{p_1 \times p_2 \times [0,1]}$ is included to model the unexplained remainder term. We also assume $\xi_l$ satisfies the regularity condition, such that $\|\xi_l\|_\cH \leq C_\xi$ for some given RKHS $\mathcal H$ and some universal constant $C_\xi$. 
		
		Motivated by the previously discussed applications, we assume the functional tensor is measured on a discrete grid of time points $\{s_{k}\}_{k=1}^n \overset{i.i.d.}{\sim} \text{Unif}(0,1)$ and the following tabular tensor $\widetilde\bcY \in \bbR^{p_1 \times p_2 \times n}$ is observed:
		\begin{equation}\label{eq:CP-model-observation}
			\widetilde\bcY_{ijk} := \bcY_{ijs_k} + \varepsilon_{ijk},\quad i\in [p_1], j\in [p_2],k\in [n].
		\end{equation}
		Here, $\varepsilon_{ijk}$'s are independent Gaussian random noises with mean zero and variance $\tau^2$. Our goal is to perform functional tensor singular value decomposition on $\bcY$, that is to estimate $\{(a_l,b_l,\xi_l)\}_{l=1}^r$ based on $\widetilde\bcY$ and $\{s_k\}_{k=1}^n$ from the model \eqref{eq:functional-CP-model} and \eqref{eq:CP-model-observation}. 
		
		\subsection{Model Identifiability}\label{sec:identifiability}

		We first discuss the identifiability conditions for the FTSVD model. We say the parameter tuple $\{\lambda_l, a_l, b_l, \xi_l\}_{l=1}^r$ is {\it identifiable} if and only if for any other parameter tuple $\{\tilde \lambda_l, \tilde a_l, \tilde b_l, \tilde \xi_l\}_{l=1}^r$ satisfying $\sum_{l=1}^r \lambda_l a_l \circ b_l \circ \xi_l = \sum_{l=1}^r \tilde\lambda_l \tilde a_l \circ \tilde b_l \circ \tilde \xi_l$, there exists come permutation $\pi$ on $[r]$ such that
		\begin{equation*}
		\lambda_l = \tilde \lambda_{\pi(l)},\qquad
		    a_l \circ b_l \circ \xi_l = \tilde a_{\pi(l)} \circ \tilde b_{\pi(l)} \circ \tilde c_{\pi(l)}, \quad \forall l \in [r].
		\end{equation*}
		In other words, all rank-1 components can be uniquely determined up to permutation and all factors are identifiable up to sign-flipping and permutation. 
		The following Proposition \ref{prop:identify} shows almost all parameter tuples are identifiable for moderate tensor rank $r$. 
		Proposition \ref{prop:identify} is in parallel with the identifiability condition of tabular tensor CP decomposition~\citep{kruskal1976more}.
		\begin{Proposition}\label{prop:identify}
		    The set of unidentifiable parameter tuples has measure zero with respect to the Lebesgue measure on the parameter space $\Theta = \left\{\{\lambda_l, a_l, b_l, c_l \}_{l=1}^r: \lambda_l > 0, \|a_l\|_2 = \|b_l\|_2 = \|\xi_l\|_{\mathcal{L}_2}, \xi_l \in \cH\right\}$ when either of the following conditions is met:
		    \begin{itemize}
		        \item $\cH$ is a functional space with finite dimension $p_3$, and $2r < p_1+p_2+p_3 - 2$;
		        \item $\cH$ is an infinitely-dimensional space, all the $\xi_l$s are continuous, and $r < p_1 + p_2 - 2$.
		    \end{itemize}
		\end{Proposition}
		
		\section{Methods for Functional Tensor SVD}\label{sec:method}
		
		Next, we discuss the estimation methods for FTSVD. 
		
		\vskip.3cm
		
		\noindent{\bf Power iterations.} As a starting point, we approach the problem from an optimization perspective:
		\begin{equation}\label{eq:ls-est}
			\begin{split}
				\min_{\lambda_l,a_l,b_l,\xi_l}& \quad  \sum_{i,j,k} \left(\widetilde\bcY_{ijk} - \sum_{l=1}^r \lambda_l\cdot (a_l)_i \cdot (b_l)_j \cdot \xi_l(s_k)\right)^2 \\
				\text{subject to } & \quad  \|a_l\|_2 = \|b_l\|_2 = \|\xi_l\|_{\cL^2} = 1, \|\xi_l\|_\cH \leq C_\xi.
			\end{split}
		\end{equation}
		However, the computation of \eqref{eq:ls-est} is highly nontrivial and the exact solution is computationally intractable in general since (a) the problem is highly non-convex or even NP-hard due to the tensor product and the multilinear structure; (b) \eqref{eq:ls-est} is essentially an infinite-dimensional optimization problem due to the functional argument $\xi_l$. Therefore, we introduce the following {\it RKHS-based constrained power iteration method} to overcome these two difficulties. As a popular and powerful method for singular value decomposition, power iteration has been successfully applied to tabular tensor decomposition in the past two decades~\citep{anandkumar2014tensor,anandkumar2014guaranteed}. 
		
		We first discuss the power iteration for the one-component case (i.e., $r=1$), while the multiple-component scenario is postponed later. Given estimates at Step $t$: ($a^{(t)}$, $b^{(t)}$, $\xi^{(t)}$), power iteration updates $a^{(t)}$, $b^{(t)}$, and $\xi^{(t)}$ alternatively. Specifically, recall $\xi_n^{(t)} = (\xi^{(t)}(s_1),\ldots,\xi^{(t)}(s_n))$ is the discretization of the function $\xi^{(t)}$. We update the singular vectors $a^{(t)}, b^{(t)}$ by {\it projection-normalization}:
		\begin{equation}\label{ineq:alg-update-a,b}
			\begin{split}
				\tilde a^{(t+1)} & = \widetilde\bcY \times_2 b^{(t)} \times_3 \xi_n^{(t)},\qquad a^{(t+1)} = \tilde a^{(t+1)} / \|\tilde a^{(t+1)}\|_2, \\
				\tilde b^{(t+1)} & = \widetilde\bcY \times_1 a^{(t)} \times_3 \xi_n^{(t)},\qquad b^{(t+1)} = \tilde b^{(t+1)} / \|\tilde b^{(t+1)}\|_2.
			\end{split}
		\end{equation}
		Since only a discrete subset of observations from the functional tensor are accessible, we consider the following optimization to update the singular function estimator:
		\begin{equation}\label{ineq:alg-update-xi}
			\begin{split}
				& \tilde \xi^{(t+1)} = \argmin_{\|\xi\|_\cH \leq (C_\xi\lambda_{\max})}~ \sum_{i,j,k} \left(\widetilde\bcY_{ijk} - (a^{(t)})_i \cdot (b^{(t)})_j \cdot \xi(s_k)\right)^2, \xi^{(t+1)} = \tilde \xi^{(t+1)} / \|\tilde \xi^{(t+1)}\|_{\cL^2}.
			\end{split}
		\end{equation}
		Note that the first part of \eqref{ineq:alg-update-xi} is essentially a weighted mean functional estimation problem, or a special case of functional linear regression. Since $\tilde \xi^{(t+1)}$ is essentially a minimizer of a regularized empirical risk functional defined over the RKHS $\cH$, it admits a finite-dimensional closed-form solution by the classic Representer Theorem~\citep{kimeldorf1971some} to be discussed in Section \ref{sec:numeric}.
		
		Let $(\hat a, \hat b, \hat \xi)$ be the final estimates after sufficient iterations. The singular value is estimated via $\hat \lambda = \widetilde\bcY \times_1 \hat a \times_2 \hat b \times_3 \hat \xi_n$. The procedure of one-component power iteration is summarized to Algorithm \ref{alg:power-iter}.
		\begin{algorithm}
			\caption{Regularized Power Iteration}
			\label{alg:power-iter}
			\begin{algorithmic}
				\REQUIRE Tensor $\widetilde\bcY \in \bbR^{p_1\times p_2 \times n}$, initialization $(a^{(0)}, b^{(0)})$, iteration number $T$
				\STATE{Calculate $\xi_0$ via:} 
				$$\tilde \xi^{(0)} = \argmin_{\|\xi\|_\cH \leq (C_\xi\lambda_{\max})} \sum_{i,j,k} \left(\widetilde\bcY_{ijk} - (a^{(0)})_i \cdot (b^{(0)})_j \cdot \xi(s_k)\right)^2,\qquad \xi^{(0)} = \tilde \xi^{(0)} / \|\tilde \xi^{(0)}\|_{\cL^2}.$$
				\FORALL{$t = 0,T-1$}
				\STATE{Calculate $ a^{(t+1)}, b^{(t+1)}$ via \eqref{ineq:alg-update-a,b} and calculate $\xi^{(t+1)}$ via \eqref{ineq:alg-update-xi}.}
				\ENDFOR
				\STATE{Calculate $\lambda^{(T)} = \widetilde\bcY \times_1 a^{(T)} \times_2  b^{(T)} \times_3 \xi^{(T)}_n$.}
				\RETURN {$(\hat \lambda, \hat a, \hat b, \hat \xi) = (\hat\lambda, a^{(T)}, b^{(T)}, \xi^{(T)})$}
			\end{algorithmic}
		\end{algorithm}
		
		\vskip.3cm
		
		\noindent{\bf Initialization and Overall Algorithm.} Next, we discuss initialization scheme for $(a^{(0)},b^{(0)})$, which is required for the implementation of Algorithm \ref{alg:power-iter}. If $\bcX$ is a rank-1 tensor, the matricizations of the signal tensor $\bcX$ admit the following rank-1 decompositions: 
		$$\cM_1(\bcX) = \lambda_1 a_1 \left(b_1\otimes (\xi_1)_n\right)^\top,\qquad \cM_2(\bcX) = \lambda_1 b_1 \left((\xi_1)_n\otimes a_1\right)^\top,$$
		i.e., $a_1$ and $b_1$ are the the left singular vectors of $\cM_1(\bcX)$ and $\cM_2(\bcX)$ respectively. Recall ``$\otimes$" is the Kronecker product and $b_1\otimes (\xi_1)_n$ yields a $(p_2n)$-dimensional vector. Given $\widetilde\bcY$ being a noisy substitute of $\bcX$, it is natural to initialize $a_1$ and $b_1$ by the first singular vectors of $\cM_1(\widetilde\bcY)$ and $\cM_2(\widetilde\bcY)$, respectively: 
		\begin{equation}\label{alg:spectral-init}
			a^{(0)} = \SVD_1(\cM_1(\widetilde\bcY)), \quad b^{(0)} = \SVD_1(\cM_2(\widetilde\bcY)).
		\end{equation}
		Our estimation procedure for the rank one scenario is then completed by combining Algorithm \ref{alg:power-iter} with Eq. \eqref{alg:spectral-init}.
		
		When the rank $r>1$, there are multiple singular components to be estimated and the power iteration will no longer be the exact alternating minimization scheme. Nevertheless, when all the singular components $\{(a_l,b_l,\xi_l)\}_{l=1}^r$ satisfy the incoherent condition and $(a^{(0)}, b^{(0)})$ satisfies some initialization conditions, Algorithm \ref{alg:power-iter} still yields estimators with guaranteed local convergence. Therefore, we propose to perform spectral initialization scheme sequentially: we first let $a^{(0)}$ and $b^{(0)}$ be initialized via \eqref{alg:spectral-init} and let $(\hat \lambda_1, \hat a_1, \hat b_1, \hat\xi_1)$ be the first estimated singular component by power iteration (Algorithm \ref{alg:power-iter}). Then we subtract $\hat \lambda_1 \circ \hat a_1 \circ \hat b_1 \circ (\hat\xi_1)_n$ from the original observation $\widetilde\bcY$. Note that this is essentially equivalent to projecting $\widetilde\bcY$ onto the low-dimensional tensor space spanned by $\hat a_1 \circ \hat b_1 \circ (\hat\xi_1)_n$ and replacing the data with the projection residue. We perform this procedure for $r$ times to obtain $r$ estimated singular components. The pseudocode of this procedure is summarized to Algorithm \ref{alg:spectral-r>1}.
		\begin{algorithm}
			\caption{Functional Tensor SVD with Sequential Spectral Initialization}
			\label{alg:spectral-r>1}
			\begin{algorithmic}
				\REQUIRE Tensor $\widetilde\bcY \in \bbR^{p_1\times p_2 \times n}$, rank $r$
				\FOR{$l=1,\ldots,r$}
				\STATE{Calculate $a^{(0)} = \SVD_1\left(\cM_1(\widetilde\bcY)\right)$, $b^{(0)} = \SVD_1\left(\cM_2(\widetilde\bcY)\right).$}
				\STATE{Calculate $(\hat \lambda_l,\hat a_l,\hat b_l,\hat \xi_l)$ by applying Algorithm \ref{alg:power-iter} on $(a^{(0)}, b^{(0)})$.}
				\STATE{Update $\widetilde\bcY = \widetilde\bcY - \hat \lambda_l \hat a_l \circ \hat b_l \circ (\hat{\xi}_l)_n$.}
				\ENDFOR
				\RETURN {$\{(\hat \lambda_l, \hat a_l, \hat b_l, \hat \xi_l)\}_{l=1}^r$}
			\end{algorithmic}
		\end{algorithm}

		\section{Statistical Theory}\label{sec:main-results}
		
		In this section, we present the theoretical results for the proposed estimation algorithms. We particularly aim to study the statistical error bound of $\hat a,\hat b, \hat\xi$. Since the singular vectors and functions are identifiable up to sign flips, we focus on the following \emph{sine values} of the pairs of vectors/functions:
		\begin{equation}\label{eq:def-dist}
			\begin{split}
				\dist(u, v) &= \sqrt{1 - \left(u^\top v/(\|u\|_2\|v\|_2)\right)^2}, \qquad \forall u,v \in \bbR^p;\\
				\dist(f, g) &= \sqrt{1- \left( \int_0^1 f(t) g(t) dt/(\|f\|_{\mathcal{L}^2} \|g\|_{\mathcal{L}^2}) \right)^2}, \qquad \forall f,g \in \cL^2([0,1]).
			\end{split}
		\end{equation}
		We introduce the following quantity associated with the RKHS $\cH$:
		\begin{equation}\label{eq:zeta_n}
			\zeta_n := \inf\left\{\zeta \geq \sqrt{\log n/n}: Q_n(\delta) \leq \zeta \delta + \zeta^2,\quad \forall \delta \in (0,1]\right\},
		\end{equation}
		\begin{equation}\label{eq:Q_n}
			\text{where} \quad  Q_n(\delta) := \left(\sum_{k=1}^\infty \min\{\delta^2, \mu_k\}/n\right)^{1/2},
		\end{equation}
		and $\mu_1\geq \mu_2 \geq \ldots \geq 0$ are the eigenvalues of the reproducing kernel $\mathbb K$ of $\cH$. Essentially, $\zeta_n$ quantifies the information loss from only observing measurements on a discrete grid, rather than the whole function, of $f$ in the RKHS $\mathcal H$. This quantity and its variants are commonly used in the literature on functional data analysis~\citep{koltchinskii2010sparsity,raskutti2012minimax,wang2020functional}. As the grid density $n$ increases, $\{f(s_k)\}_{k=1}^n$ reveals more information on $f$ and the value of $\zeta_n$ decreases. For some specific RKHS $\cH$, the explicit rate of $\zeta_n$ can be obtained. As an example, When $\mathcal{H}$ is the Hilbert-Sobolev space $W^{\alpha,2}([0,1])$ defined in \eqref{eq:W^alpha,2}, we have $\zeta_n \lesssim n^{-\alpha/(2\alpha+1)}$ (see Proposition \ref{prop:zeta-rate}). More technical results on $Q_n(\delta)$ and $\zeta_n$ will be discussed in Appendix \ref{sec:prop-zeta}.
		
		We also introduce the following quantity to measure the scale of the unexplained remainder term $\bcZ$: 
		\begin{equation}\label{eq:def_E}
			\begin{split}
				& \mathcal E := \sup_{s \in [0,1]} \left\|\bcZ_{\cdot\cdot s}\right\|. 
			\end{split}
		\end{equation}
		Here, $\|\cdot\|$ is the largest singular value of the matrix. $\mathcal E$ can be interpreted as an extension of the \emph{tensor spectral norm} from the tabular tensors, which appears as the prominent term in the estimation error bound in various statistical tensor models~\citep{anandkumar2014guaranteed,sun2017provable,han2021optimal}. In particular, the forthcoming Proposition \ref{prop:Gaussian-process-xi} explicitly characterizes $\cE$ for Gaussian processes $\bcZ_{ij\cdot}$ via the dimensions of the tabular modes $p_1, p_2$.
		
		\subsection{Assumptions}\label{sec:assumptions}
		
		In this section, we present the technical assumptions that are used to establish the theoretical guarantees for the proposed algorithm. We first introduce the following incoherence conditions on the true singular components $(a_l,b_l,\xi_l)$ of the signal tensor $\bcX$.
		\begin{Assumption}[Incoherence]\label{asmp:incoherence}
			Consider model \eqref{eq:functional-CP-model}. 
			Assume
			\begin{equation*}
				\mu:=\max\left\{\max_{i \neq j}\left\{|\langle a_i,a_j \rangle|\right\}, ~\max_{i \neq j}\left\{|\langle b_i,b_j \rangle|\right\},~ \max_{i \neq j}\left\{|\langle \xi_i,\xi_j \rangle_{\cL^2}|\right\}\right\} \leq c/(\kappa^2 r^2),
			\end{equation*}
			where $\kappa:= \lambda_{max} / \lambda_{min}$ is the condition number. 
		\end{Assumption}
		Assumption \ref{asmp:incoherence} suggests the singular components of $\bcX$ are pairwise incoherent. Such the condition is widely assumed in the literature on tabular CP low-rank tensor decomposition. By \cite[Lemma 2]{anandkumar2014guaranteed}, one can show that $\max_{i\neq j}\{|\langle a_i, a_j\rangle|\} \asymp 1/\sqrt{p_1}$ and $\max_{i\neq j}\{|\langle b_i, b_j\rangle|\} \asymp 1/\sqrt{p_2}$ with high probability if $\{a_l\}_{l=1}^r$ and $\{b_l\}_{l=1}^r$ are i.i.d. uniformly sampled from $\mathbb S^{p_1-1}$ and $\mathbb S^{p_2-1}$, respectively. We can also establish the incoherence for singular functions under certain sampling distributions descried as follows.
		\begin{Proposition}\label{prop:sampling-incoherence}
			Suppose $r\geq 2$. Let $f_i \sim GP(0, \bbG(\cdot,\cdot)), i=1,\ldots, r$ be i.i.d. mean-zero Gaussian processes with covariance function $\bbG(s,t) = \sum_{k=1}^\infty \mu_k^2 \phi_k(s)\phi_k(t)$. Assume $\log r < c\left(\sum_{k=1}^\infty \mu_k^4/\mu_1^4\right)$ and normalize $\xi_i = f_i / \|f_i\|_{\mathcal{L}_2}$. Recall the effective dimension $p_\cH:=\sum_{k=1}^\infty \mu_k^2/\mu_1^2$. Then, with probability at least $1-Cr^{-8}$, 
			\begin{equation*}
				\begin{split}
					\xi_i \in \cH,\quad 
					|\langle \xi_i, \xi_j \rangle| \leq C\log r/\sqrt{p_{\mathcal H}}, \quad \forall 1\leq i\neq j\leq r. 
				\end{split}
			\end{equation*}
		\end{Proposition}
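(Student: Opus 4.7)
The plan is to reduce every quantity to sums of independent standard Gaussians via the Karhunen--Lo\`eve expansion and then apply scalar concentration inequalities with union bounds. Using the Mercer basis $\{\phi_k\}_{k\ge 1}$ of the kernel $\bbK$, the assumption $\bbG(s,t)=\sum_k\mu_k^2\phi_k(s)\phi_k(t)$ yields $f_i=\sum_k\mu_k Z_{i,k}\phi_k$ with $Z_{i,k}\stackrel{\rm i.i.d.}{\sim}N(0,1)$. In this representation,
$$\|f_i\|_{\cL^2}^2=\sum_{k\ge 1}\mu_k^2 Z_{i,k}^2,\qquad \|f_i\|_\cH^2=\sum_{k\ge 1}\mu_k Z_{i,k}^2,\qquad \langle f_i,f_j\rangle_{\cL^2}=\sum_{k\ge 1}\mu_k^2 Z_{i,k}Z_{j,k}.$$
Since Assumption \ref{asmp:K-regularity} forces $\sum_k\mu_k<\infty$, the random variable $\|f_i\|_\cH^2$ has finite mean and is therefore finite almost surely; this gives $f_i\in\cH$, hence $\xi_i=f_i/\|f_i\|_{\cL^2}\in\cH$, for every $i$.

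The heart of the argument consists of two Laurent--Massart-style estimates. Write $q:=\sum_k\mu_k^4/\mu_1^4$, so $\mathbb E\|f_i\|_{\cL^2}^2=\mu_1^2 p_\cH$ and $\mathrm{Var}(\|f_i\|_{\cL^2}^2)=2\mu_1^4 q$. A standard chi-square tail bound applied to $\sum_k\mu_k^2(Z_{i,k}^2-1)$ gives, for any $t>0$,
$$\bbP\bigl(\bigl|\|f_i\|_{\cL^2}^2-\mu_1^2 p_\cH\bigr|\ge 2\mu_1^2\sqrt{q\,t}+2\mu_1^2 t\bigr)\le 2e^{-t}.$$
Setting $t\asymp \log r$ and union-bounding over $i\in[r]$, the hypothesis $\log r<c\,q$ together with the elementary inequality $q\le p_\cH$ (which uses $\mu_k\le\mu_1$) yields $\sqrt{q\log r}\lesssim p_\cH$ and $\log r\lesssim p_\cH$, so the deviation is dominated by the mean and $\|f_i\|_{\cL^2}^2\ge \mu_1^2 p_\cH/2$ for all $i$ on an event of probability $\ge 1-Cr^{-9}$. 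For the cross term with $i\ne j$, condition on $(Z_{j,k})_k$: then $\langle f_i,f_j\rangle_{\cL^2}$ is centered Gaussian with conditional variance $V_j:=\sum_k\mu_k^4 Z_{j,k}^2$, whose expectation is $\mu_1^4 q$. A second Laurent--Massart bound gives $V_j\le 2\mu_1^4 q$ uniformly in $j$ on a further $(1-Cr^{-9})$-event, and on that event a Gaussian tail bound followed by a union bound over the $\binom{r}{2}$ pairs yields $|\langle f_i,f_j\rangle_{\cL^2}|\lesssim \mu_1^2\sqrt{q\log r}$ for all $i\ne j$, with failure probability $O(r^{-8})$.

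Combining the lower bound on $\|f_i\|_{\cL^2}$ with the upper bound on $|\langle f_i,f_j\rangle_{\cL^2}|$ gives $|\langle\xi_i,\xi_j\rangle|\lesssim \sqrt{q\log r}/p_\cH$. The elementary inequality $q\le p_\cH$ then yields $\sqrt{q\log r}/p_\cH\le \sqrt{\log r/p_\cH}$, and since $r\ge 2$ implies $\sqrt{\log r}\le (1/\sqrt{\log 2})\log r$, the desired bound $|\langle\xi_i,\xi_j\rangle|\le C\log r/\sqrt{p_\cH}$ follows. The main obstacle is calibrating the two concentration steps so that both the sub-Gaussian deviation $\mu_1^2\sqrt{q\log r}$ and the sub-exponential correction $\mu_1^2\log r$ remain small compared to the mean $\mu_1^2 p_\cH$; this is exactly what the hypothesis $\log r<c\sum_k\mu_k^4/\mu_1^4$ is designed to ensure. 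Everything else is bookkeeping: tracking the $r^{-8}$ confidence level through the union bounds and verifying the constants.
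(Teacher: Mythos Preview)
Your proof is correct and follows the same overall skeleton as the paper's: represent each $f_i$ via the Karhunen--Lo\`eve expansion $f_i=\sum_k\mu_k Z_{i,k}\phi_k$, establish $f_i\in\cH$ from $\sum_k\mu_k<\infty$, lower-bound $\|f_i\|_{\cL^2}^2$ by a weighted chi-square tail inequality, upper-bound $|\langle f_i,f_j\rangle_{\cL^2}|$ by a concentration argument, and combine via a union bound. The one substantive difference is in how the cross term is handled: the paper applies the Hanson--Wright inequality directly to the bilinear form $\sum_k\mu_k^2 Z_{i,k}Z_{j,k}$, whereas you condition on $(Z_{j,k})_k$, control the conditional variance $\sum_k\mu_k^4 Z_{j,k}^2$ by a second Laurent--Massart bound, and then apply a plain Gaussian tail. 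Your route is slightly more elementary (no quadratic-form inequality is needed) and in fact delivers the marginally sharper intermediate estimate $|\langle f_i,f_j\rangle_{\cL^2}|\lesssim\mu_1^2\sqrt{q\log r}$ compared to the paper's $\mu_1^2\sqrt{q}\,\log r$; both are then weakened via $q\le p_\cH$ to the stated bound $C\log r/\sqrt{p_\cH}$.
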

		Our next assumption is on the grid density and signal-to-noise ratio (SNR).
		\begin{Assumption}[Grid density and SNR]\label{asmp:sample-size}
			The grid density $n$ and the least singular value $\lambda_{min}$ satisfy
			\begin{equation}\label{ineq:local-SNR-condition}
				\frac{n}{\log^2 n} \geq C\kappa(p_1+p_2), \quad \zeta_n \leq c/\kappa^2 r, \quad \lambda_{min} \geq C\kappa r\left\{\cE + \tau\left(\zeta_n + \sqrt{(p_1+p_2)/n}\right)\right\}
			\end{equation}
			for some sufficiently small constant $c>0$ and large constant $C>0$.
		\end{Assumption}
		Recall that $\zeta_n$ characterizes the difficulty of estimating a function in $\cH$ under discretization and Gaussian observational errors. In most interesting scenarios (which will be discussed later), these quantities have polynomial decay with respect to $n$, and the first two conditions in \eqref{ineq:local-SNR-condition} is mild that allows $p, r$ to grow with $n$. The third condition in \eqref{ineq:local-SNR-condition} is similar to the widely-assumed SNR condition in the literature on tabular tensor decomposition~\citep{anandkumar2014guaranteed,han2021optimal}, where the ratios of noise (e.g., $\cE,\tau$) and the least singular value ($\lambda_{min}$) are required to be sufficiently large. This guarantees that each singular component has strong-enough signal and can be estimated consistently.
		
		\begin{Assumption}[Initialization]\label{asmp:initialization}
			There exists some $l \in [r]$ and constant $c_0>0$ such that the initialization error satisfies $\max\left\{\dist(a^{(0)}, a_l),\dist(b^{(0)}, b_l)\right\} \leq c_0/(\kappa r).$
		\end{Assumption}
		Assumption \ref{asmp:initialization} requires that the initial tabular mode estimations are reasonably close to one of the true singular component. Such the condition is widely used in the theoretical analysis for power iteration~\citep{anandkumar2014guaranteed,cai2019nonconvex}. In particular, when $\cX$ is a well-conditioned tensor with a constant rank (i.e., $r,\kappa = O(1)$), this assumption means the initialization error is only smaller than a constant. Later, we will show the initial estimator yielded by the spectral initialization (Eqn. \eqref{alg:spectral-init}) satisfies this assumption. 
		
		\subsection{Local Convergence}\label{sec:local-convergence}
		
		The following theorem gives an estimation error bound for the proposed RKHS-based constrained power iteration (Algorithm \ref{alg:power-iter}). 
		\begin{Theorem}\label{thm:local-convergence-RKHS}
			Suppose Assumptions \ref{asmp:K-regularity} - \ref{asmp:initialization} hold. Let $a^{(t)}, b^{(t)}, \xi^{(t)}$ be estimated singular vectors and function at step $t$ of Algorithm \ref{alg:power-iter}. Recall $\cE$ is defined in \eqref{eq:def_E}, $\zeta_n$ is defined in \eqref{eq:zeta_n}, $\mu$ is defined in Assumption \eqref{asmp:incoherence}. Then with probability at least $1-2n^{-9}-C\log n \cdot e^{-c(p_1\wedge p_2)}$, we have 
			\begin{equation}\label{ineq:error-bound-thm1}
				\begin{split}
					& \max\{\dist(a^{(t)}, a_l), \dist(b^{(t)}, b_l),\dist(\xi^{(t)}, \xi_l)\}
					\\
					& \qquad\leq  2^{-t} + C\left\{\frac{\cE}{\lambda_{min}} + \frac{\tau\left(\zeta_n + \sqrt{(p_1+p_2)/n}\right)}{\lambda_{min}} + \kappa(\zeta_n + (r-1)\mu)\right\},\quad 	\forall t \geq 0.
				\end{split}
			\end{equation}
			Specifically if $r=1$, the estimation error of $a^{(t)}, b^{(t)}$ enjoys the following better rate: 	\begin{equation*}
				\max\{\dist(a^{(t)}, a_l), \dist(b^{(t)}, b_l)\} \leq C\left\{\frac{\cE }{\lambda_{min}} + \frac{\tau(\zeta_n + \sqrt{(p_1+p_2)/n})}{\lambda_{min}}\right\}, \quad \forall t\geq 1. 
			\end{equation*}
		\end{Theorem}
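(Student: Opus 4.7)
The plan is to establish a one-step contraction inequality
\[
\epsilon^{(t+1)} \;\le\; \tfrac{1}{2}\,\epsilon^{(t)} + \mathrm{err}, \qquad \epsilon^{(t)} := \max\bigl\{\dist(a^{(t)}, a_l),\,\dist(b^{(t)}, b_l),\,\dist(\xi^{(t)}, \xi_l)\bigr\},
\]
where $\mathrm{err}$ collects the four irreducible terms on the right-hand side of \eqref{ineq:error-bound-thm1}. Iterating the inequality, together with Assumption \ref{asmp:initialization} which controls $\epsilon^{(0)}$ up to an absolute constant, then yields \eqref{ineq:error-bound-thm1}. The analysis splits naturally across the two update types: the projection--normalization steps \eqref{ineq:alg-update-a,b} for $a^{(t+1)}, b^{(t+1)}$, and the RKHS-constrained ERM \eqref{ineq:alg-update-xi} for $\xi^{(t+1)}$.

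For the tabular update, I would substitute $\widetilde\bcY = \bcX + \bcZ + \{\varepsilon_{ijk}\}$ into \eqref{ineq:alg-update-a,b} and control $\dist(a^{(t+1)}, a_l)$ via the sine--$\Theta$ bound $\|P_{a_l^\perp}\tilde a^{(t+1)}\|_2/|\langle a_l, \tilde a^{(t+1)}\rangle|$. Expanding the CP signal yields
\[
\bcX \times_2 b^{(t)} \times_3 \xi_n^{(t)} \;=\; \sum_{l'=1}^r \lambda_{l'}\,\langle b_{l'}, b^{(t)}\rangle\,\Bigl[\textstyle\sum_k \xi_{l'}(s_k)\xi^{(t)}(s_k)\Bigr]\, a_{l'}.
\]
For the leading $l'=l$ term, $\langle b_l, b^{(t)}\rangle \ge \sqrt{1-\epsilon^{(t)2}}$ and a concentration bound on $\frac{1}{n}\sum_k \xi_l(s_k)\xi^{(t)}(s_k)$ around $\langle \xi_l,\xi^{(t)}\rangle_{\cL^2}$ (with slack dictated by the RKHS critical radius $\zeta_n$, in the spirit of \cite{raskutti2012minimax,wang2020functional}) supplies a denominator of order $n\lambda_l$. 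For the cross terms $l'\neq l$, decomposing $b^{(t)} = \cos\theta_b\,b_l + \sin\theta_b\,b_l^\perp$ and similarly for $\xi^{(t)}$, I obtain $|\langle b_{l'}, b^{(t)}\rangle| \le \mu + \epsilon^{(t)}$ and $|\tfrac{1}{n}\sum_k \xi_{l'}(s_k)\xi^{(t)}(s_k)| \le \mu + \epsilon^{(t)} + \zeta_n$; summing over $l' \neq l$ and projecting onto $a_l^\perp$ (using that $\|P_{a_l^\perp} a_{l'}\|_2 \le 1$) yields a numerator of order $n\lambda_{\max}(r-1)(\mu + \epsilon^{(t)})(\mu + \epsilon^{(t)} + \zeta_n)$. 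The remainder slice contributes $\lesssim n\cE$ by \eqref{eq:def_E}, and the Gaussian residual $\sum_k \xi^{(t)}(s_k)\,\varepsilon_{\cdot\cdot k}$ has operator norm $\lesssim \tau\sqrt{n(p_1+p_2)}$ by matrix Bernstein. Dividing by the denominator and invoking Assumptions \ref{asmp:incoherence}--\ref{asmp:sample-size} so that $\kappa r(\mu + \epsilon^{(t)} + \zeta_n) \le 1/4$ delivers the contractive bound for $\dist(a^{(t+1)}, a_l)$; the treatment of $b^{(t+1)}$ is symmetric.

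For the functional update, the Representer theorem reduces $\tilde\xi^{(t+1)}$ to a constrained kernel-regression estimator whose effective observations are $\widetilde\bcY_{ijk}$ weighted by the rank-one design $w_{ijk}^{(t)} = (a^{(t)})_i(b^{(t)})_j$. The induced regression signal is approximately $\lambda_l\,\xi_l(s_k)$ plus cross-contamination of order $\lambda_{\max}(r-1)\mu$, while the effective noise has variance $\tau^2/(p_1p_2)$ after averaging across the tabular modes. A localized-Rademacher / critical-radius argument then bounds the $\cL^2$-error of $\tilde\xi^{(t+1)}$ against this effective signal by the sum of $\zeta_n$ (functional estimation error), $\tau\zeta_n/\lambda_l$ (noise propagation), and $(r-1)\mu$ (cross-contamination); normalizing by $\|\tilde\xi^{(t+1)}\|_{\cL^2}$ and applying a function-space sine--$\Theta$ argument produces the contractive bound $\dist(\xi^{(t+1)}, \xi_l) \le \tfrac{1}{2}\epsilon^{(t)} + C(\zeta_n + \tau\zeta_n/\lambda_l + \cE/\lambda_l + (r-1)\mu)$.

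The main technical obstacle is the RKHS regression step, because its design weights and regression signal both depend on $a^{(t)}, b^{(t)}$ and thus inherit the current iteration error; making this rigorous calls for either a two-level peeling/localization argument or a careful perturbation analysis around the oracle regression with the exact inputs $(a_l, b_l)$. Once the three contraction inequalities are in place, a routine induction on $t$ produces the $2^{-t}$ prefactor. For $r=1$, the cross-term sums vanish identically in the $a, b$ updates, so $\|P_{a_l^\perp}\tilde a^{(t+1)}\|_2$ contains only the $\cE$ and Gaussian contributions and carries no $\epsilon^{(t)}$-dependence; after the very first iteration the induction is no longer needed, and the sharper noise-only bound claimed for $t\ge 1$ follows directly.
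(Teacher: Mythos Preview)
Your overall architecture---establishing a one-step contraction $\epsilon^{(t+1)}\le\tfrac12\epsilon^{(t)}+\mathrm{err}$ and iterating---matches the paper's proof exactly, and your decomposition of the tabular update (leading CP term, cross terms of size $(\mu+\epsilon^{(t)})(\mu+\epsilon^{(t)}+\zeta_n)$, remainder $\bcZ$, Gaussian $\varepsilon$) is precisely the content of Lemma~\ref{lm:one-step-constraction-tabular}. The $r=1$ observation at the end is also correct and is handled in the paper in the same way.

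The genuine gap is in how you handle the statistical dependence between the iterates and the noise. You flag this for the RKHS step, but it already bites in the tabular update: $\xi^{(t)}$ is constructed from the data and is \emph{not} independent of $\varepsilon$, so your ``matrix Bernstein'' bound for $\sum_k\xi^{(t)}(s_k)\varepsilon_{\cdot\cdot k}$ is not valid as written. Your proposed fix---perturbing around the oracle $(a_l,b_l,\xi_l)$---does not obviously close the gap, because the perturbations themselves carry the noise. The paper's solution is structurally different: it front-loads all probabilistic work into three \emph{uniform} events $(A_1)$--$(A_3)$ (Eqs.~\eqref{ineq:A-1}--\eqref{ineq:A-3}) that hold simultaneously for every $f\in\cH$ and every $a\in\bbS^{p_1-1},b\in\bbS^{p_2-1}$; once these events are fixed, the contraction analysis is purely deterministic and may be applied to the random iterates $(a^{(t)},b^{(t)},\xi^{(t)})$. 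Event $(A_3)$ in particular is a uniform Gaussian-complexity bound (Lemma~\ref{lm:Gaussian-complexity}) proved by peeling over $\|f\|_{\cL^2}$ together with an $\varepsilon$-net over the two spheres; the sphere covering is exactly what produces the $\sqrt{(p_1+p_2)/n}$ contribution that you omitted from your functional-update bound.

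Two smaller corrections. First, after projecting onto $a^{(t)}\otimes b^{(t)}$ the effective scalar noise $\sum_{i,j}(a^{(t)})_i(b^{(t)})_j\varepsilon_{ijk}$ has variance $\tau^2$, not $\tau^2/(p_1p_2)$, since $\|a^{(t)}\|_2=\|b^{(t)}\|_2=1$; there is no averaging gain. Second, the paper does not analyze the ERM via the Representer theorem and a direct localized-complexity bound; instead it uses the basic inequality $\text{Loss}(\tilde\xi^{(t+1)})\le\text{Loss}(\lambda_1\xi_1)$, expands to $\|\Delta\|_n^2\le 2\langle\Delta,\text{noise}+\text{cross}\rangle_n$ with $\Delta=\tilde\xi^{(t+1)}-\lambda_1\xi_1$, and then invokes the uniform events to pass from $\|\cdot\|_n$ to $\|\cdot\|_{\cL^2}$ and bound each piece. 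This route also yields the side estimate $\|\xi^{(t+1)}\|_\cH\le 2\kappa C_\xi$, which is needed to keep the iterate inside the class where $(A_1)$--$(A_3)$ apply at the next step.
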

		\begin{Remark}[Interpretation of Estimation Error Bound]
			Theorem \ref{thm:local-convergence-RKHS} shows if the number of iterations $T>\Omega(\log(1/\zeta_n))$, the estimation error of $\xi_s$ can be upper bounded by the sum of the following four terms:  $\cE/\lambda_{min}$, $\tau\left(\zeta_n + \sqrt{(p_1+p_2)/n}\right)/\lambda_{min}$, $\kappa\zeta_n$ and $\kappa(r-1)\mu.$ 
			Here, $\cE/\lambda_{min}$ is induced by the unexplained remainder term $\bcZ$, which can be interpreted as the ``spectral-norm-over-least-singular-value'' term that widely appears in the literature on tabular tensor decomposition~\citep{anandkumar2014guaranteed,han2021optimal}. The second term $\tau\left(\zeta_n + \sqrt{(p_1+p_2)/n}\right)/\lambda_{min}$ accounts for the observational noise $\varepsilon_{ijk}$. On the other hand, the classic analysis of tabular tensor decomposition \citep[Theorem 1]{zhang2018tensor}, \citep[Proposition 1]{han2020lloyd} only yields a much weaker rate of the tabular estimation error, $\tau/\lambda_{min}\cdot \left((p_1p_2/n)^{1/4} + \sqrt{(p_1+p_2)/n}\right)$, which can be much larger than the bound in Theorem \ref{thm:local-convergence-RKHS} in the high-dimensional region where $p_1p_2$ is comparable to $n$. This phenomenon also appears in our simulation studies in Section \ref{sec:simulation}. The third term corresponds to the discretization error on estimating the function $f$ from a discrete grid, which is unavoidable even in the noiseless orthogonal-decomposable case (i.e., $\mathcal E = \mathcal E_\cH = \tau = \mu = 0$). The last term $(r-1)\mu$ corresponds to the coherence of the signal tensor $\bcX$: when different singular components are not exactly orthogonal, each of them is not a stationery point even in the noiseless cases. Nevertheless, this term often has negligible rate compared to the other three terms in many scenarios (for example, in the scenario of Proposition \ref{prop:sampling-incoherence}, $\mu \lesssim 1/\sqrt{\min\{p_1,p_2,p_\cH\}}$; see the discussions therein).
			
			It is also noteworthy that when the tensor $\bcX$ is rank-$1$, the estimation error for singular functions does not contain the discretization term $\zeta_n$ while this is not true for $r>1$. The reason is that when there is only one singular component, the ``inaccurate'' estimation $\hat\xi$ will not introduce additional bias from other singular components when projection is performed in power iteration. On the other hand, the estimation error of $\hat\xi$ will be accumulated to the tabular modes when there are more than one singular components in the signal tensor $\bcX$. 
		\end{Remark}
		
		Note that the result of Theorem \ref{thm:local-convergence-RKHS} is conditioning on fixed values of the functional remainder term $\mathcal E$. When $\bcZ$ is random, we can develop finer upper bounds on $\mathcal E$ as well as the estimation error. For example, if $\bcZ$ is a collection of i.i.d. Gaussian processes, we have the following result.
		\begin{Assumption}\label{asmp:Z-Gaussian-H}
			Suppose $\bcZ_{ij\cdot}$ are i.i.d. mean-zero Gaussian processes with almost sure continuous path and
			\begin{equation}\label{eq:bcZ-assumption}
				\bbE \sup_{s} |\bcZ_{ijs}| \leq m,\qquad \sup_{s} {\rm Var}(\bcZ_{ijs}) \leq \sigma^2.
			\end{equation}
		\end{Assumption}
		Assumption \ref{asmp:Z-Gaussian-H} essentially means each functional $\bcZ_{ij\cdot}$ is perturbed within a finite variation/scale that does not grow with tensor dimension $p_1,p_2$. Assumption \ref{asmp:Z-Gaussian-H} also implies the following upper bound on $\mathcal E$.
		\begin{Proposition}\label{prop:Gaussian-process-xi}
			Suppose Assumption \ref{asmp:Z-Gaussian-H} holds. Then with probability at least $1-e^{-c(p_1+p_2)}$,
			\begin{equation*}
				\begin{split}
					\mathcal E \leq  m + \sigma\sqrt{p_1 + p_2}.
				\end{split}
			\end{equation*}
		\end{Proposition}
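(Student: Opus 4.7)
The plan is to express $\mathcal{E}$ as the supremum of a centered Gaussian process on $[0,1]\times\bbS^{p_1-1}\times\bbS^{p_2-1}$, apply the Borell--TIS concentration inequality to each slice in the $(u,v)$-variable, and then union-bound over an $\epsilon$-net of the two spheres. The pivotal observation is that since $\{\bcZ_{ij\cdot}\}_{i,j}$ are i.i.d.\ mean-zero Gaussian processes with some common covariance $K(\cdot,\cdot)$, for every unit pair $(u,v)$ the one-dimensional process
\[
X_{s,u,v} := u^\top\bcZ_{\cdot\cdot s}v = \sum_{i,j}u_iv_j\,\bcZ_{ijs}
\]
has covariance $\bbE[X_{s,u,v}X_{t,u,v}] = \big(\sum_{ij}u_i^2 v_j^2\big)K(s,t)=K(s,t)$, hence $X_{\cdot,u,v}\stackrel{d}{=}\bcZ_{11\cdot}$. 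Consequently $\bbE\sup_s X_{s,u,v}\le \bbE\sup_s|\bcZ_{11s}|\le m$ and $\sup_s\Var(X_{s,u,v})\le\sigma^2$, uniformly in $(u,v)$.

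Given this setup the argument assembles as follows. First, write $\mathcal{E}=\sup_{(s,u,v)}X_{s,u,v}$ via the variational characterization of the spectral norm. Second, the almost-sure path continuity of $\bcZ_{ij\cdot}$ (Assumption \ref{asmp:Z-Gaussian-H}) is inherited by every linear combination $X_{\cdot,u,v}$, so the Borell--TIS inequality gives, for each fixed $(u,v)$,
\[
\bbP\!\Big(\sup_s X_{s,u,v}\ge m+t\Big)\le \exp\!\big(-t^2/(2\sigma^2)\big),\qquad t\ge 0.
\]
Third, introduce $\tfrac14$-nets $\mathcal{N}_1\subset\bbS^{p_1-1}$, $\mathcal{N}_2\subset\bbS^{p_2-1}$ with $|\mathcal{N}_1|\le 9^{p_1}$, $|\mathcal{N}_2|\le 9^{p_2}$; the standard net inequality $\|A\|\le 2\max_{u\in\mathcal{N}_1,v\in\mathcal{N}_2}u^\top Av$ applied slicewise in $s$ (and swapping $\sup_s$ past the finite $\max$) yields $\mathcal{E}\le 2\max_{\mathcal{N}_1\times\mathcal{N}_2}\sup_s X_{s,u,v}$. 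Finally, union-bound over the $\le 9^{p_1+p_2}$ pairs and choose $t=c_1\sigma\sqrt{p_1+p_2}$ with $c_1$ large enough that $9^{p_1+p_2}\exp(-t^2/(2\sigma^2))\le e^{-c(p_1+p_2)}$. This delivers $\mathcal{E}\le 2m+c_1\sigma\sqrt{p_1+p_2}$ on the good event, matching the claim up to absolute constants that are absorbed into the paper's generic-constant convention. (A sharper version with constant $1$ can alternatively be obtained by bounding $\bbE\mathcal{E}\le m+\sigma(\sqrt{p_1}+\sqrt{p_2})$ via Sudakov--Fernique comparison with the process $Y_{s,u,v}=W(s)+\sigma g^\top u+\sigma h^\top v$, where $W\stackrel{d}{=}\bcZ_{11\cdot}$, $g\sim N(0,I_{p_1})$, $h\sim N(0,I_{p_2})$ are independent, and then invoking Borell--TIS on $\mathcal{E}$ itself.)

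The one genuinely nontrivial ingredient is the distributional identity $X_{\cdot,u,v}\stackrel{d}{=}\bcZ_{11\cdot}$: the identity $\sum u_i^2 v_j^2=1$ is exactly what transports the single a priori mean-of-supremum bound $m$ through every element of the exponentially large $(u,v)$-net without picking up any additional entropy factor. Without this cancellation one would acquire an extra $\sqrt{\log|\mathcal{N}|}\asymp\sqrt{p_1+p_2}$ multiplier of $m$, destroying the additive form $m+\sigma\sqrt{p_1+p_2}$. The remaining ingredients---Borell--TIS with continuous paths, the sphere covering, and the union bound---are routine.
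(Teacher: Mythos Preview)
Your proposal is correct and follows essentially the same route as the paper: both identify that $X_{\cdot,u,v}\stackrel{d}{=}\bcZ_{11\cdot}$ for any unit pair $(u,v)$, apply Borell--TIS to each fixed $(u,v)$, take a $\tfrac14$-net on the two spheres, union-bound, and use the standard net-to-spectral-norm inequality to arrive at $\mathcal{E}\le 2m + C\sigma\sqrt{p_1+p_2}$. Your explicit remark that the identity $\sum_{i,j}u_i^2v_j^2=1$ is what prevents an extra entropy factor on $m$ is exactly the crux, and your parenthetical Sudakov--Fernique alternative is a nice addition not present in the paper.
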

		
		To illustrate our theoretical results, we specifically consider two RKHSs of major interests next. First, if the RKHS kernel $\bbK$ has $d$ non-zero eigenvalues, it follows that $\zeta_n \lesssim \sqrt{d/n}$ (Proposition \ref{prop:zeta-rate}) and we can obtain the following estimation error upper bound. 
		\begin{Corollary}[Local Convergence: Finite-dimensional RKHS]\label{coro:finite-eig}
			Suppose the reproducing kernel $\bbK$ of $\cH$ has finitely many, say $d$, non-zero eigenvalues. 	Suppose $\log n \leq d\leq n$, Assumptions \ref{asmp:K-regularity} - \ref{asmp:Z-Gaussian-H} hold. Then, with probability at least $1-Cn^{-9}-C\log n \cdot e^{-c(p_1\wedge p_2)}$, 
			\begin{equation*}
				\begin{split}
					&\max\{\dist(\hat a, a_l), \dist(\hat b, b_l), \dist(\hat \xi, \xi_l)\} \\
					& \qquad \lesssim \sqrt{\frac{d}{n}} + \frac{m + \sigma\sqrt{p_1+p_2}}{\lambda_{min}} + \sqrt{\frac{p_1+p_2+d}{n}}\cdot\frac{\tau}{\lambda_{min}} + (r-1)\mu.
				\end{split}
			\end{equation*}
		\end{Corollary}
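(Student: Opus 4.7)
The plan is to obtain Corollary~\ref{coro:finite-eig} as a direct specialization of Theorem~\ref{thm:local-convergence-RKHS}. Three ingredients are needed: (i) a sharp bound on $\zeta_n$ when $\bbK$ has only $d$ nonzero eigenvalues; (ii) the high-probability bound on $\cE$ from Proposition~\ref{prop:Gaussian-process-xi}; and (iii) a union bound plus some mild algebraic simplification to match the stated form.

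First I would compute $\zeta_n$. Since $\bbK$ has at most $d$ nonzero eigenvalues and Assumption~\ref{asmp:K-regularity} gives $\mu_k \leq \mu_1 \leq 1$, the definition \eqref{eq:Q_n} reduces to
\begin{equation*}
Q_n(\delta) = \left(\sum_{k=1}^{d} \min\{\delta^2,\mu_k\}/n\right)^{1/2} \leq \delta\sqrt{d/n}, \qquad \forall \delta \in (0,1].
\end{equation*}
Thus the inequality $Q_n(\delta) \leq \zeta\delta + \zeta^2$ is satisfied whenever $\zeta \geq \sqrt{d/n}$. Combined with the floor $\zeta \geq \sqrt{\log n/n}$ in \eqref{eq:zeta_n} and the assumption $d \geq \log n$, we get $\zeta_n \lesssim \sqrt{d/n}$; this is exactly the content of the cited Proposition~\ref{prop:zeta-rate}. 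Next, Assumption~\ref{asmp:Z-Gaussian-H} is in force, so Proposition~\ref{prop:Gaussian-process-xi} yields $\cE \leq m + \sigma\sqrt{p_1+p_2}$ on an event $\mathcal{F}$ of probability at least $1 - e^{-c(p_1+p_2)}$.

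Now I would plug these two bounds into \eqref{ineq:error-bound-thm1}. On the intersection of $\mathcal{F}$ with the success event of Theorem~\ref{thm:local-convergence-RKHS}, and after running the iteration long enough that $2^{-t}$ is dominated by the remaining terms (e.g., $t \gtrsim \log n$, which is justified since $(\hat a, \hat b, \hat \xi)$ is the output after $T$ iterations),
\begin{equation*}
\max\{\dist(\hat a, a_l),\dist(\hat b, b_l),\dist(\hat\xi,\xi_l)\}
\lesssim \frac{m+\sigma\sqrt{p_1+p_2}}{\lambda_{\min}}
+ \frac{\tau\bigl(\sqrt{d/n}+\sqrt{(p_1+p_2)/n}\bigr)}{\lambda_{\min}}
+ \sqrt{\tfrac{d}{n}} + (r-1)\mu.
\end{equation*}
Using $\sqrt{d/n}+\sqrt{(p_1+p_2)/n} \leq \sqrt{2}\sqrt{(p_1+p_2+d)/n}$ inside the $\tau$ term produces the form stated in the corollary. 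A union bound over the two events, noting $e^{-c(p_1+p_2)} \leq e^{-c(p_1 \wedge p_2)}$, gives the claimed probability $1 - Cn^{-9} - C\log n \cdot e^{-c(p_1 \wedge p_2)}$.

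There is no genuine obstacle here beyond the finite-$d$ computation of $\zeta_n$, which is elementary once $Q_n$ is written out; the rest is bookkeeping. The only thing worth double-checking is that Assumption~\ref{asmp:sample-size} is consistent with the substituted rate $\zeta_n \lesssim \sqrt{d/n}$—i.e., that $\sqrt{d/n} \leq c/(\kappa^2 r)$ and that $\lambda_{\min}$ dominates $\cE + \tau(\sqrt{d/n}+\sqrt{(p_1+p_2)/n})$ up to constants—but these are implicit in the hypotheses of the corollary since Assumption~\ref{asmp:sample-size} is assumed to hold.
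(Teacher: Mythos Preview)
Your proposal is correct and follows essentially the same approach as the paper: the paper simply notes that $\zeta_n \lesssim \sqrt{d/n}$ by Proposition~\ref{prop:zeta-rate}, invokes Proposition~\ref{prop:Gaussian-process-xi} for the bound on $\cE$, and substitutes both into the conclusion of Theorem~\ref{thm:local-convergence-RKHS}. Your write-up makes the bookkeeping (the $2^{-t}$ term, the algebraic combination inside the $\tau$ factor, and the union bound on the probabilities) more explicit than the paper, which treats the corollary as immediate.
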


		Another interesting example is the Sobolev-Hilbert space $W^{\alpha, 2}$ defined in \eqref{eq:W^alpha,2}. If $\cH = W^{\alpha,2}$ with constant $\alpha > 1/2$, then $\cH$ is an RKHS with eigenvalues $\mu_k \asymp k^{-2\alpha}$ (see, e.g., \cite{micchelli1981design}) and $\zeta_n \lesssim n^{-\alpha/(2\alpha+1)}$ (Proposition \ref{prop:zeta-rate}). The following estimation error bound holds accordingly:
		\begin{Corollary}[Local Convergence: Sobolev Space]\label{coro:sobolev}
			Suppose $\cH = W^{\alpha,2}$ with $\alpha > 1/2$ and Assumptions \ref{asmp:K-regularity} -  \ref{asmp:Z-Gaussian-H} hold. 
			Then, with probability at least $1-Cn^{-9}-C\log n \cdot e^{-c(p_1\wedge p_2)}$, 
			\begin{equation*}
				\begin{split}
					&\max\{\dist(\hat a, a_l), \dist(\hat b, b_l), \dist(\hat \xi, \xi_l)\} \\
					& \qquad \lesssim n^{-\frac{\alpha}{2\alpha+1}} + \frac{m + \sigma\sqrt{p_1+p_2}}{\lambda_{min}} + \left(n^{-\frac{\alpha}{2\alpha+1}} + \sqrt{\frac{p_1+p_2}{n}}\right)\frac{\tau}{\lambda_{min}}  + (r-1)\mu.
				\end{split}
			\end{equation*}
		\end{Corollary}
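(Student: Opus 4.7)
The plan is to derive Corollary \ref{coro:sobolev} as a direct specialization of Theorem \ref{thm:local-convergence-RKHS} to the Sobolev--Hilbert setting, using Proposition \ref{prop:Gaussian-process-xi} to handle the random remainder tensor $\bcZ$ and the stated rate of $\zeta_n$ for $W^{\alpha,2}$ to close the loop. Since all three probabilistic bounds hold on events whose probabilities are of the required form, a union bound gives the overall probability $1-Cn^{-9}-C\log n\cdot e^{-c(p_1\wedge p_2)}$ without any delicate calibration.

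First, I would verify that $\cH = W^{\alpha,2}$ with $\alpha > 1/2$ is indeed an RKHS satisfying Assumption \ref{asmp:K-regularity}: the kernel regularity bound $\max\{\mu_1,\sup_s \bbK(s,s)\}\le 1$ can be arranged by rescaling the kernel by a constant, and the multiplicative inequality \eqref{ineq:H-cauchy} is exactly the property recorded earlier in the paper (citing \cite{wang2020functional2,wang2020functional}) for Sobolev spaces, with a constant $C_\cH$ depending only on $\alpha$. With this in hand, Theorem \ref{thm:local-convergence-RKHS} applies and, once $t$ is taken large enough that $2^{-t}\le n^{-\alpha/(2\alpha+1)}$ (which occurs after $O(\log n)$ iterations, consistent with running Algorithm \ref{alg:power-iter} to its returned estimator $\hat a,\hat b,\hat \xi$), the contractive part of the bound is absorbed into the statistical error and yields
\[
\max\{\dist(\hat a,a_l),\dist(\hat b,b_l),\dist(\hat\xi,\xi_l)\} \lesssim \frac{\cE}{\lambda_{min}} + \frac{\tau(\zeta_n+\sqrt{(p_1+p_2)/n})}{\lambda_{min}} + \zeta_n + (r-1)\mu.
\]

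Second, I would plug in the two specific quantitative estimates. The eigenvalues of the reproducing kernel of $W^{\alpha,2}$ satisfy $\mu_k\asymp k^{-2\alpha}$ (a standard fact from e.g.\ \cite{micchelli1981design}), and a direct evaluation of the defining infimum \eqref{eq:zeta_n} then gives $\zeta_n \lesssim n^{-\alpha/(2\alpha+1)}$; this is precisely the content of the forthcoming Proposition \ref{prop:zeta-rate} cited in the statement. For the remainder tensor $\bcZ$, Assumption \ref{asmp:Z-Gaussian-H} is in force, so Proposition \ref{prop:Gaussian-process-xi} delivers
\[
\cE \le m + \sigma\sqrt{p_1+p_2}
\]
with probability at least $1-e^{-c(p_1+p_2)}$. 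Substituting the $\zeta_n$ rate and the $\cE$ bound into the display above produces the four terms claimed in the corollary.

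Third, I need to confirm that Assumption \ref{asmp:sample-size} is consistent with the invocation. The SNR condition $\lambda_{min}\ge C\kappa r(\cE + \tau(\zeta_n + \sqrt{(p_1+p_2)/n}))$ is already assumed, and with $\zeta_n\lesssim n^{-\alpha/(2\alpha+1)}$ one checks that $\zeta_n\le c/(\kappa^2 r)$ reduces to a mild requirement on $n$ relative to $\kappa,r,\alpha$. The main, and essentially only, nontrivial obstacle in the chain is the $\zeta_n$ estimate for $W^{\alpha,2}$: computing $Q_n(\delta)$ for kernels with eigenvalues $\mu_k\asymp k^{-2\alpha}$ requires splitting the sum in \eqref{eq:Q_n} at the index $k^*\asymp \delta^{-1/\alpha}$, so that $\sum_k\min\{\delta^2,\mu_k\}\asymp \delta^{2-1/\alpha}$, and then solving $Q_n(\delta)\le \zeta\delta+\zeta^2$ for the smallest admissible $\zeta\ge \sqrt{\log n/n}$. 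This calculation is carried out in Proposition \ref{prop:zeta-rate}, on which the proof of the corollary then depends as a black box. The remainder of the argument is simply a union bound over the failure events of Theorem \ref{thm:local-convergence-RKHS} and Proposition \ref{prop:Gaussian-process-xi}, which preserves the stated probability.
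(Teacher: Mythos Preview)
Your proposal is correct and follows exactly the route the paper intends: the corollary is obtained by substituting the Sobolev-space rate $\zeta_n \lesssim n^{-\alpha/(2\alpha+1)}$ from Proposition~\ref{prop:zeta-rate} and the bound $\cE \le m + \sigma\sqrt{p_1+p_2}$ from Proposition~\ref{prop:Gaussian-process-xi} into Theorem~\ref{thm:local-convergence-RKHS}, absorbing the $2^{-t}$ term after $O(\log n)$ iterations, and taking a union bound over the failure events. The paper does not spell out a separate proof for this corollary, so your write-up is in fact more detailed than what appears there.
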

		\begin{Remark}[Consistency Conditions for FTSVD in Sobolev Space and Its Phase Transition]
			When $m \ll \sigma\sqrt{p_1+p_2}$, $\bcX$ is orthogonally decomposable (i.e., $\mu=0$) and let $n\rightarrow \infty$, Corollary \ref{coro:sobolev} implies that $\hat{a}, \hat{b}, \hat{\xi}$ are consistent estimators if
			\begin{equation*}
				\lambda_{min}/\sigma \gg \sqrt{p_1\vee p_2},\qquad \lambda_{min}/\tau \gg n^{-\frac{\alpha}{2\alpha+1}} \vee \sqrt{(p_1 \vee p_2)/n}.
			\end{equation*}
			Here, $\lambda_{min}/\sigma \gg \sqrt{p_1 \vee p_2}$ is required for consistent estimation of the singular vectors $a_l,b_l$~(also see \cite{cai2019nonconvex}); while $\lambda_{min}/\tau \gg n^{-\alpha/(2\alpha+1)} \vee \sqrt{(p_1 \vee p_2)/n}$ is required for consistent estimation/approximation of the singular functions $\xi_l$ and it yields an interesting phase transition: when $p_1 \vee p_2 \ll n^{1/(2\alpha +1)}$, the bound is dominated by the non-parametric rate $n^{-\alpha/(2\alpha+1)}$; when $p_1 \vee p_2 \gg n^{1/(2\alpha +1)}$, the bound is dominated by the parametric rate $\sqrt{(p_1\vee p_2)/n}$. We note that the phase transition between parametric and non-parametric rates commonly appear in the study of high-dimensional functional regression with sparsity, e.g., \cite{wang2020functional}. To the best of our knowledge, we are the first to establish such the phenomenon in the low-rank-based functional data analyses.
		\end{Remark}

		\subsection{Initialization}\label{sec:theory-init}
		
		Next, we provide the theoretical guarantee for the initialization procedure with rank $r=1$ described in \eqref{alg:spectral-init}. The theoretical analysis for initialization of the rank $r>1$ case is technical challenging, which we leave as future research. To simplify the notation, we omit the subscripts and denote $\lambda = \lambda_1 = \lambda_{min}$, $a = a_1$, etc.
		\begin{Theorem}\label{thm:initialization}
			Suppose $r=1$, Assumption \ref{asmp:K-regularity} and \ref{asmp:Z-Gaussian-H} are satisfied and $\zeta_n\leq c_0$. Recall $\sigma$ and $\tau$ are the noise level of $\bcZ$ (defined in \eqref{eq:bcZ-assumption}) and $\varepsilon_{ijk}$ (defined in \eqref{eq:CP-model-observation}), respectively. Suppose $n \geq C_0$ for some sufficiently large constant $C_0$. Then, for any $\delta \in (0,1)$, there exits some constants $C_\delta$ and $C_\delta'$ which depend on $\delta$ such that as long as
			\begin{equation}\label{ineq:init-SNR}
				\begin{split}
					\frac{\lambda}{\sigma} & \geq C_\delta\left(\sqrt{p_1 \vee p_2} \right), \quad 
					\frac{\lambda}{\tau} \geq C_\delta'\left(\sqrt{\frac{p_1 \vee p_2}{n}} + \left(\frac{p_1p_2}{n} \right)^{1/4}\right),
				\end{split}
			\end{equation}
			we have with probability at least $1-n^{-9}-e^{-\sqrt{p_1 \wedge p_2}}$ that
			\begin{equation*}
				\begin{split}
					\max\left\{\dist(a^{(0)}, a), \dist(b^{(0}, b)\right\} \leq \delta.
				\end{split}
			\end{equation*}
			Moreover, if Assumptions \ref{asmp:incoherence}, \ref{asmp:sample-size} and \eqref{ineq:init-SNR} hold, we have the same conclusion as Theorem \ref{thm:local-convergence-RKHS}.
		\end{Theorem}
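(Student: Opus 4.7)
The plan is to analyze $a^{(0)} = \SVD_1(\cM_1(\widetilde\bcY))$ by decomposing the Gram matrix $MM^\top$, where $M := \cM_1(\widetilde\bcY)$, into a rank-one signal, an isotropic ``bulk'' (which leaves top eigenvectors unchanged), and a mean-zero perturbation $R$ controlled by matrix concentration. Davis--Kahan then converts $\|R\|$ into a bound on $\dist(a^{(0)}, a)$, and the argument for $b^{(0)}$ via $\cM_2(\widetilde\bcY)$ is completely symmetric.

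Concretely, write $M = M_\ast + N_1 + N_2$ with $M_\ast = \lambda\, a\, (b \otimes \xi_n)^\top$, $N_1 = \cM_1(\bcZ_n)$, and $N_2 = \cM_1(\bE)$. Since $\|b\|_2 = 1$, $M_\ast M_\ast^\top = \lambda^2 \|\xi_n\|_2^2\, aa^\top$, with $\|\xi_n\|_2^2 \in [n/2,\, 2n]$ whp (standard concentration around $\bbE \|\xi_n\|_2^2 = n$ using $\|\xi\|_\infty \leq C_\xi$). Conditional on $\{s_k\}$, the iid structure of $\bcZ_{ij\cdot}$ and $\bE$ across $(i,j)$ makes $\bbE[N_1 N_1^\top \mid \{s_k\}]$ and $\bbE[N_2 N_2^\top]$ both scalar multiples of $\bI_{p_1}$, and $\bbE[N_1 N_2^\top] = 0$. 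Thus
\begin{equation*}
MM^\top \;=\; \lambda^2 \|\xi_n\|_2^2\, aa^\top \;+\; c(\{s_k\})\, \bI_{p_1} \;+\; R,
\end{equation*}
where $R$ is mean-zero conditional on $\{s_k\}$, the isotropic piece does not affect eigenvectors, and Davis--Kahan yields $\dist(a^{(0)}, a) \lesssim \|R\|/(\lambda^2 n)$.

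The crux is bounding $\|R\|$. For the cross term $M_\ast N_2^\top$, coordinates of $N_2 (b\otimes \xi_n)$ are iid Gaussian with variance $\lesssim \tau^2 n$, so $\|M_\ast N_2^\top\| \lesssim \lambda\tau\sqrt{p_1 n}$. For $M_\ast N_1^\top$, coordinates of $N_1(b\otimes \xi_n)$ have conditional variance $\xi_n^\top \bbG_n \xi_n \lesssim n^2 \sigma^2$, where $(\bbG_n)_{kk'} = \bbG(s_k, s_{k'})$ is the discretized covariance kernel of $\bcZ_{ij\cdot}$ and $\|\bbG\|_{\rm op} \leq \sigma^2$, giving $\|M_\ast N_1^\top\| \lesssim \lambda\sigma n\sqrt{p_1}$. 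Standard Gaussian concentration for a $p_1 \times (p_2 n)$ matrix yields $\|N_2 N_2^\top - \bbE\| \lesssim \tau^2 \sqrt{p_1 p_2 n}$, while matrix Bernstein applied to the rows of $N_1$ (iid Gaussians with covariance $\bI_{p_2} \otimes \bbG_n$) yields $\|N_1 N_1^\top - \bbE\| \lesssim \sigma^2 n\sqrt{p_1 p_2}$; the cross $N_1 N_2^\top$ is handled analogously and is lower order. Assembling these,
\begin{equation*}
\dist(a^{(0)}, a) \;\lesssim\; \frac{\sigma\sqrt{p_1}}{\lambda} + \frac{\tau\sqrt{p_1/n}}{\lambda} + \frac{\sigma^2\sqrt{p_1 p_2}}{\lambda^2} + \frac{\tau^2 \sqrt{p_1 p_2/n}}{\lambda^2},
\end{equation*}
so $\dist(a^{(0)}, a) \leq \delta$ requires $\lambda/\sigma \gtrsim \sqrt{p_1} + (p_1 p_2)^{1/4}$ and $\lambda/\tau \gtrsim \sqrt{p_1/n} + (p_1 p_2/n)^{1/4}$ (with $\delta$-dependent constants). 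The symmetric mode-$2$ analysis for $b^{(0)}$ swaps $p_1 \leftrightarrow p_2$; since $(p_1 p_2)^{1/4} \leq \sqrt{p_1 \vee p_2}$, taking the maximum over the two modes recovers the stated SNR conditions \eqref{ineq:init-SNR}.

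The main obstacle is the Wishart-type concentration for $N_1 N_1^\top$, whose rows are Gaussians with the non-isotropic covariance $\bI_{p_2} \otimes \bbG_n$ whose operator norm itself depends on the random grid $\{s_k\}$. I would handle this by conditioning on $\{s_k\}$, applying Hanson--Wright to each entry $(N_1 N_1^\top - \bbE)_{ii'}$ combined with an $\varepsilon$-net cover of $\bbS^{p_1-1}$ for a uniform spectral bound, and then taking expectation over $\{s_k\}$ where $\|\bbG_n\|_{\rm op}$ concentrates near $n\|\bbG\|_{\rm op} \leq n\sigma^2$ by standard kernel-matrix concentration. Finally, the ``moreover'' assertion is immediate: applying the first conclusion with $\delta = c_0/\kappa$ (valid since $r = 1$) satisfies Assumption~\ref{asmp:initialization}, and Theorem~\ref{thm:local-convergence-RKHS} then delivers the same error bound.
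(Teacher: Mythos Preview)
Your proposal is correct and follows essentially the same approach as the paper: decompose the Gram matrix $\cM_1(\widetilde\bcY)\cM_1(\widetilde\bcY)^\top$ into the rank-one signal, an isotropic conditional mean (which leaves eigenvectors invariant), and a centered perturbation whose spectral norm is controlled by heteroskedastic Wishart-type concentration, then apply Davis--Kahan/Wedin. The paper packages the Wishart step into a cited lemma and uses the deterministic bound $\|\bbG_n\|\leq \tr(\bbG_n)\leq n\sigma^2$ (so no kernel-matrix concentration is needed), but your Hanson--Wright plus $\varepsilon$-net route is equivalent.
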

		
		Theorem \ref{thm:initialization} implies that the initialization condition~(Assumption \ref{asmp:initialization}) for local convergence (Theorem \ref{thm:local-convergence-RKHS}) holds as long as the signal-to-noise ratio (SNR) condition \eqref{ineq:init-SNR} holds.

		\section{Numeric Experiments}\label{sec:numeric}
		
		In this section, we investigate the performance of the proposed Functional Tensor SVD estimators, which we refer as FTSVD for short later, on both synthetic and real datasets. We use the rescaled Bernoulli polynomial as the reproducing kernel throughout the section:
		\begin{equation*}
			\bbK(x,y) = 1 + k_1(x) k_1(y) + k_2(x) k_2(y) - k_4(|x-y|),
		\end{equation*}
		where $k_1(x) = x-.5$, $k_2(x) = (k_1^2(x)-1/12)/2$, and $k_4(x) = (k_1^4(x) - k_1^2(x)/2 + 7/240)/24$ for any $x \in [0,1]$. Note that $\bbK$ is the reproducing kernel for the Hilbert space $W^{2,2}$~\citep[Chapter 2.3.3]{gu2013smoothing}. 
		
		\subsection{Computation of \texorpdfstring{\eqref{ineq:alg-update-xi}}{Lg}}
		We first provide the implementation details for the constraint optimization \eqref{ineq:alg-update-xi} -- a crucial step in updating the singular functions in the proposed algorithm. Denote $k(s) = \left(\bbK(s,s_1),\ldots,\bbK(s,s_n)\right)^\top \in \bbR^{n}$ for any $s \in [0,1]$ and $\bK = \left[k(s_1),\cdots,k(s_n)\right] \in \bbR^{n\times n}$ as the discrete kernel matrix. Then, we can obtain a solution to \eqref{ineq:alg-update-xi} based on the convex duality and Representor Theorem~\citep{kimeldorf1971some}.
		\begin{Proposition}\label{prop:optimization-solution}
		    Let $\tilde y^{(t)} := \cM_3\left(\widetilde \bcY\right)\left(a^{(t)} \otimes b^{(t)}\right) \in \bbR^n$ and let
		    \begin{equation*}
		        C_\lambda = \inf \left\{\mu: \mu \geq 0,~ y^{(t)\top} \left(\bK + \mu \bI\right)^{-1}\bK(\bK + \mu \bI)^{-1} y^{(t)} \leq \left(C_\xi \lambda_{max}\right)^2\right\}.
		    \end{equation*}
		    Then, $\tilde \xi^{(t+1)}(s) = k(s)^\top \beta$ for all $s\in [0,1]$ with $\beta := \left(\bK + C_\lambda \bI\right)^{\dagger} \tilde y^{(t)} \in \bbR^n$.
		\end{Proposition}
		Although $C_\lambda$ does not have a closed form solution, it can be efficiently computed via binary search since the function $h(\mu) := y^{(t)\top} \left(\bK + \mu \bI\right)^{-1}\bK(\bK + \mu \bI)^{-1} y^{(t)}$ is monotonically decreasing. The threshold $(C_\xi\lambda_{max})^2$ corresponds to the constraint value in \eqref{ineq:alg-update-xi}. Recall $C_\xi$ is the upper bound of the RKHS-norm of singular functions and we use a fixed value $C_\xi = 200$ under the Bernoulli polynomial kernel for all the upcoming experiments. $\lambda_{max}$ is the largest singular value in the proposed model that is also unknown. Therefore, we apply High-order Orthogonal Iteration (HOOI) method~\citep{de2000best} to estimate $\lambda_{max}$:
		\begin{equation*}
		    (\hat a, \hat b, \hat c) = \text{HOOI}(\widetilde \bcY), \qquad \hat \lambda_{max} = n^{-1/2} \cdot \widetilde\bcY \times_1 \hat a^\top \times_2 \hat b^\top \times_3 \hat c^\top.
		\end{equation*}
		Here $(\hat a, \hat b, \hat c)$ are the estimates of leading singular vectors of the tabular tensor $\widetilde \bcY$. Note that in addition to projecting $\widetilde\bcY$ onto these three directions, we need to divide it by $n^{1/2}$ as $\lambda_{max}$ is singular value for the functional tensor, not the discretized tabular tensor.

	\subsection{Simulation Studies}\label{sec:simulation}
	We simulate the $p_1\times p_2 \times [0,1]$-dimensional rank-$r$ functional tensor datasets as follows. For each $l \in [r]$, we sample $a_l,b_l$ uniformly from the unit spheres $\mathbb{S}^{p_1-1}, \mathbb{S}^{p_2-1}$ and generate $\xi_l$ from orthonormal basis functions $\{u_i(s)\}_{i=1}^{10} \subset \cL^2([0,1])$. Following \cite{yuan2010reproducing,wang2020functional}, we set $u_1(s) = 1$ and $u_i(s) = \sqrt{2}\cos\left((i-1)\pi s\right)$ for $i=2,\ldots,10$. We generate $x_{li} \sim \text{Unif}[-1/i, 1/i]$ independently, $\xi'_l(\cdot) = \sum_{i=1}^{10} x_{li}u_i(\cdot)$, singular functions $\xi_l = \xi'_l / \|\xi'_l\|_{\cL^2}$, and the signal tensor $\bcX = \sum_{l=1}^r \lambda_l a_l \circ b_l \circ \xi_l \in \bbR^{p_1 \times p_2 \times [0,1]}$, where $\lambda_l := \lambda_{min}\cdot (r-l+1)$ and  $\lambda_{min}>0$ is some pre-specified least singular value. We set the remainder functions $\bcZ_{ij\cdot} = 0$ for the upcoming two experiments and their effects are studied in Appendix \ref{sec:additional_simus}. Finally, we generate a discrete grid $\{s_k\}_{k=1}^n \sim \text{Unif}(0,1)$ and obtain the observation $\widetilde\bcY \in \bbR^{p_1\times p_2 \times n}$ such that $\widetilde\bcY_{ijk} = \bcY_{ijs_k} + \varepsilon_{ijk}$, where $\varepsilon_{ijk}\overset{i.i.d.}{\sim} N(0,\tau^2)$. For each simulation setting, the estimation errors are calculated by \eqref{eq:def-dist} and reported over 100-times repeated experiments. 
	
	\begin{figure}[h!]
		\centering
		\includegraphics[width=6in]{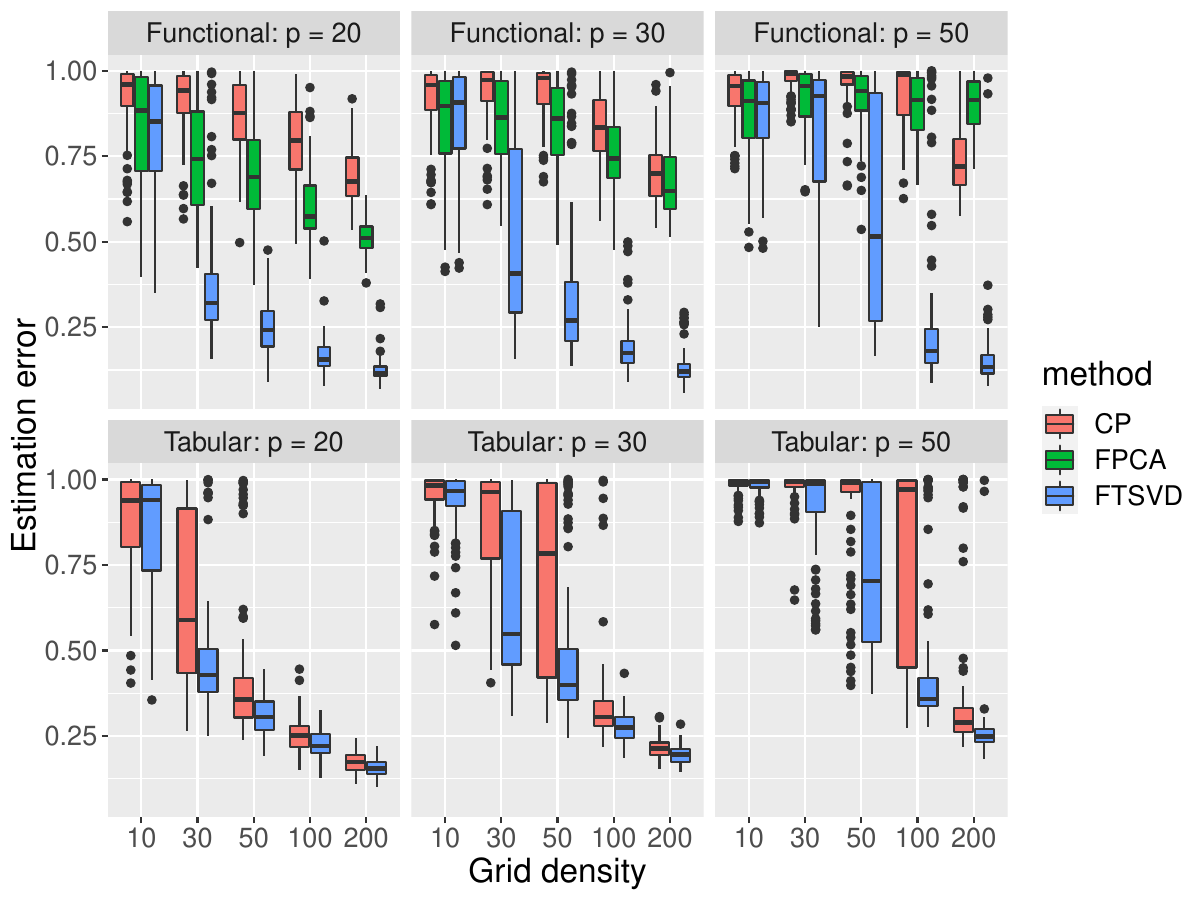}
		\caption{Comparisons among FTSVD, CP and FPCA under different dimension $p=p_1=p_2$ and grid density $n$ for rank-$1$ models. Upper and lower panels plot the estimation errors for singular functions and vectors, respectively. Since FPCA does not yeild a tabular loading estimation, we only report its estimation performance on functional mode.
		}\label{fig:cmp-rank-1}
	\end{figure}
	
	We start by studying the performance of FTSVD on rank-1 models and set $\lambda_{min} = 2$, $\tau = 1$, and $\sigma = 0$. We compare our method with the classic method for rank-1 tabular tensor decomposition (which we refer to as CP later) and the functional prnciple component analysis (FPCA) proposed by \cite{yao2005functional-regression}. Since CP only yields the estimation on the discrete grid for functional mode, we obtain the whole function estimator by interpolation with minimal $\cH$-norm. To apply FPCA, we first unfold the original observations tensor $\widetilde \bcY$ to a matrix $\widetilde \bY = \cM_3(\widetilde \bcY) \in \bbR^{n \times (p_1p_2)}$ and treat each columns of $\widetilde \bcY$ as a functional sample evaluated at an $n$-grid. The comparison between CP and the proposed FTSVD for dimension $p=p_1=p_2 \in \{20,30,50\}$ and various grid density $n$ is presented in Figure~\ref{fig:cmp-rank-1}. We can see the proposed algorithm has more accurate estimations than the classic CP-decomposition and FPCA in all scenarios on both functional and tabular modes. In particular, on the functional mode, our method is significantly better as we fully utilize both the functional smoothness and low-dimensional structures for estimation at each step of power iteration.

	We then explore the high-dimensional settings with imbalanced tabular dimensions ($p_1 \neq p_2$) and/or multiple components ($r>1$). We fix $\lambda_{min} = 8$, $\sigma = \tau = 1$, $n=30$, and consider the following four specific scenarios: \textbf{\rom 1:} $p_1 = 20, p_2 = 500, r = 1$; \textbf{\rom 2:} $p_1 = 20, p_2 = 500, r = 2$; \textbf{\rom 3:} $p_1 = 100, p_2 = 500, r = 1$; \textbf{\rom 4:} $p_1 = 100, p_2 = 100, r = 3$. FPCA and CP, the two baseline methods, are implemented as follows in the cases of $r>1$. For FPCA, we unfold the tensor data to matrix similarly as the previous experiment and apply FPCA to obtain the first $r$ eigenfunctions. For CP, we apply the classic power iteration with random initialization proposed by \cite{anandkumar2014guaranteed}  (the number of initializers is set to 20), and the same interpolation procedure is applied to obtain a functional observation as we did in rank-$1$ situations. The estimation errors of all singular vectors/functions and the corresponding execution time are reported in Table \ref{tab:tab1}. For multiple PCs scenarios, the average estimation errors of all singular vectors/functions (after optimal perturbation) are presented. NAs appear since FPCA does not yield estimates of $a$ or $b$. As we can see, FTSVD achieves better performances on all the four scenarios than CP and FPCA, particularly in singular function estimation. We also observe that our method has significantly better estimation performance for tabular loadings when rank is greater than one. In addition, our method has nearly the same execution time as the classic CP. 
	
	\begin{table}
	\footnotesize
		\centering
		\begin{tabular}{lc|cccc}
			\hline\hline
			Scenarios & Method & $\dist(a, \hat a)$ & $\dist(b, \hat b)$ & $\dist(\xi, \hat \xi)$ & execution time (s) \\ \hline
			\textbf{\rom 1} & FTSVD & 0.113 (0.023) & 0.463 (0.040) & \textbf{0.150} (0.087) & 1.12 \\
			& CP & 0.114 (0.024) & 0.465 (0.040) & 0.516 (0.220) & 1.08 \\
			& FPCA & NA & NA & 0.370 (0.094) & 13.84 \\ \hline
			\textbf{\rom 2} & FTSVD & \textbf{0.112} (0.051) & \textbf{0.376} (0.032)  & \textbf{0.308} (0.110)  & 2.28 \\
			& CP & 0.368 (0.197) & 0.538 (0.125) & 0.694 (0.179) &  37.98 \\
			& FPCA & NA & NA & 0.578 (0.118) & 17.53 \\ \hline
			\textbf{\rom 3} & FTSVD & 0.263 (0.066) & 0.475 (0.062) & \textbf{0.164} (0.091) & 2.61\\
			& CP & 0.285 (0.150) & 0.491 (0.113) & 0.522 (0.238) & 2.35\\
			& FPCA & NA & NA & 0.631 (0.152) & 1.35 \\ \hline
			\textbf{\rom 4} & FTSVD & \textbf{0.172} (0.059) & \textbf{0.173} (0.059) & \textbf{0.379} (0.114) & 2.72\\
			& CP & 0.511 (0.143) & 0.513 (0.140) & 0.735 (0.137) & 13.19\\
			& FPCA & NA & NA & 0.692 (0.076) & 19.77\\ \hline \hline
		\end{tabular}
		\caption{Estimation errors and execution time in high-dimensional settings with imbalanced tabular dimension and multiple ranks. The standard errors are shown in parentheses. }
		\label{tab:tab1}
	\end{table}
		
    We also include additional simulation studies to evaluate the effect of functional perturbation $\bcZ$ (i.e., $\sigma > 0$) and to compare with state-of-the-art methods in multivariate functional data analysis.
    See Appendix \ref{sec:additional_simus} for details.

		\subsection{Real Data Analysis}\label{sec:real-data}
		In this section, we apply the proposed method to the longitudinal microbiome study. An additional example on world-wide crop production analysis is postponed to Appendix~\ref{sec:crop-data} in the supplementary materials.
		
		To investigate the change of fecal microbial composition for new-born infants we apply our methods to the Early Childhood Antibiotics and the Microbiome (ECAM) dataset published by~\cite{bokulich2016antibiotics} (Qiita ID 10249). We consider the 42 infants with multiple fecal microbiome measurements from birth over the first 2 years of life. The infants have fecal microbiome sampled monthly in the first year and bi-monthly in the second year. Among the 42 infants, 24 are vaginally delivered and 18 are Cesarean delivered. A natural question is whether the delivery method affects the composition and development of microbiome in the infants' gut environment.  
		
		We focus on the 50 bacterial genera with non-zero read counts for more than 10\% of all the samples. The data can be organized as an order-3 count tensor $\bar\bcY \in \mathbb N^{42 \times 50 \times 19}$, where the three modes represent different subjects (i.e., infants), bacterial genus and sampling time respectively. To account for the variation in sequencing depth, we transform the count data to the log-composition after $.5$ is added to every count:
		\begin{equation*}
			\widetilde\bcY \in \bbR^{42 \times 50 \times 19},\qquad \widetilde\bcY_{ijk} = \log\left(\frac{\bar\bcY_{ijk}+.5}{\sum_{j'=1}^{50}\left(\bar\bcY_{ij'k}+.5\right)}\right).
		\end{equation*}
		We apply the proposed method with sequential spectral initialization (Algorithm \ref{alg:spectral-r>1}) on the centralized data to estimate the leading singular components for each mode. For the purpose of illustration, we focus on the first three components (i.e., $r=3$). They collectively explain the $31.0\%$ total variations of the data. The singular vectors on the subject mode (i.e., $\{\hat a_s\}_{s=1}^3$) are visualized using bi-plots in Figure \ref{fig:ECAM-subj-PC}, where a well separation of infants with different delivery methods can be seen in the first three components, particularly the second and the third component. We next present the estimated singular functions on the time mode (i.e., $\{\hat\xi_s\}_{s=1}^3$) in the left panel of Figure \ref{fig:ECAM-func-PC}. One can see: 1) the first singular function increases slowly and plateaus after 12-15 months after birth, which matches the recent findings that the gut microbiome of infants is in the developmental phase during 3-15 months of age~\citep{stewart2018temporal}; the second singular function is nearly monotone after the first month and has significant time-variation, which represents a monotone trend of the abundance of certain bacterial; the third singular function increases in the first three months and decreases in the second year, suggesting the difference of bacteria abundance in the first and second year after birth. 
		The two panels on the right of Figure \ref{fig:ECAM-func-PC} demonstrate the mean and error bands of the observed bacteria trajectories for all subjects grouped by delivery method, where the trajectory of each subject is obtained by $\sum_{j=1}^{50} \hat b_{lj}\widetilde\bcY_{ijk}$, i.e., the weighted average of the observed trajectories of all bacteria genera using the feature singular vector of Component 2 or 3 as weights. The trajectories of infants from different delivery methods are well separated in Component 3 and in the latter part of Component 2. Note that all the singular vectors/functions are learned without the information of delivery methods. Our result indicates that the 50 bacteria genera can be reduced to two aggregated bacteria genera using the singular vector on the bacterial mode and still achieve high predictive ability of the delivery methods.

		\begin{figure}[htbp]
			\centering
			\includegraphics[width=.8\textwidth]{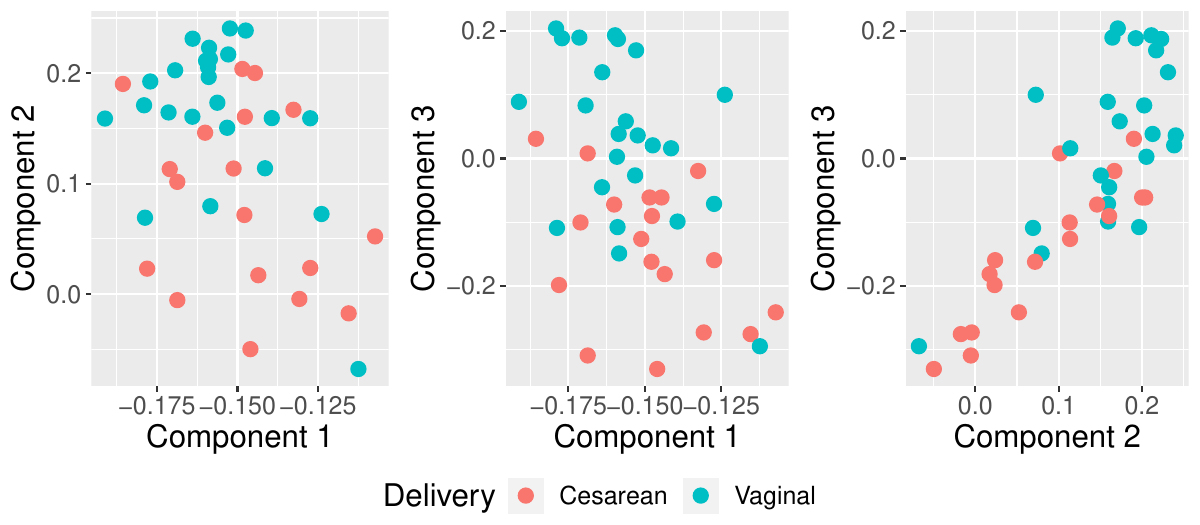}
			\caption{Biplot of three singular vectors on subject mode from ECAM data. Note that each point represents an infant with color indicating the delivery method. }\label{fig:ECAM-subj-PC}
		\end{figure}
		
		\begin{figure}[h!]
			\centering
			\subfigure{\includegraphics[height = 3.5in]{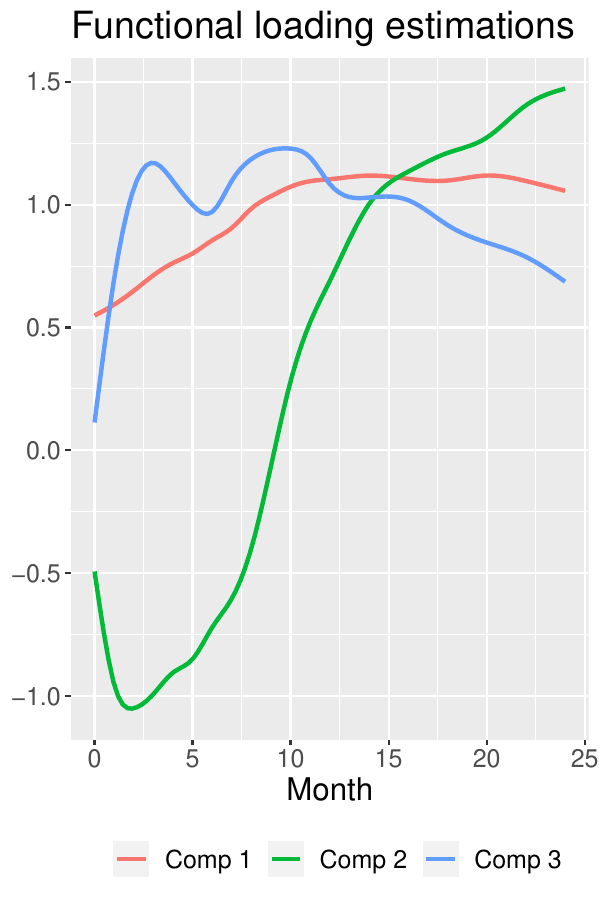}}
			\hskip1cm
			\subfigure{\includegraphics[height = 3.5in]{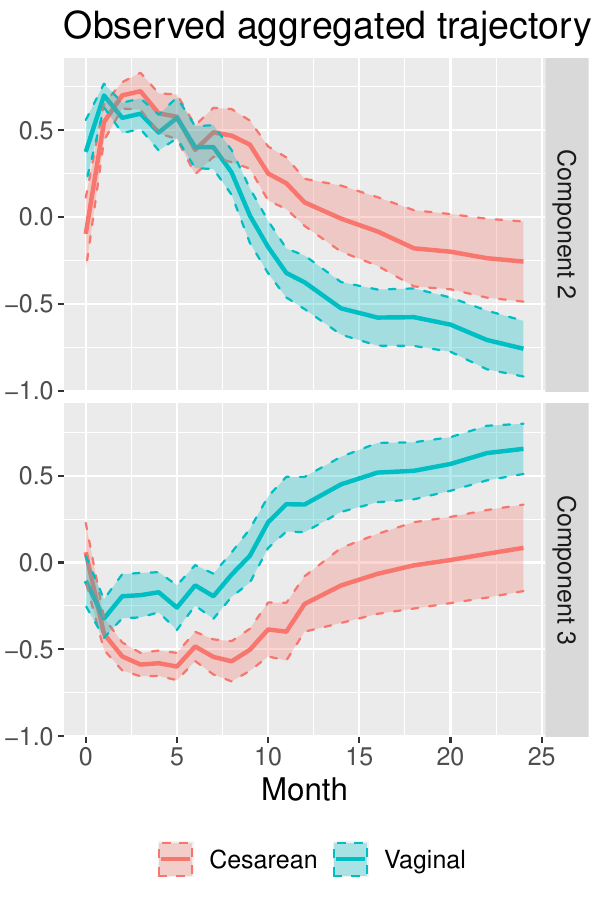}}
			\caption{Three singular functions from ECAM data and the aggregated observed trajectory with respect to the second and third singular vector on bacteria mode. Error bands of trajectories are obtained using mean $\pm 1.64\times$ (standard error of the mean).} \label{fig: phase transition}\label{fig:ECAM-func-PC}
		\end{figure}

		\section{Discussion}\label{sec:discuss}

		Although the presentation of this paper focuses on order-3 tensor with two tabular modes and one functional mode, the algorithms and all the theoretical results can be generalized to arbitrary order-$d$ tensors with $(d-1)$ tabular modes and one functional modes. It is an interesting future direction to further study functional tensor SVD for tensors with multiple functional modes, which may widely appear in spatial-temporal data analysis and imaging processing. In addition, this paper focuses on the functional tensor SVD when all the observations are from a grid of time points. It is interesting to further study the same topic under irregular sampling schemes where the sampling time points $t_{ijk}$ may be different across different units and variables.
		
		This work focuses on unsupervised dimension reduction for high-order functional data. The methods can be extended to the supervised methods that allows for classification or predictions. Specifically, one can first apply tensor SVD on the training high-order functional data, then train a supervised model with the estimated low-dimensional subjects loadings $\hat a$ as the predictors; when the observations/measurements from out-sample subjects are available, one can project the raw high-dimensional functional data on the estimated feature and functional loadings (i.e., $\hat b$ and $\hat{\xi}$) to obtain the corresponding subject scores, then apply the pre-fitted model for prediction. When the label/response is available for each subject, a data-driven cross-validation scheme can be applied to select the hyperparameters $r$ and $C_\xi$.

		It is also worth mentioning that there are several recent results that directly solve the tabular tensor decomposition by (accelerated) gradient method with convergence guarantee~\citep{cai2019nonconvex,han2021optimal,tong2021scaling}. Such methods usually have an advantage over power iteration as they directly aim at the optimization of likelihood, and can successfully remove the effect of incoherence from the final estimation error bound. However, it is unclear that whether and how these gradient-based algorithms and analysis techniques can be applied to the functional setting, which might be another interesting research direction in the future.

	\end{sloppypar}

	\bibliography{reference}

\begin{thebibliography}{}

\bibitem[Adamczak, 2008]{adamczak2008tail}
Adamczak, R. (2008).
\newblock A tail inequality for suprema of unbounded empirical processes with
  applications to markov chains.
\newblock {\em Electronic Journal of Probability}, 13:1000--1034.

\bibitem[Adler and Taylor, 2009]{adler2009random}
Adler, R.~J. and Taylor, J.~E. (2009).
\newblock {\em Random fields and geometry}.
\newblock Springer Science \& Business Media.

\bibitem[Allen, 2013]{allen2013multi}
Allen, G.~I. (2013).
\newblock Multi-way functional principal components analysis.
\newblock In {\em 2013 5th IEEE International Workshop on Computational
  Advances in Multi-Sensor Adaptive Processing (CAMSAP)}, pages 220--223. IEEE.

\bibitem[Anandkumar et~al., 2014a]{anandkumar2014tensor}
Anandkumar, A., Ge, R., Hsu, D., Kakade, S.~M., and Telgarsky, M. (2014a).
\newblock Tensor decompositions for learning latent variable models.
\newblock {\em The Journal of Machine Learning Research}, 15(1):2773--2832.

\bibitem[Anandkumar et~al., 2014b]{anandkumar2014guaranteed}
Anandkumar, A., Ge, R., and Janzamin, M. (2014b).
\newblock Guaranteed non-orthogonal tensor decomposition via alternating
  rank-$1 $ updates.
\newblock {\em arXiv preprint arXiv:1402.5180}.

\bibitem[Bartlett et~al., 2005]{bartlett2005local}
Bartlett, P.~L., Bousquet, O., and Mendelson, S. (2005).
\newblock Local rademacher complexities.
\newblock {\em The Annals of Statistics}, 33(4):1497--1537.

\bibitem[Bartlett and Mendelson, 2002]{bartlett2002rademacher}
Bartlett, P.~L. and Mendelson, S. (2002).
\newblock Rademacher and gaussian complexities: Risk bounds and structural
  results.
\newblock {\em Journal of Machine Learning Research}, 3(Nov):463--482.

\bibitem[Bokulich et~al., 2016]{bokulich2016antibiotics}
Bokulich, N.~A., Chung, J., Battaglia, T., Henderson, N., Jay, M., Li, H.,
  Lieber, A.~D., Wu, F., Perez-Perez, G.~I., and Chen, Y. (2016).
\newblock Antibiotics, birth mode, and diet shape microbiome maturation during
  early life.
\newblock {\em Science translational medicine}, 8(343):343ra82--343ra82.

\bibitem[Cai et~al., 2019]{cai2019nonconvex}
Cai, C., Li, G., Poor, H.~V., and Chen, Y. (2019).
\newblock Nonconvex low-rank symmetric tensor completion from noisy data.
\newblock {\em arXiv preprint arXiv:1911.04436}.

\bibitem[Cai et~al., 2020]{cai2020non-asymptotic}
Cai, T.~T., Han, R., and Zhang, A.~R. (2020).
\newblock On the non-asymptotic concentration of heteroskedastic wishart-type
  matrix.
\newblock {\em arXiv preprint arXiv:2008.12434}.

\bibitem[Cai and Yuan, 2012]{cai2012minimax}
Cai, T.~T. and Yuan, M. (2012).
\newblock Minimax and adaptive prediction for functional linear regression.
\newblock {\em Journal of the American Statistical Association},
  107(499):1201--1216.

\bibitem[Chen et~al., 2020]{chen2020semiparametric}
Chen, E.~Y., Xia, D., Cai, C., and Fan, J. (2020).
\newblock Semiparametric tensor factor analysis by iteratively projected svd.
\newblock {\em arXiv preprint arXiv:2007.02404}.

\bibitem[Chen et~al., 2017]{chen2017modelling}
Chen, K., Delicado~Useros, P.~F., and M{\"u}ller, H.-G. (2017).
\newblock Modelling function-valued stochastic processes, with applications to
  fertility dynamics.
\newblock {\em Journal of the Royal Statistical Society. Series B, Statistical
  Methodology}, 79(1):177--196.

\bibitem[Chen et~al., 2021]{chen2021factor}
Chen, R., Yang, D., and Zhang, C.-H. (2021).
\newblock Factor models for high-dimensional tensor time series.
\newblock {\em Journal of the American Statistical Association}, pages 1--23.

\bibitem[De~Lathauwer et~al., 2000]{de2000best}
De~Lathauwer, L., De~Moor, B., and Vandewalle, J. (2000).
\newblock On the best rank-1 and rank-(r 1, r 2,..., rn) approximation of
  higher-order tensors.
\newblock {\em SIAM journal on Matrix Analysis and Applications},
  21(4):1324--1342.

\bibitem[Fan et~al., 2015]{fan2015functional}
Fan, Y., James, G.~M., and Radchenko, P. (2015).
\newblock Functional additive regression.
\newblock {\em The Annals of Statistics}, 43(5):2296--2325.

\bibitem[Gu, 2013]{gu2013smoothing}
Gu, C. (2013).
\newblock {\em Smoothing spline ANOVA models}, volume 297.
\newblock Springer Science \& Business Media.

\bibitem[Han et~al., 2019]{han2019isotonic}
Han, Q., Wang, T., Chatterjee, S., and Samworth, R.~J. (2019).
\newblock Isotonic regression in general dimensions.
\newblock {\em Annals of Statistics}, 47(5):2440--2471.

\bibitem[Han et~al., 2020a]{han2020lloyd}
Han, R., Luo, Y., Wang, M., and Zhang, A.~R. (2020a).
\newblock Exact clustering in tensor block model: Statistical optimality and
  computational limit.
\newblock {\em arXiv preprint arXiv:2012.09996}.

\bibitem[Han et~al., 2021]{han2021optimal}
Han, R., Willett, R., and Zhang, A.~R. (2021).
\newblock An optimal statistical and computational framework for generalized
  tensor estimation.
\newblock {\em The Annals of Statistics}, to appear.

\bibitem[Han et~al., 2020b]{han2020tensor}
Han, Y., Chen, R., Yang, D., and Zhang, C.-H. (2020b).
\newblock Tensor factor model estimation by iterative projection.
\newblock {\em arXiv preprint arXiv:2006.02611}.

\bibitem[Happ and Greven, 2018]{happ2018multivariate}
Happ, C. and Greven, S. (2018).
\newblock Multivariate functional principal component analysis for data
  observed on different (dimensional) domains.
\newblock {\em Journal of the American Statistical Association},
  113(522):649--659.

\bibitem[Happ-Kurz, 2020]{happ2020object}
Happ-Kurz, C. (2020).
\newblock Object-oriented software for functional data.
\newblock {\em Journal of Statistical Software}, 93(5):1--38.

\bibitem[Hasenstab et~al., 2017]{hasenstab2017multi}
Hasenstab, K., Scheffler, A., Telesca, D., Sugar, C.~A., Jeste, S., DiStefano,
  C., and {\c{S}}ent{\"u}rk, D. (2017).
\newblock A multi-dimensional functional principal components analysis of eeg
  data.
\newblock {\em Biometrics}, 73(3):999--1009.

\bibitem[Hong et~al., 2020]{hong2020generalized}
Hong, D., Kolda, T.~G., and Duersch, J.~A. (2020).
\newblock Generalized canonical polyadic tensor decomposition.
\newblock {\em SIAM Review}, 62(1):133--163.

\bibitem[Hu and Yao, 2021]{hu2021dynamic}
Hu, X. and Yao, F. (2021).
\newblock Dynamic principal subspaces with sparsity in high dimensions.
\newblock {\em arXiv preprint arXiv:2104.03087}.

\bibitem[James et~al., 2000]{james2000principal}
James, G.~M., Hastie, T.~J., and Sugar, C.~A. (2000).
\newblock Principal component models for sparse functional data.
\newblock {\em Biometrika}, 87(3):587--602.

\bibitem[Kimeldorf and Wahba, 1971]{kimeldorf1971some}
Kimeldorf, G. and Wahba, G. (1971).
\newblock Some results on tchebycheffian spline functions.
\newblock {\em Journal of mathematical analysis and applications},
  33(1):82--95.

\bibitem[Koltchinskii and Yuan, 2010]{koltchinskii2010sparsity}
Koltchinskii, V. and Yuan, M. (2010).
\newblock Sparsity in multiple kernel learning.
\newblock {\em The Annals of Statistics}, 38(6):3660--3695.

\bibitem[Kruskal, 1976]{kruskal1976more}
Kruskal, J.~B. (1976).
\newblock More factors than subjects, tests and treatments: an indeterminacy
  theorem for canonical decomposition and individual differences scaling.
\newblock {\em Psychometrika}, 41(3):281--293.

\bibitem[Martino et~al., 2021]{martino2021context}
Martino, C., Shenhav, L., Marotz, C.~A., Armstrong, G., McDonald, D.,
  V{\'a}zquez-Baeza, Y., Morton, J.~T., Jiang, L., Dominguez-Bello, M.~G., and
  Swafford, A.~D. (2021).
\newblock Context-aware dimensionality reduction deconvolutes gut microbial
  community dynamics.
\newblock {\em Nature biotechnology}, 39(2):165--168.

\bibitem[Mendelson, 2002]{mendelson2002geometric}
Mendelson, S. (2002).
\newblock Geometric parameters of kernel machines.
\newblock In {\em International Conference on Computational Learning Theory},
  pages 29--43. Springer.

\bibitem[Mercer, 1909]{mercer1909xvi}
Mercer, J. (1909).
\newblock Xvi. functions of positive and negative type, and their connection
  the theory of integral equations.
\newblock {\em Philosophical transactions of the royal society of London.
  Series A, containing papers of a mathematical or physical character},
  209(441-458):415--446.

\bibitem[Micchelli and Wahba, 1981]{micchelli1981design}
Micchelli, C.~A. and Wahba, G. (1981).
\newblock Design problems for optimal surface interpolation.
\newblock In Ziegler, Z., editor, {\em Approximation Theory and Applications},
  pages 329--347. Academic Press, New York.

\bibitem[Pisier, 1983]{pisier1983some}
Pisier, G. (1983).
\newblock Some applications of the metric entropy condition to harmonic
  analysis.
\newblock In {\em Banach Spaces, Harmonic Analysis, and Probability Theory},
  pages 123--154. Springer.

\bibitem[Ramsay and Silverman, 2006]{ramsay2006functional}
Ramsay, J. and Silverman, B. (2006).
\newblock {\em Functional Data Analysis}.
\newblock Springer Science \& Business Media.

\bibitem[Raskutti et~al., 2012]{raskutti2012minimax}
Raskutti, G., J~Wainwright, M., and Yu, B. (2012).
\newblock Minimax-optimal rates for sparse additive models over kernel classes
  via convex programming.
\newblock {\em Journal of Machine Learning Research}, 13(2).

\bibitem[Rice and Silverman, 1991]{rice1991estimating}
Rice, J.~A. and Silverman, B.~W. (1991).
\newblock Estimating the mean and covariance structure nonparametrically when
  the data are curves.
\newblock {\em Journal of the Royal Statistical Society: Series B
  (Methodological)}, 53(1):233--243.

\bibitem[Rudelson and Vershynin, 2013]{rudelson2013hanson}
Rudelson, M. and Vershynin, R. (2013).
\newblock Hanson-wright inequality and sub-gaussian concentration.
\newblock {\em Electronic Communications in Probability}, 18:1--9.

\bibitem[Stewart et~al., 2018]{stewart2018temporal}
Stewart, C.~J., Ajami, N.~J., O’Brien, J.~L., Hutchinson, D.~S., Smith,
  D.~P., Wong, M.~C., Ross, M.~C., Lloyd, R.~E., Doddapaneni, H., Metcalf,
  G.~A., et~al. (2018).
\newblock Temporal development of the gut microbiome in early childhood from
  the teddy study.
\newblock {\em Nature}, 562(7728):583--588.

\bibitem[Sun and Li, 2019]{sun2019dynamic}
Sun, W.~W. and Li, L. (2019).
\newblock Dynamic tensor clustering.
\newblock {\em Journal of the American Statistical Association},
  114(528):1894--1907.

\bibitem[Sun et~al., 2017]{sun2017provable}
Sun, W.~W., Lu, J., Liu, H., and Cheng, G. (2017).
\newblock Provable sparse tensor decomposition.
\newblock {\em Journal of the Royal Statistical Society: Series B (Statistical
  Methodology)}, 79(3):899--916.

\bibitem[Tong et~al., 2021]{tong2021scaling}
Tong, T., Ma, C., Prater-Bennette, A., Tripp, E., and Chi, Y. (2021).
\newblock Scaling and scalability: Provable nonconvex low-rank tensor
  estimation from incomplete measurements.
\newblock {\em arXiv preprint arXiv:2104.14526}.

\bibitem[Vershynin, 2018]{vershynin2018high}
Vershynin, R. (2018).
\newblock {\em High-dimensional probability: An introduction with applications
  in data science}, volume~47.
\newblock Cambridge university press.

\bibitem[Wang et~al., 2020a]{wang2020functional2}
Wang, D., Zhao, Z., Willett, R., and Yau, C.~Y. (2020a).
\newblock Functional autoregressive processes in reproducing kernel hilbert
  spaces.
\newblock {\em arXiv preprint arXiv:2011.13993}.

\bibitem[Wang et~al., 2020b]{wang2020functional}
Wang, D., Zhao, Z., Yu, Y., and Willett, R. (2020b).
\newblock Functional linear regression with mixed predictors.
\newblock {\em arXiv preprint arXiv:2012.00460}.

\bibitem[Wang et~al., 2020c]{wang2020low}
Wang, J., Wong, R.~K., and Zhang, X. (2020c).
\newblock Low-rank covariance function estimation for multidimensional
  functional data.
\newblock {\em Journal of the American Statistical Association}, pages 1--14.

\bibitem[Wang et~al., 2016]{wang2016functional}
Wang, J.-L., Chiou, J.-M., and M{\"u}ller, H.-G. (2016).
\newblock Functional data analysis.
\newblock {\em Annual Review of Statistics and Its Application}, 3:257--295.

\bibitem[Wedin, 1972]{wedin1972perturbation}
Wedin, P.-{\AA}. (1972).
\newblock Perturbation bounds in connection with singular value decomposition.
\newblock {\em BIT Numerical Mathematics}, 12(1):99--111.

\bibitem[Yao et~al., 2005]{yao2005functional-regression}
Yao, F., M{\"u}ller, H.-G., and Wang, J.-L. (2005).
\newblock Functional linear regression analysis for longitudinal data.
\newblock {\em The Annals of Statistics}, pages 2873--2903.

\bibitem[Yuan and Cai, 2010]{yuan2010reproducing}
Yuan, M. and Cai, T.~T. (2010).
\newblock A reproducing kernel hilbert space approach to functional linear
  regression.
\newblock {\em The Annals of Statistics}, 38(6):3412--3444.

\bibitem[Zhang and Han, 2019]{zhang2019optimal-statsvd}
Zhang, A. and Han, R. (2019).
\newblock Optimal sparse singular value decomposition for high-dimensional
  high-order data.
\newblock {\em Journal of the American Statistical Association}, pages
  1708--1725.

\bibitem[Zhang and Xia, 2018]{zhang2018tensor}
Zhang, A. and Xia, D. (2018).
\newblock Tensor {SVD}: Statistical and computational limits.
\newblock {\em IEEE Transactions on Information Theory}, 64(11):7311--7338.

\bibitem[Zhang and Zhou, 2020]{zhang2018non}
Zhang, A.~R. and Zhou, Y. (2020).
\newblock On the non-asymptotic and sharp lower tail bounds of random
  variables.
\newblock {\em Stat}, 9(1):e314.

\bibitem[Zhang et~al., 2020]{zhang2020denoising}
Zhang, C., Han, R., Zhang, A.~R., and Voyles, P.~M. (2020).
\newblock Denoising atomic resolution 4d scanning transmission electron
  microscopy data with tensor singular value decomposition.
\newblock {\em Ultramicroscopy}, 219:113123.

\bibitem[Zhang et~al., 2019]{zhang2019tensor}
Zhang, Z., Allen, G.~I., Zhu, H., and Dunson, D. (2019).
\newblock Tensor network factorizations: Relationships between brain structural
  connectomes and traits.
\newblock {\em Neuroimage}, 197:330--343.

\end{thebibliography}
	\bibliographystyle{apalike}

	\appendix
	\newpage
	\setcounter{page}{1}

	\begin{center}
		{\bf\LARGE Supplement to ``Guaranteed Functional } 
		
		\medskip
		
		{\bf\LARGE Tensor Singular Value Decomposition"}
	\end{center}
	\smallskip
	\begin{center}
		Rungang Han, ~ Pixu Shi, ~ and ~ Anru R. Zhang\\
		~
	\end{center}
	
	

	\section{Additional Numeric Results}\label{sec:additional_nums}

	\subsection{Additional Simulations}\label{sec:additional_simus}
	We collect additional simulation experiments in this section. We first compare the performance of FTSVD with the existing FDA algorithms when the functional remainder term $\bcZ$ exists. To this end, we first generate the signal tensor $\bcX$ similarly as we did in Section \ref{sec:simulation} and we then generate a random perturbation tensor $\bcZ$ such that each $\bcZ_{ij\cdot}$ are drawn independently in the same way as $\xi_l$. We calculate $\bcY = \bcX + \sigma \cdot \bcZ$. Then, $\sigma$ essentially controls the amplitude of remainder functions. Here we only compare the estimation accuracy for the singular function as the FDA-based method does not directly provide the tabular loading estimates. We take $\tau = 0$ this time for simplicity, $p=20$ and $50$, and vary the amplitude of the functional remainder $\sigma$. The grid density is fixed to be $n=50$ and the least singular value $\lambda_{min} = 2$. The result is presented in Figure \ref{fig:cmp-rank-2}. We note that FPCA has uniformly higher estimation errors than the other two tensor-based methods, as FPCA is designed for i.i.d. samples but may not be suitable for the heterogeneous data considered in this paper. On the other hand, one can see that CP and RKHS has almost the same performance. This is because in this particular setting, the statistical rate of FTSVD is dominated by $\cE/\lambda_{min}$ as suggested by Theorem \ref{thm:local-convergence-RKHS}, which is nearly the same as the one for CP~\citep{anandkumar2014guaranteed}. However, it should be noted that in the general settings where both functional remainder term and observational noises exist, CP may not be as accurate as our method according to the results of the first simulation setting in Section \ref{sec:simulation}.
	\begin{figure}[htbp]
		\centering
		\includegraphics[width=1\textwidth]{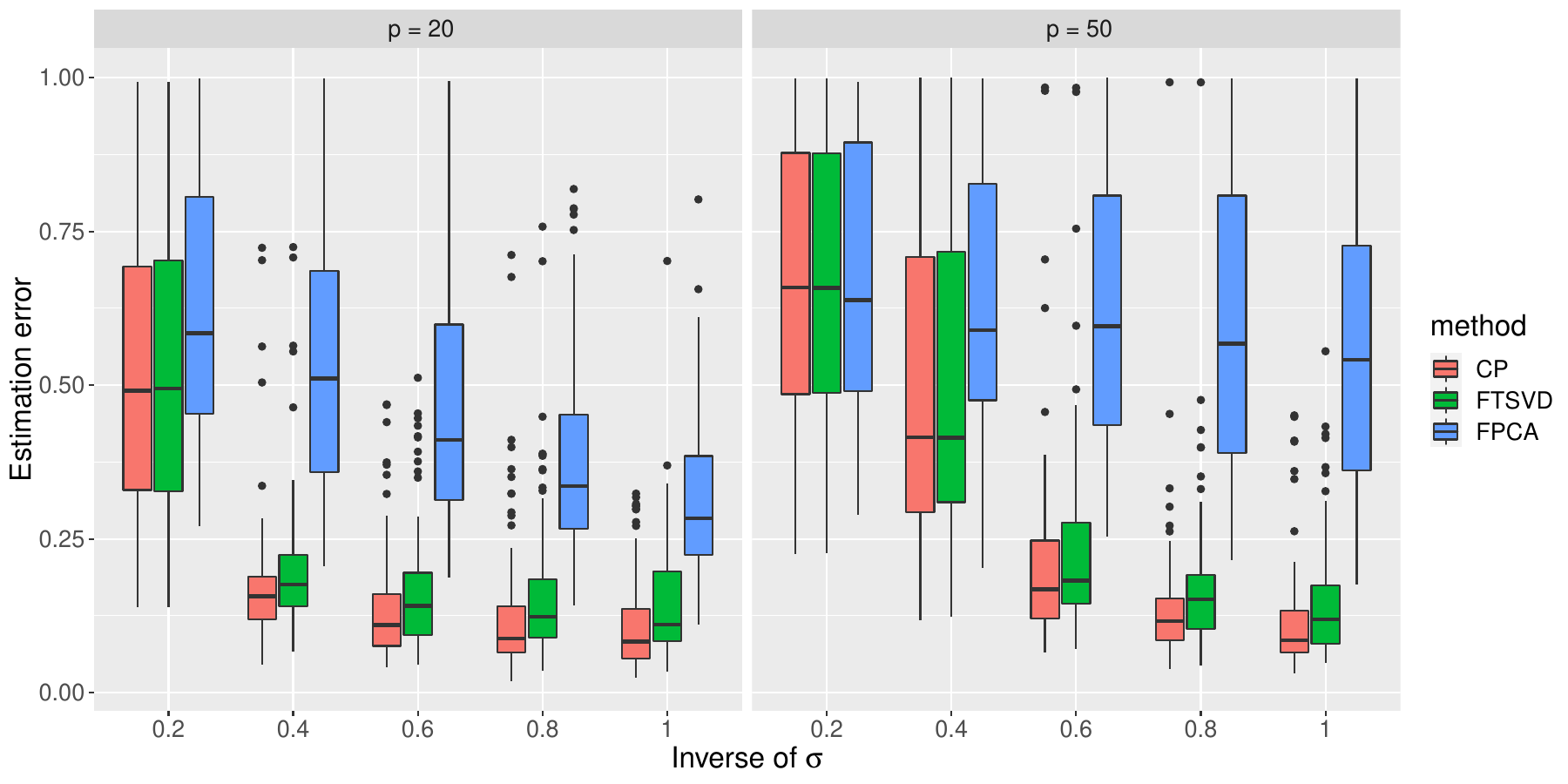}
		\caption{Estimation error of singular functions under different dimension $p$ and amplitude of remainder $\sigma$ for rank-$1$ models.}\label{fig:cmp-rank-2}
	\end{figure}

	We also compare the proposed FTSVD method with multivariate functional principle component (MFPCA) \citep{happ2018multivariate}, a classic method for multivariate functional data analysis. Different from FTSVD, MFPCA yields different functional principle components for different features. To make a direct and fair comparison of FTSVD and MFPCA, we focus on the scenario when there is one singular/principle component. Specifically, we apply MFPCA\footnote{See \cite{happ2020object} for discussions of the package.} on the simulated data to estimate the functional principal components $\{\hat\psi_1(t),\ldots,\hat\psi_{p_2}(t)\}$, where $p_2$ is the number of features. Then, we apply the univariate FPCA on the estimated functional principal components to obtain one final functional estimate $\hat \xi(t)$ and compare it with the output of FTSVD. We set $\lambda = 2$, $n = 30$, and $p_1 = 100$ (note $p_1$ here is the number of subjects), we then choose different numbers of functional features $p_2 \in \{2,5,20,50\}$ with varying noise level $\tau$. The estimation errors are reported in Table \ref{tab:MFPCA}. Table \ref{tab:MFPCA} shows FTSVD is more robust in high-dimensional (i.e., large $p_2$) and noisy (i.e., large $\tau$) regimes and outperforms MFPCA in all scenarios. 
	
	\begin{table}
	\footnotesize
		\centering
		\begin{tabular}{c|ccccccc}
			\hline\hline
			$p_2$ &  & $\tau = 0$ & $\tau = 0.05$ & $\tau = 0.1$ & $\tau = 0.2$ & $\tau = 0.5$ & $\tau = 1$ \\ \hline
			2 & FTSVD & 0.050 & 0.079 & 0.090 & 0.111 & 0.214 & 0.405 \\
			& MFPCA & 0.182 & 0.211 & 0.210 & 0.196 & 0.279 & 0.425 \\ \hline
			5 & FTSVD & 0.042 & 0.074  & 0.087  & 0.121 & 0.214 & 0.440 \\
			& MFPCA & 0.175 & 0.200 & 0.202 &  0.212 (1) & 0.305 & 0.523 \\\hline
			20 & FTSVD & 0.044 & 0.073 & 0.091 & 0.122 & 0.233 & 0.652 \\
			& MFPCA & 0.190 & 0.202  & 0.199 (1) & 0.226 (1) & 0.358  & 0.731 (4)\\ \hline
			50 & FTSVD & 0.056 & 0.074 & 0.089 & 0.131 & 0.245 & 0.845 \\
			& MFPCA & 0.220 & 0.221 (1) & 0.230 (1) & 0.266 (1) & 0.474 (1) & 0.863 (7)\\ \hline \hline
		\end{tabular}
		\caption{Averaged estimation error of FTSVD and MFPCA over 100 experiments. Numbers of failure times of executions (incurred by ill-conditioned matrix calculations) are shown in parentheses.}
		\label{tab:MFPCA}
	\end{table}

	\subsection{Crop Production Data Analysis}\label{sec:crop-data}
    We use the world-wide crops production data (available on \url{http://www.fao.org/faostat/en/#data/QC}) as the second illustration. The dataset contains the annual production of 173 products between 1961 and 2019 from different countries and areas in the world. For simplicity, we only consider 19 continent-based areas including Eastern Asian, Northern America, West Europe, Central Africa, etc. We also select the most widely-planted 26 crops for our analysis. Therefore, the data we consider can be organized as a $19$-by-$26$-by-$59$ tensor $\widetilde\bcY$, with $\widetilde\bcY_{ijk}$ representing the production of the $j$th crop item on the $i$th area in the $k$th year. 
	
	We apply the proposed algorithm with rank $r=2$ to $\widetilde\bcY$ after centralization and it explains $78.7\%$ variations of the data. The estimated singular vectors and functions for each component are presented in Figure \ref{fig:crops}. The first component captures the overall production of areas and items through the years.  The magnitudes of the first singular vectors on area and item modes (i.e., x-coordinate of the bi-plot) coincide with the overall production of the areas and items, respectively, with Northern America and Eastern Asia being the top two crop production areas, and maze, rice paddy, wheat being the three major food crops. The increasing first singular function on time mode coupled with positive singular vectors on area and item modes reflects the increasing trend in overall production for most areas and crops. The second component characterizes the variation in production across different area-item pairs. Such variation is quantified by the second singular function on time mode coupled with the second item singular vectors on item and area modes (i.e., y-coordinate of the bi-plots). For example, Northern America has a large negative value in Component 2, with the same sign as maize but different from rice paddy. This implies maze takes a larger share in Northern America compared to its share in other areas, while rice paddy takes a smaller share in Northern America. An opposite conclusion can also be made on the three Asian areas, whose large positive values in the second component have the same sign as rice paddy, indicating the larger share of rice paddy in Asia compared to other areas. The small difference between the two singular functions on time mode indicate that the variation across area-item pairs are constant across time.

	\begin{figure}[htbp]
		\centering
		\includegraphics[width=.8\textwidth]{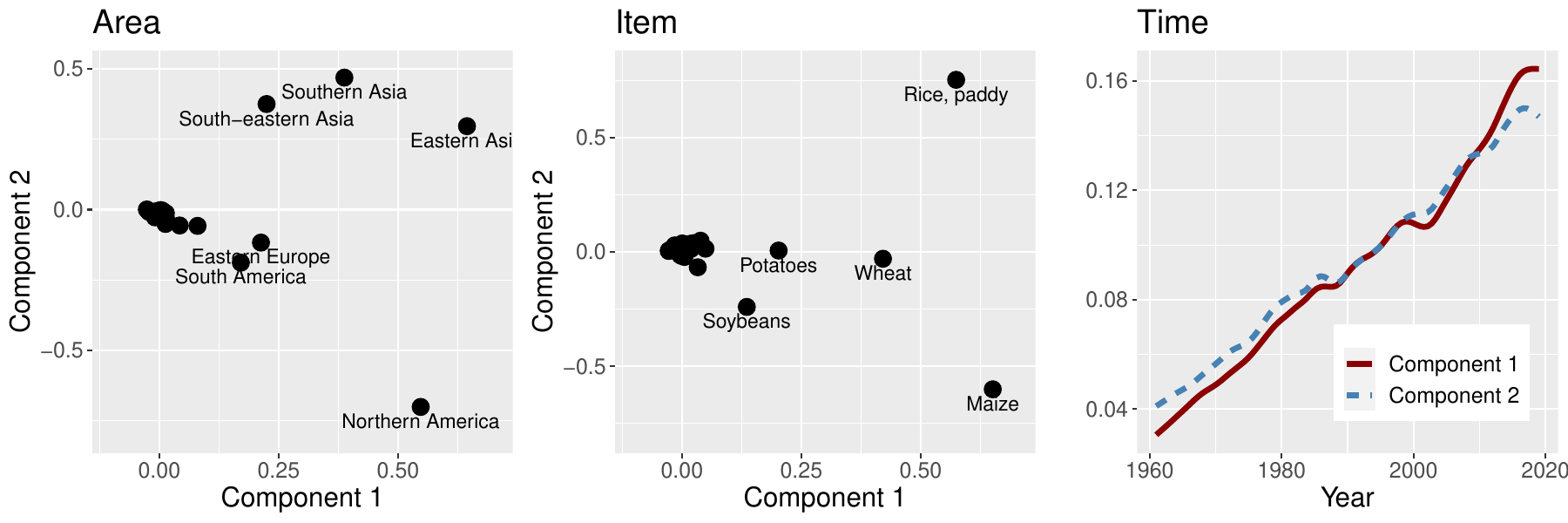}
		\caption{Estimated singular vectors and functions from crop production data. The left and central panels are biplots for the two estimated singular vectors on area and item modes respectively. Points with significant scorings are labeled. The right panel plots the estimated singular function on the time mode.}\label{fig:crops}
	\end{figure}

	\section{Proofs}\label{sec:proofs}
	
	\subsection{Proof of Proposition \ref{prop:identify}}
	We utilize the Indeterminacy Theorem developed by \cite{kruskal1976more} for finite-dimensional matrix and prove the two scenarios separately. We first introduce the matrix representation $A = [a_1,\ldots, a_r] \in \bbR^{p_1 \times r}$ and $B = [b_1,\ldots, b_r] \in \bbR^{p_2 \times r}$. Note that except on a set of measure zero, $A$ and $B$ are full rank. We also denote $\widetilde A = [\tilde a_1,\ldots, \tilde a_r] \in \bbR^{p_1 \times r}$, $\widetilde B = [\tilde b_1,\ldots, \tilde b_r] \in \bbR^{p_2 \times r}$ and denote $\widetilde \bcX = \sum_{l=1}^r \tilde \lambda_l \tilde a_l \circ \tilde b_l \circ \tilde c_l$.
	\begin{itemize}
	    \item $\cH$ is a finite $p_3$-dimensional functional space. We assume the functional basis for $\cH$ are $\{\phi_1,\ldots, \phi_{p_3}\}$. Let $C, \widetilde C \in \bbR^{p_3 \times r}$ such that
	    \begin{equation*}
	        \begin{split}
	            \xi_k &= \sum_{l=1}^{p_3} C_{lk}\phi_{l}, \qquad \tilde{\xi}_k = \sum_{l=1}^{p_3} \widetilde C_{lk}\phi_{l}.
	        \end{split}
	    \end{equation*}
	    Again, $C$ is of full rank except on a set of measure zero. We introduce the tabular tensors $\cX, \widetilde\cX \in \bbR^{p_1\times p_2 \times p_3}$ such that
	    \begin{equation*}
	        \cX_{ijl} = \sum_{k=1}^r \lambda_r (a_k)_i(b_k)_j C_{lk},\qquad \widetilde\cX_{ijl} = \sum_{k=1}^r \lambda_r  (a_k)_i(b_k)_j \widetilde C_{lk}
	    \end{equation*}
	    Then each function $\bcX_{ij\cdot}$ can be represented as
	    \begin{equation*}
	        \bcX_{ij\cdot} = \sum_{k=1}^r \lambda_r (a_k)_i (b_k)_j \sum_{l=1}^{p_3} C_{lk}\phi_l = \sum_{l=1}^{p_3} \left(\sum_{k=1}^r \lambda_r (a_k)_i (b_k)_j C_{lk}\right)\phi_l = \sum_{l=1}^{p_3} \cX_{ijl} \phi_l.
	    \end{equation*}
	    Since $\{\phi_l\}_{l=1}^{p_3}$ are is a functional basis and $\bcX = \widetilde\bcX$, we must have $\cX = \widetilde \cX$. When $A,B,C$ are all full rank, the Indeterminacy Theorem in \cite{kruskal1976more} implies that when $2r < p_1+p_2+p_3+2$, $\{A,B,C\}$ and $\{\widetilde A, \widetilde B, \widetilde C\}$ are identical up to permutation and sign-flipping, and the model identifiability naturally follows.
	    
	    \item $\cH$ has infinite dimensions and all $\xi_l$s are continuous. We discretize the functional tensor on a grid $\left[\frac{1}{n},\ldots, \frac{n-1}{n},1\right]$ for some positive integer $n$. Denote $\bcX^{(n)}, \widetilde \bcX^{(n)} \in \bbR^{p_1\times p_2 \times n}$ with
	    \begin{equation*}
	        \cX^{(n)}_{ijl} = \bcX_{ij\frac{l}{n}}, \qquad \widetilde\cX^{(n)}_{ijl} = \widetilde\bcX_{ij\frac{l}{n}},\qquad l=1,\ldots,n. 
	    \end{equation*}
	    Since $\bcX = \widetilde \bcX$, we must have $\bcX^{(n)} = \widetilde \bcX^{(n)}$. In other words,
	    \begin{equation*}
	        \sum_{k=1}^r \lambda_r A_{ik}B_{jk}\Xi^{(n)}_{lk} = \sum_{k=1}^r \lambda_r \widetilde A_{ik} \widetilde B_{jk}\widetilde \Xi^{(n)}_{lk},
	    \end{equation*}
	    where $\Xi^{(n)}_lk = \xi_l(k/n)$ and $\widetilde\Xi^{(n)}_{lk} = \xi_l(k/n)$. Since $\{\xi_l\}_{l=1}^n$ are orthogonal with each other, when $n$ is sufficiently large, it is guaranteed that $\Xi$ has full rank $r$. Then the Indeterminacy Theorem in \cite{kruskal1976more} implies that $\{A,B,\Xi^{(n)}\}$ and $\widetilde A, \widetilde B, \widetilde \Xi^{(n)}$ are identical up to permutation and sign-flipping when $r < p_1+p_2-2$. We assume $\Xi^{(n)} = \widetilde \Xi^{(n)}$ without loss of generality. With $n$ tends to infinity and the fact that each $\xi_l$ is continuous, we obtain that $\xi_l = \tilde \xi_l$ and the identifiability is proved.
	    
	\end{itemize}

	\subsection{Proof of Theorem \ref{thm:local-convergence-RKHS}} In order to prove Theorem \ref{thm:local-convergence-RKHS}, we introduce the following events:
	\begin{enumerate}[label=(\subscript{A}{\arabic*})]
		\item 
		\begin{equation}\label{ineq:A-1}
			\left|\int_0^1 f(s)ds - \frac{1}{n}\sum_{k=1}^n f(s_k)\right| \leq C_1\left(\zeta_n \|f\|_{\cL^2} + \zeta_n^2 \|f\|_{\cH}\right),\qquad \forall f\in \cH.
		\end{equation}
		\item
		\begin{equation}\label{ineq:A-2}
			\begin{split}
				\left|\left\|f\right\|_{\cL^2}^2 - \|f\|_n^2\right| \leq C_2 \|f\|_\cH\left(\zeta_n \|f\|_{\cL^2} + \zeta_n^2\|f\|_{\cH} \right), \qquad \forall f \in \cH.
			\end{split}
		\end{equation}
		\item
		\begin{equation}\label{ineq:A-3}
			\begin{split}
				& \sup_{a \in \bbS^{p_1-1}, b \in \bbS^{p_2-1}} \frac{1}{n}\sum_{i=1}^{p_1}\sum_{j=1}^{p_2}\sum_{k=1}^n a_ib_jf(s_k)\varepsilon_{ijk} \\
				& \leq C_3\tau\left\{\left(\zeta_n^2 + \frac{(p_1+p_2)\log n}{n}\right)\|f\|_\cH + \left(\zeta_n + \sqrt{\frac{p_1+p_2}{n}}\right)\|f\|_{\cL^2}\right\},\qquad \forall f\in \cH.
			\end{split}
		\end{equation}
	\end{enumerate}
	Here recall that $\|f\|_n := \left(n^{-1}\sum_{k=1}^n f^2(s_k)\right)^{1/2}$, $C_1,C_2,C_3$ are some universal constants, and $\tau$ is the standard deviation of the Gaussian random noises $\varepsilon_{ijk}$s. By Lemmas \ref{lm:MC-int}, \ref{lm:MC-square-int}, and \ref{lm:Gaussian-complexity}, we have
	\begin{equation*}
		\bbP\left(A_1\right) \geq 1-Cn^{-9},\qquad \bbP\left(A_2\right) \geq 1-Cn^{-9},\qquad \bbP\left(A_3\right) \geq 1-Cn^{-9}-C\log n\cdot\exp(-c(p_1+p_2)).
	\end{equation*}
	The following analyses will be conducted under the event $A:=A_1\cap A_2 \cap A_3$, which holds with probability at least $1-Cn^{-9}-C\log n\cdot\exp(-c(p_1+p_2))$.
	
	We next introduce the following two Lemmas to establish the functional and tabular error contractions for the proposed RKHS-constraint power iteration (Algorithm \ref{alg:power-iter}). 
	\begin{Lemma}\label{lm:one-step-constraction-functional}
		Suppose the conditions in Theorem \ref{thm:local-convergence-RKHS} and $(A_1)-(A_3)$ hold. For any $t=0,\ldots,T-1$, if $\dist( a^{(t)}, a_l), \dist( b^{(t)}, b_l) \leq \delta^{(t)}$ for some $\delta^{(t)} \leq c_0/(\kappa r)$, then  $\|\xi^{(t+1)}\|_\cH \leq 2\kappa C_\xi$ and
		\begin{equation*}
			\begin{split}
				\max\{\dist(\xi^{(t+1)}, \xi_l), \dist(\xi^{(t+1)}_n, (\xi_l)_n)\} & \leq \frac{1}{2}\delta^{(t)} + C\kappa(\mu(r-1) + \zeta_n ) \\
				& \quad + C\left\{\frac{\cE}{\lambda_{min}} + \frac{\tau}{\lambda_{min}}\left(\zeta_n + \sqrt{\frac{p_1+p_2}{n}}\right)\right\}.
			\end{split} 
		\end{equation*}
	\end{Lemma}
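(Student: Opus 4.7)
The plan is to recast \eqref{ineq:alg-update-xi} as a constrained kernel least--squares problem, run a basic--inequality argument against the ``aligned'' competitor $\lambda_l\alpha_l\beta_l\xi_l$, and then use events $(A_2)$ and $(A_3)$ to pass between discrete and continuous norms. Because $\|a^{(t)}\otimes b^{(t)}\|_2 = 1$, completing the square in \eqref{ineq:alg-update-xi} reduces the update to
\begin{equation*}
\tilde{\xi}^{(t+1)} = \argmin_{\|\xi\|_{\cH}\leq C_\xi\lambda_{max}} \sum_{k=1}^n \bigl(\xi(s_k) - v_k\bigr)^2, \qquad v_k := \bigl(\widetilde{\bcY}\times_1 a^{(t)}\times_2 b^{(t)}\bigr)_k.
\end{equation*}
Writing $\alpha_m := \langle a^{(t)}, a_m\rangle$, $\beta_m := \langle b^{(t)}, b_m\rangle$, $u := a^{(t)}(b^{(t)})^\top$, and plugging \eqref{eq:functional-CP-model}--\eqref{eq:CP-model-observation} into $v_k$ gives the decomposition $v_k = h^*(s_k) + g(s_k) + z(s_k) + e_k$ with $h^* := \lambda_l\alpha_l\beta_l\xi_l$, $g := \sum_{m\neq l}\lambda_m\alpha_m\beta_m\xi_m$, $z(s) := \langle u, \bcZ_{\cdot\cdot s}\rangle$ and $e_k := \langle u, \varepsilon_{\cdot\cdot k}\rangle$. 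Assumption \ref{asmp:incoherence} together with the hypothesis $\max\{\dist(a^{(t)},a_l), \dist(b^{(t)},b_l)\}\leq \delta^{(t)}$ yield $|\alpha_l\beta_l|\geq 1-(\delta^{(t)})^2$ and $|\alpha_m\beta_m|\leq (\mu+\delta^{(t)})^2$ for $m\neq l$; in particular $\|h^*\|_\cH\leq \lambda_l|\alpha_l\beta_l|C_\xi \leq \lambda_{max}C_\xi$, so $h^*$ is feasible.

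Using $h^*$ as a competitor in the ERM and setting $\Delta := \tilde{\xi}^{(t+1)} - h^*$ yields the basic inequality
\begin{equation*}
\|\Delta\|_n^2 \leq \frac{2}{n}\sum_{k=1}^n \Delta(s_k)\bigl(g(s_k) + z(s_k) + e_k\bigr).
\end{equation*}
I would then bound each piece on the right separately. By Cauchy--Schwarz, the cross-component term is at most $\|\Delta\|_n\|g\|_n$; the triangle inequality in $\cH$ and $\cL^2$ gives $\|g\|_\cH, \|g\|_{\cL^2}\leq (r-1)\lambda_{max}C_\xi(\mu+\delta^{(t)})^2$, and event $(A_2)$ applied to $g$ then produces $\|g\|_n \lesssim (r-1)\lambda_{max}(\mu+\delta^{(t)})^2$ after absorbing the $\zeta_n$-corrections. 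For the remainder term, since $u$ is rank-$1$ with $\|a^{(t)}\|_2=\|b^{(t)}\|_2=1$, $|z(s)| = |(a^{(t)})^\top \bcZ_{\cdot\cdot s}b^{(t)}| \leq \|\bcZ_{\cdot\cdot s}\| \leq \cE$ uniformly, so the contribution is at most $\|\Delta\|_n\cdot\cE$. For the noise, event $(A_3)$ applied with $a = a^{(t)}$, $b = b^{(t)}$, $f = \Delta$ bounds $n^{-1}\sum_k \Delta(s_k) e_k$ by $C\tau(\zeta_n^2 + (p_1+p_2)\log n/n)\|\Delta\|_\cH + C\tau(\zeta_n + \sqrt{(p_1+p_2)/n})\|\Delta\|_{\cL^2}$, and the feasibility of $\tilde{\xi}^{(t+1)}$ and $h^*$ gives the crude bound $\|\Delta\|_\cH \leq 2C_\xi\lambda_{max}$ that I use to eliminate the Hilbert-norm factor.

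Next, event $(A_2)$ applied to $\Delta$, combined with $\|\Delta\|_\cH\leq 2C_\xi\lambda_{max}$, produces $\|\Delta\|_{\cL^2}\lesssim \|\Delta\|_n + \lambda_{max}\zeta_n$ after solving a small quadratic. Substituting all of the above into the basic inequality gives a relation of the form $\|\Delta\|_n^2 \leq A\|\Delta\|_n + B$, solved as $\|\Delta\|_n\lesssim A + \sqrt{B}$, hence
\begin{equation*}
\|\Delta\|_{\cL^2} \lesssim (r-1)\lambda_{max}(\mu+\delta^{(t)})^2 + \cE + \tau\bigl(\zeta_n + \sqrt{(p_1+p_2)/n}\bigr) + \lambda_{max}\zeta_n.
\end{equation*}
Expanding $(\mu+\delta^{(t)})^2 = \mu^2 + 2\mu\delta^{(t)} + (\delta^{(t)})^2$ and using $\mu\leq c/(\kappa^2 r^2)$, $\delta^{(t)}\leq c_0/(\kappa r)$ with $c, c_0$ small, the cross terms $2\kappa(r-1)\mu\delta^{(t)}$ and $\kappa(r-1)(\delta^{(t)})^2$ each fit inside $\delta^{(t)}/4$, while $\kappa(r-1)\mu^2 \leq \kappa\mu(r-1)$. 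Finally, $\|h^*\|_{\cL^2}=\lambda_l|\alpha_l\beta_l|\geq \lambda_{min}/2$ together with the orthogonal-decomposition identity
\begin{equation*}
\dist(\xi^{(t+1)},\xi_l) = \frac{\|\Delta - \langle\Delta,\xi_l\rangle_{\cL^2}\xi_l\|_{\cL^2}}{\|\tilde{\xi}^{(t+1)}\|_{\cL^2}} \leq \frac{2\|\Delta\|_{\cL^2}}{\lambda_{min}}
\end{equation*}
yields the stated contraction bound on $\dist(\xi^{(t+1)},\xi_l)$; the identical argument in the discrete inner product (with $\|\xi_l\|_n\geq 1/2$ from $(A_2)$) handles $\dist(\xi^{(t+1)}_n, (\xi_l)_n)$. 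As an immediate byproduct, $\|\xi^{(t+1)}\|_\cH = \|\tilde{\xi}^{(t+1)}\|_\cH/\|\tilde{\xi}^{(t+1)}\|_{\cL^2} \leq (C_\xi\lambda_{max})/(\lambda_{min}/2) = 2\kappa C_\xi$.

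The main obstacle will be the noise bookkeeping: event $(A_3)$ couples $\|\Delta\|_\cH$ and $\|\Delta\|_{\cL^2}$, and only the loose feasibility bound $\|\Delta\|_\cH\leq 2C_\xi\lambda_{max}$ is available, so Assumption \ref{asmp:sample-size} (notably $n/\log^2 n \geq C\kappa(p_1+p_2)$ and $\zeta_n\leq c/(\kappa^2 r)$) is exactly what is needed to absorb the additive $\tau\lambda_{max}(\zeta_n^2 + (p_1+p_2)\log n/n)$ floor into the principal error via AM--GM (splitting $\sqrt{\tau\lambda_{max}}\cdot X$ into $(\tau X + \lambda_{max} X)/2$) so that the remainder aligns with the two allowed terms $\kappa\zeta_n$ and $\tau(\zeta_n + \sqrt{(p_1+p_2)/n})/\lambda_{min}$ in the final bound. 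A secondary, more routine care-point is ensuring $|\alpha_l\beta_l|$ stays bounded below by an absolute constant throughout, which is guaranteed by $\delta^{(t)}\leq c_0/(\kappa r)$.
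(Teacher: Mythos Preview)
Your proposal is correct and follows essentially the same basic-inequality strategy as the paper: compare $\tilde\xi^{(t+1)}$ to a feasible competitor, decompose the residual into cross-component, remainder-$\bcZ$, and Gaussian-noise pieces, and control each using events $(A_1)$--$(A_3)$ together with the feasibility bound $\|\Delta\|_\cH\leq 2C_\xi\lambda_{max}$. The one cosmetic difference is your choice of competitor $h^*=\lambda_l\alpha_l\beta_l\,\xi_l$ instead of the paper's $\lambda_l\xi_l$; this absorbs the ``own-component leakage'' term (the paper's $(ii_3)$) directly into $h^*$ and leaves you only three error pieces rather than four, which is slightly cleaner but otherwise changes nothing in the final bound.
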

	\begin{proof}
		See Section \ref{sec:pr-main-lm}.
	\end{proof}
	
	\begin{Lemma}\label{lm:one-step-constraction-tabular}
		Suppose the conditions in Theorem \ref{thm:local-convergence-RKHS} and $(A_1)-(A_3)$ hold. For any $t=0,\ldots,T-1$, if $\max\{\dist( b^{(t)}, b_l), \dist((\xi^{(t)})_n, (\xi_l)_n), \dist(\xi^{(t)}, \xi_l)\} \leq \delta^{(t)}$ for some $\delta^{(t)} \leq c_0/(\kappa r)$ and $\|\xi^{(t)}\|_\cH \leq 2\kappa C_{\xi}$. Then, 
		\begin{equation*}
			\begin{split}
				\dist(a^{(t+1)}, a_l) \leq C\sqrt{r-1}\kappa(\mu+\delta^{(t)})(\mu+\delta^{(t)}+\zeta_n) + C\left\{\frac{\cE}{\lambda_{min}}  + \frac{\tau}{\lambda_{min}}\left(\zeta_n + \sqrt{\frac{p_1+p_2}{n}}\right)\right\}.
			\end{split} 
		\end{equation*}
	\end{Lemma}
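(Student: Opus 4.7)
The plan is to expand the update $\tilde a^{(t+1)} = \widetilde\bcY \times_2 b^{(t)} \times_3 \xi_n^{(t)}$ using the CP structure \eqref{eq:functional-CP-model} and observation model \eqref{eq:CP-model-observation} into four pieces: the main signal $S_l := \lambda_l \langle b_l, b^{(t)}\rangle \langle (\xi_l)_n, \xi_n^{(t)}\rangle\, a_l$, the cross-signal leakage $S_\perp := \sum_{l'\neq l} \lambda_{l'}\langle b_{l'}, b^{(t)}\rangle \langle (\xi_{l'})_n, \xi_n^{(t)}\rangle\, a_{l'}$, the discretized-remainder contribution $R_Z$ with $(R_Z)_i = \sum_{k} \xi^{(t)}(s_k)\, [\bcZ_{\cdot\cdot s_k} b^{(t)}]_i$, and the noise contribution $R_\varepsilon$ with $(R_\varepsilon)_i = \sum_{j,k} \varepsilon_{ijk} (b^{(t)})_j \xi^{(t)}(s_k)$. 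Since $P_{a_l^\perp} S_l = 0$, the identity $\dist(a^{(t+1)}, a_l) = \|P_{a_l^\perp}\tilde a^{(t+1)}\|_2/\|\tilde a^{(t+1)}\|_2$ reduces the task to upper-bounding $\|S_\perp\|_2, \|R_Z\|_2, \|R_\varepsilon\|_2$ and lower-bounding $\|\tilde a^{(t+1)}\|_2$.

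I would first establish $\|\tilde a^{(t+1)}\|_2 \gtrsim \lambda_l n$. The assumption $\dist(b^{(t)}, b_l) \leq \delta^{(t)}$ gives $|\langle b_l, b^{(t)}\rangle| \geq \sqrt{1-(\delta^{(t)})^2} \geq 1/2$, while $\dist(\xi_n^{(t)}, (\xi_l)_n) \leq \delta^{(t)}$ gives $|\langle (\xi_l)_n, \xi_n^{(t)}\rangle| \geq \|(\xi_l)_n\|_2 \|\xi_n^{(t)}\|_2 \sqrt{1-(\delta^{(t)})^2}$, where event $(A_2)$ applied to $\xi_l$ and $\xi^{(t)}$ (whose $\cH$-norms are bounded respectively by $C_\xi$ and $2\kappa C_\xi$ while $\cL^2$-norms are $1$) forces $\|(\xi_l)_n\|_2^2/n, \|\xi_n^{(t)}\|_2^2/n \asymp 1$. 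This yields $\|S_l\|_2 \gtrsim \lambda_l n$, and the triangle inequality then provides the lower bound once the subsequent error estimates are shown to be smaller than a constant fraction of $\|S_l\|_2$ under Assumption~\ref{asmp:sample-size}.

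For the leakage, I would decompose $b^{(t)} = \alpha b_l + b_\perp$ with $\|b_\perp\|_2 \leq \delta^{(t)}$ so that $|\langle b_{l'}, b^{(t)}\rangle| \leq \mu + \delta^{(t)}$ by Assumption~\ref{asmp:incoherence}; similarly decompose $\xi_n^{(t)} = \alpha_n (\xi_l)_n + w$ orthogonally in $\bbR^n$ with $\|w\|_2 \leq \delta^{(t)}\|\xi_n^{(t)}\|_2$, and use event $(A_2)$ to replace $\langle (\xi_{l'})_n, (\xi_l)_n\rangle/n$ by $\langle \xi_{l'}, \xi_l\rangle_{\cL^2}$ up to an error $\lesssim \zeta_n$. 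Together these yield $|\langle (\xi_{l'})_n, \xi_n^{(t)}\rangle|/n \lesssim \mu + \delta^{(t)} + \zeta_n$, and therefore $|c_{l'}| := |\lambda_{l'}\langle b_{l'}, b^{(t)}\rangle\langle (\xi_{l'})_n, \xi_n^{(t)}\rangle| \lesssim \lambda_{max}\, n\,(\mu+\delta^{(t)})(\mu+\delta^{(t)}+\zeta_n)$. A Gershgorin bound on the matrix $A_\perp := [a_{l'}]_{l'\neq l}$, which satisfies $\|A_\perp\|^2 \leq 1+(r-1)\mu \leq 2$ under Assumption~\ref{asmp:incoherence}, gives $\|S_\perp\|_2 = \|A_\perp (c_{l'})_{l'\neq l}\|_2 \lesssim \sqrt{r-1}\,\lambda_{max}\,n\,(\mu+\delta^{(t)})(\mu+\delta^{(t)}+\zeta_n)$, and dividing by the lower bound $\lambda_{min} n$ produces the claimed first term $C\sqrt{r-1}\kappa(\mu+\delta^{(t)})(\mu+\delta^{(t)}+\zeta_n)$.

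For the remainder term, I would use $\|R_Z\|_2 \leq \sum_k |\xi^{(t)}(s_k)| \cdot \|\bcZ_{\cdot\cdot s_k}\| \leq \cE\sqrt{n}\|\xi_n^{(t)}\|_2 \lesssim \cE\, n$ by Cauchy–Schwarz together with the definition \eqref{eq:def_E}. For the noise, event $(A_3)$ applied with $f = \xi^{(t)}$ and the choice $b = b^{(t)}$ yields
$$\|R_\varepsilon\|_2 \leq C\tau n\left\{\left(\zeta_n + \sqrt{(p_1+p_2)/n}\right)\|\xi^{(t)}\|_{\cL^2} + \left(\zeta_n^2 + (p_1+p_2)\log n/n\right)\|\xi^{(t)}\|_\cH\right\},$$
and plugging in $\|\xi^{(t)}\|_{\cL^2}=1$ and $\|\xi^{(t)}\|_\cH \leq 2\kappa C_\xi$, Assumption~\ref{asmp:sample-size} (which ensures $\kappa\zeta_n^2 \lesssim \zeta_n$ and $\kappa(p_1+p_2)\log n/n \lesssim \sqrt{(p_1+p_2)/n}$) absorbs the $\cH$-norm piece into the parametric term. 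Dividing by $\lambda_{min}n$ delivers the remaining two terms of the bound. The principal obstacle is the leakage step: one must combine the incoherence in Assumption~\ref{asmp:incoherence} with the discrete-to-continuous comparison from event $(A_2)$ to recover the sharp $\sqrt{r-1}$ (not $r-1$) factor and the correct product structure $(\mu+\delta^{(t)})(\mu+\delta^{(t)}+\zeta_n)$; secondarily, the noise step requires a careful use of Assumption~\ref{asmp:sample-size} to fold away the $\cH$-norm-dependent terms of event $(A_3)$.
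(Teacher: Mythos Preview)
Your proposal is correct and follows essentially the same route as the paper's proof: the same four-way decomposition of $\tilde a^{(t+1)}$, the same leakage estimate producing the $\sqrt{r-1}(\mu+\delta^{(t)})(\mu+\delta^{(t)}+\zeta_n)$ factor, the same $\cE\|\xi^{(t)}\|_n$ bound for $R_Z$, and the same invocation of $(A_3)$ for $R_\varepsilon$. The only cosmetic differences are that the paper applies $(A_1)$ to the product $\xi_m\xi^{(t)}$ (via Assumption~\ref{asmp:K-regularity}(2)) to compare $\frac{1}{n}\langle (\xi_{l'})_n,(\xi^{(t)})_n\rangle$ with $\langle \xi_{l'},\xi^{(t)}\rangle_{\cL^2}$, whereas you cite $(A_2)$ (which also works via polarization), and the paper expands $\|\sum_{m\neq l} c_m a_m\|_2^2$ directly rather than bounding $\|A_\perp\|$ by Gershgorin; both pairs of choices yield the same bounds.
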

	\begin{proof}
		See Section \ref{sec:pr-main-lm}.
	\end{proof}
	Note that we can obtain almost the same error contraction for $b^{(t+1)}$ as an analogue of Lemma \ref{lm:one-step-constraction-tabular}. Now we are ready to prove Theorem \ref{thm:local-convergence-RKHS}.
	\begin{proof}[Proof of Theorem \ref{thm:local-convergence-RKHS}]
		Without loss of generality, we assume $l=1$. We first prove by induction that for any $t=0,\ldots,T$,
		\begin{equation}\label{ineq:tabular-et<delta}
			\max \left\{\dist(a^{(t)}, a_1), \dist(b^{(t)}, b_1)\right\} \leq \frac{c_0}{\kappa r}.
		\end{equation}
		\begin{equation}\label{ineq:functional-et<delta}
			\dist((\xi^{(t)})_n,(\xi_1)_n) \leq \frac{c_0}{\kappa r}.
		\end{equation}
		\begin{equation}\label{ineq:functional-et-bound}
			\|\xi^{(t)}\|_\cH \leq 2\kappa C_\xi.
		\end{equation}
		We claim that \eqref{ineq:tabular-et<delta} implies \eqref{ineq:functional-et<delta} and \eqref{ineq:functional-et-bound} for any $t = 0,\ldots,T$. By Assumptions \ref{asmp:incoherence} and \ref{asmp:sample-size}, we have
		\begin{equation*}
			\mu \leq \frac{c}{\kappa^2r^2},\qquad \zeta_n \leq \frac{c}{\kappa^2 r},\quad 
			\text{and}\quad 
			\frac{\cE}{\lambda_{min}}  + \frac{\tau}{\lambda_{min}}\left(\zeta_n + \sqrt{\frac{p_1+p_2}{n}}\right) \leq \frac{c_0}{4\kappa r}.
		\end{equation*}
		Assuming \eqref{ineq:tabular-et<delta} and applying Lemma \ref{lm:one-step-constraction-functional}, one immediately obtains
		\begin{equation*}
			\dist(\xi_n^{(t)},(\xi_1)_n) \leq \frac{2c_0}{\kappa r} + C\kappa \cdot \frac{3c}{\kappa^2 r}  + \frac{c_0}{4\kappa r} \leq \frac{c_0}{\kappa r}
		\end{equation*}
		and $\|\xi^{(t)}\|_\cH \leq 2\kappa C_\xi$ for any $t = 0,\ldots,T$. So \eqref{ineq:functional-et<delta} and \eqref{ineq:functional-et-bound} are proved.
		Now assuming \eqref{ineq:tabular-et<delta}, \eqref{ineq:functional-et<delta} and \eqref{ineq:functional-et-bound} hold for some $t = t_0<T$, we show that \eqref{ineq:tabular-et<delta} also holds for $t = t_0 + 1$. By Lemma \ref{lm:one-step-constraction-tabular}, we have
		\begin{equation*}
			\begin{split}
				\dist(a^{(t_0+1)},a_1) \leq C\sqrt{r-1}\kappa \left(\frac{c}{\kappa^2r^2} + \frac{c_0}{\kappa r}\right)\left(\frac{c}{\kappa^2r^2} + \frac{c_0}{\kappa r} + \frac{c}{\kappa^2 r}\right) + \frac{c_0}{4\kappa r} \leq \frac{c_0}{\kappa r}.
			\end{split}
		\end{equation*}
		Similarly one can also show that $\dist(b^{(t_0+1)},b_1) \leq \frac{c_0}{\kappa r}$ and thus \eqref{ineq:functional-et<delta} holds for $t = t_0+1$. Finally, by Assumption \ref{asmp:initialization}, \eqref{ineq:tabular-et<delta} holds at $t=0$. Therefore, we have proved that \eqref{ineq:tabular-et<delta} and \eqref{ineq:functional-et<delta} hold for all $t=0,\ldots,T$ by induction. 
		
		In addition, note that by the upper bounds of $\mu,\zeta_n$ and $\delta^{(t)}$, we also have
		\begin{equation*}
			\sqrt{r-1}\kappa(\mu+\delta^{(t)})(\mu+\delta^{(t)}+\zeta_n) \leq \frac{1}{2C} \delta^{(t)} + \kappa(\mu (r-1)+\zeta_n).
		\end{equation*}
		Therefore, under the same condition of Lemma \ref{lm:one-step-constraction-tabular}, we further have
		\begin{equation}\label{ineq:contract-a-looser}
			\begin{split}
				\dist(a^{(t+1)}, a_1) & \leq \frac{1}{2}\delta^{(t)} + C\kappa(\mu(r-1) + \zeta_n) \\
				& \quad + C\left\{\frac{\cE}{\lambda_{min}} + \frac{\tau}{\lambda_{min}}\left(\zeta_n + \sqrt{\frac{p_1+p_2}{n}}\right)\right\}.
			\end{split} 
		\end{equation}
		Now we define 
		\begin{equation*}
			e^{(t)}:=\max\left\{\dist(a^{(t)}, a_1), \dist(b^{(t)}, b_1), \dist(\xi^{(t)}, \xi_1)\right\}.
		\end{equation*}
		Combining \eqref{ineq:contract-a-looser} with Lemma \ref{lm:one-step-constraction-functional}, we obtain that
		\begin{equation*}
			e^{(t+1)} \leq \frac{1}{2}e^{(t)} + C\left\{\kappa\left(\mu(r-1) + \zeta_n\right) + \frac{\cE}{\lambda_{min}} + \frac{\tau}{\lambda_{min}}\left(\zeta_n + \sqrt{\frac{p_1+p_2}{n}}\right)\right\},\qquad \forall t\geq 0.
		\end{equation*}
		Then it follows by induction that
		\begin{equation*}
			e^{(t)} \leq 2^{-t}e^{(0)} + 2C\left\{\kappa(\mu(r-1) + \zeta_n) + \frac{\cE}{\lambda_{min}} + \frac{\tau}{\lambda_{min}}\left(\zeta_n + \sqrt{\frac{p_1+p_2}{n}}\right)\right\},\qquad \forall t \geq 0.
		\end{equation*}
		
		In particular, when $r=1$, Lemma \ref{lm:one-step-constraction-tabular} directly implies that 
		\begin{equation*}
			\dist (a^{(t)}, a_1) \leq C\left\{\frac{\cE}{\lambda_{min}}  + \frac{\tau}{\lambda_{min}}\left(\zeta_n + \sqrt{\frac{p_1+p_2}{n}}\right)\right\}
		\end{equation*}
		for any $t \geq 1$. Now the proof of this theorem is completed.
	\end{proof}

	\subsection{Proofs of Lemmas in Theorem \ref{thm:local-convergence-RKHS} }\label{sec:pr-main-lm}
	\begin{proof}[Proof of Lemma \ref{lm:one-step-constraction-functional}]
		Without loss of generality, we assume $l=1$. We focus on a particular step $t$ and assume $\langle a^{(t)}, a_1\rangle, \langle b^{(t)}, b_1 \rangle \geq 0$ as the signs of $a^{(t)}$ and $b^{(t)}$ have no essential effect on the algorithm and analysis. 
		
		Denote $\Delta(\cdot) = \tilde \xi^{(t+1)}(\cdot) - \lambda_1 \xi_1(\cdot)$. We first present several important inequalities that will be used throughout the analysis. Firstly, since $\|\tilde \xi^{(t+1)}\|_\cH \leq \lambda_{max}C_\xi$ and $\|\xi_1\|_\cH \leq C_\xi$, we have
		\begin{equation}\label{ineq:Delta-H-norm-bound}
			\left\|\Delta\right\|_\cH \leq \|\tilde\xi^{(t+1)}\|_\cH + \lambda_1\|\xi_1\|_\cH \leq 2\lambda_{max}C_\xi.
		\end{equation}
		In addition, Condition $(A_2)$ implies that for any $m \in [r]$,
		\begin{equation}\label{ineq:delta-xi-n-bound}
			\begin{split}
				\|\Delta\|_n & \overset{\eqref{ineq:A-2}}{\leq} \|\Delta\|_{\cL^2} + \sqrt{C_2\zeta_n \|\Delta\|_\cH\cdot\|\Delta\|_{\cL^2}} + \sqrt{C_2}\zeta_n \|\Delta\|_\cH \\
				& \overset{(a)}{\leq} 2\|\Delta\|_{\cL^2} + C\zeta_n\|\Delta\|_\cH;\\
				\|\xi_m\|_n & \overset{\eqref{ineq:A-2}}{\leq}\|\xi_m\|_{\cL^2} + \sqrt{C_2\zeta_n \|\xi_m\|_\cH\cdot\|\xi_m\|_{\cL^2}} + \sqrt{C_2}\zeta_n \|\xi_m\|_\cH \\
				& \overset{(b)}{\leq}  1 + \sqrt{C_2C_\xi\zeta_n } + \sqrt{C_2}C_\xi \zeta_n  \overset{(c)}{\leq}  3/2.
			\end{split}
		\end{equation}
		Here $(a)$ comes from the Arithmetic-Geometric mean inequality; $(b)$ is due to the assumption that $\|\xi_m\|_{\cL^2} = 1$ and $\|\xi_m\|_\cH \leq C_\xi$; and $(c)$ comes from the assumption that $\zeta_n < c$ for sufficiently small constant.
		
		Now we are ready for the proof. Since $\tilde \xi^{(t+1)}$ is the optimal solution of \eqref{ineq:alg-update-xi} and $\lambda_1 \xi_1$ is in the feasible set, we have
		\begin{equation*}
			\sum_{i=1}^{p_1}\sum_{j=1}^{p_2}\frac{1}{n}\sum_{k=1}^{n} \left(\widetilde\bcY_{ijk} - a_i^{(t+1)} b_j^{(t+1)} \tilde\xi^{(t+1)}(s_k)\right)^2 \leq \sum_{i=1}^{p_1}\sum_{j=1}^{p_2} \frac{1}{n} \sum_{k=1}^{n} \left(\widetilde\bcY_{ijk} - \lambda_1 a_i^{(t+1)} b_j^{(t+1)}\xi_1(s_k)\right)^2.
		\end{equation*}
		Since
		$$
		\widetilde \bcY_{ijk} = \sum_{m=1}^r \lambda_m (a_m)_i(b_m)_j\xi_m(s_k) + \bcZ_{ijs_k} + \varepsilon_{ijk},
		$$
		the above inequality is equivalent to:
		\begin{equation}\label{ineq:ls-reduced-form}
			\underbrace{\sum_{i=1}^{p_1}\sum_{j=1}^{p_2} \frac{1}{n} \sum_{k=1}^{n} (a_i^{(t+1)})^2 (b_j^{(t+1)})^2\Delta^2(s_k)}_{(i)} \leq \underbrace{2\sum_{i=1}^{p_1}\sum_{j=1}^{p_2} \frac{1}{n} \sum_{k=1}^{n} a_i^{(t+1)} b_j^{(t+1)}\Delta(s_k)\tilde\varepsilon_{ij}(s_k)}_{(ii)},
		\end{equation}
		where 
		\begin{equation}\label{eq:tilde-eps-decomp}
			\tilde\varepsilon_{ij}(s_k) := \bcZ_{ijs_k} + \varepsilon_{ijk} + \lambda_1 \left((a_1)_i(b_1)_j-a_i^{(t+1)}b_j^{(t+1)}\right)\xi_1(s_k) + \sum_{m=2}^r \lambda_m (a_m)_i (b_m)_j \xi_m(s_k).
		\end{equation}

		We start by providing a lower bound for $(i)$ in \eqref{ineq:ls-reduced-form}. By Condition $(A_2)$,
		\begin{equation*}
			\begin{split}
				\left\|\Delta\right\|_{\cL^2}^2 - \|\Delta\|_n^2 & \overset{\eqref{ineq:A-2}}{\leq} C_2\left(\zeta_n \left\|\Delta\right\|_\cH \cdot \left\|\Delta\right\|_{\cL^2} + \zeta_n^2 \|\Delta\|_\cH^2\right) \\
				& \overset{(a)}{\leq} C_2\left(\frac{1}{4C_2}\left\|\Delta\right\|_{\cL^2}^2 +  (C_2+1)\zeta_n^2 \|\Delta\|_\cH^2\right) \\
				& \overset{\eqref{ineq:Delta-H-norm-bound}}{\leq} \frac{1}{4}\|\Delta\|_{\cL^2}^2 + C\zeta_n^2\lambda_{max}^2.
			\end{split}
		\end{equation*}
		Here $(a)$ comes from the Arithmetic-Geometric mean inequality. Therefore, we have
		\begin{equation}\label{ineq:bouna-a}
			(i) = \left\|\Delta\right\|_n^2  \geq \frac{3}{4}\left\|\Delta\right\|_{\cL^2}^2 - C\zeta_n^2\lambda_{\max}^2.
		\end{equation}
		
		Now we give an upper bound for $(ii)$. By \eqref{eq:tilde-eps-decomp}, we can further decompose $(ii) = (ii_1) + (ii_2) + (ii_3) + (ii_4)$, where
		\begin{equation*}
			\begin{split}
				(ii_1) &= 2\sum_{i=1}^{p_1}\sum_{j=1}^{p_2} \frac{1}{n} \sum_{k=1}^{n} a_i^{(t+1)} b_j^{(t+1)}\Delta(s_k)\bcZ_{ijs_k}, \\
				(ii_2) &= 2\sum_{i=1}^{p_1}\sum_{j=1}^{p_2} \frac{1}{n} \sum_{k=1}^{n} a_i^{(t+1)} b_j^{(t+1)}\Delta(s_k)\varepsilon_{ijk}, \\
				(ii_3) &= 2\sum_{i=1}^{p_1}\sum_{j=1}^{p_2} \frac{1}{n} \sum_{k=1}^{n} \lambda_1  a_i^{(t+1)} b_j^{(t+1)}\left((a_1)_i(b_1)_j - a_i^{(t+1)} b_j^{(t+1)}\right) \Delta(s_k)\xi_1(s_k), \\
				(ii_4) &= 2\sum_{i=1}^{p_1}\sum_{j=1}^{p_2} \frac{1}{n} \sum_{k=1}^{n} \sum_{m=2}^r \lambda_l a_i^{(t+1)} b_j^{(t+1)}(a_m)_i (b_m)_j \Delta(s_k)\xi_m(s_k). 
			\end{split}
		\end{equation*}
		We bound these four terms separately.
		\begin{itemize}
			\item $(ii_1)$: First, recall the definitions of $\cE$ and one immediately has
			\begin{equation*}
				\left\|\sum_{i=1}^{p_1}\sum_{j=1}^{p_2} a_i^{(t+1)} b_j^{(t+1)} \bcZ_{ij\cdot}\right\|_{\infty} \leq \mathcal E.
			\end{equation*} 
			Then it follows that
			\begin{equation}\label{ineq:bound-b1}
				\begin{split}
					\frac{1}{2} \cdot (ii_1) & \leq \left\|\sum_{i=1}^{p_1}\sum_{j=1}^{p_2} a_i^{(t+1)} b_j^{(t+1)} \bcZ_{ij\cdot}\right\|_\infty \cdot \left(\frac{1}{n}\sum_{k=1}^n \left|\Delta(s_k)\right|\right)  \\
					& \overset{(a)}{\leq} \cE \cdot \|\Delta\|_n \\
					& \overset{\eqref{ineq:A-2}}{\leq} \cE \cdot \sqrt{\|\Delta\|_{\cL^2}^2 + C_2\zeta_n \|\Delta\|_{\cL^2}\cdot\|\Delta\|_\cH + C_2\zeta_n^2 \|\Delta\|_\cH^2} \\
					& \overset{(b)}{\leq} \cE \cdot \sqrt{2\|\Delta\|_{\cL^2}^2 + (C_2^2/4 + C_2)\zeta_n^2 \|\Delta\|_\cH^2} \\
					& \overset{(c)}{\leq} \frac{1}{8} \|\Delta\|_{\cL}^2 + (C_2^2/64+C_2/16)\zeta_n^2\|\Delta\|_\cH^2 + 4\cE^2\\
					& \overset{\eqref{ineq:Delta-H-norm-bound}}{\leq} \frac{1}{8}\|\Delta\|_{\cL^2}^2 + C\left(\lambda_{max}^2 \zeta_n^2 + \cE^2\right).
				\end{split}
			\end{equation}
			Here (a) comes from the inequality between $l_1$ and $l_2$ norm, and (b), (c) come from the Arithmetic-Geometric mean inequality.
			\item $(ii_2)$: This term can be bounded by investigating the concentration of the functional tensor spectral norm under the RKHS constraint given by Condition $(A_3)$:
			\begin{equation}\label{ineq:bound-b2}
				\begin{split}
					\frac{1}{2} \cdot (ii_2) & \overset{\eqref{ineq:A-3}}{\leq} C\tau \left(\zeta_n^2 + \frac{(p_1+p_2)\log n}{n}\right) \|\Delta\|_\cH + C\tau\left(\zeta_n + \sqrt{\frac{p_1+p_2}{n}}\right) \|\Delta\|_{\cL^2}\\
					& \overset{\eqref{ineq:Delta-H-norm-bound}}{\leq} C\lambda_{max}\tau \left( \zeta_n^2 + \frac{(p_1+p_2)\log n}{n} \right) + C\tau\left(\zeta_n + \sqrt{\frac{p_1+p_2}{n}}\right) \|\Delta\|_{\cL^2} \\
					& \leq \frac{1}{8}\|\Delta\|_{\cL^2}^2 + C\tau^2\left(\zeta_n+\sqrt{\frac{p_1+p_2}{n}}\right)^2 + C\lambda_{max}^2\left(\zeta_n^2  + \frac{(p_1+p_2)\log n}{n}\right).
				\end{split}
			\end{equation}
			Here, we use the assumption $\tau\leq \lambda_{max}$ and the Arithmetic-Geometric mean inequality to obtain the final inequality.
			\item $(ii_3)$: Note that
			\begin{equation}\label{ineq:b3-decompose}
				\frac{1}{2}\cdot (ii_3) \leq \lambda_{max}\left|\sum_{i=1}^{p_1} \sum_{j=1}^{p_2} a_i^{(t+1)} b_j^{(t+1)} \left((a_1)_i(b_1)_j - a_i^{(t+1)} b_j^{(t+1)}\right)\right| \cdot \left|\frac{1}{n}\sum_{k=1}^n \Delta(s_k)\xi_1(s_k)\right|.
			\end{equation}
			On one hand, since $\langle a_1, a^{(t+1)}\rangle, \langle b_1,b^{(t+1)}\rangle \geq 0$, we can bound
			\begin{equation}\label{ineq:b3-1st-term}
				\begin{split}
					&\left|\sum_{i=1}^{p_1} \sum_{j=1}^{p_2} a_i^{(t+1)} b_j^{(t+1)} \left((a_1)_i(b_1)_j - a_i^{(t+1)} b_j^{(t+1)}\right)\right| \\
					& = \left|\left(\sum_{i=1}^n (a_1)_ia_i^{(t+1)}\right)\left(\sum_{j=1}^n (b_1)_jb_j^{(t+1)}\right) - \left(\sum_{i=1}^n (a_i^{(t+1)})^2\right)\left(\sum_{j=1}^n (b_j^{(t+1)})^2\right)\right| \\
					& = 1 - \langle a_1, a^{(t+1)} \rangle \langle b_1, b^{(t+1)} \rangle  \leq 1 - \left(1-(\delta^{(t+1)})^2\right) = (\delta^{(t+1)})^2.
				\end{split}
			\end{equation}
			On the other hand,
			\begin{equation}\label{ineq:b3-2nd-term}
				\begin{split}
					\left|\frac{1}{n}\sum_{k=1}^n\Delta(s_k)\xi_1(s_k)\right| & \leq \|\Delta\|_n \cdot \|\xi_1\|_n \\
					& \overset{\eqref{ineq:delta-xi-n-bound}}{\leq} \frac{3}{2}\left(2\|\Delta\|_{\cL^2} + C\zeta_n\|\Delta\|_\cH\right) \\
					& \overset{\eqref{ineq:Delta-H-norm-bound}}{\leq} 3\|\Delta\|_{\cL^2} + C\lambda_{max}\zeta_n.
				\end{split}
			\end{equation}
			Combining \eqref{ineq:b3-decompose}, \eqref{ineq:b3-1st-term} and \eqref{ineq:b3-2nd-term}, we obtain
			\begin{equation}\label{ineq:bound-b3}
				\begin{split}
					\frac{1}{2} \cdot (ii_3) & \leq (\delta^{(t+1)})^2\left(3\lambda_{max}\|\Delta\|_{\cL^2} + C\lambda_{max}^2\zeta_n\right) \\
					& \leq \frac{1}{8}\left\|\Delta\right\|_{\cL^2}^2 + C\left((\delta^{(t+1)})^4 \lambda_{max}^2 + (\delta^{(t+1)})^2\lambda_{max}^2\zeta_n\right). \\
					& \leq \frac{1}{8}\left\|\Delta\right\|_{\cL^2}^2 + \frac{c_1}{\kappa^2}(\delta^{(t+1)})^2\lambda_{max}^2.
				\end{split}
			\end{equation}
			Here the last inequality holds for some sufficiently small constant $c_1>0$ by the assumptions on $\delta^{(t)}$ and $\zeta_n$: 
			\begin{equation}\label{ineq:delta-zeta-upper}
				\delta^{(t+1)} \leq c/r\kappa, \qquad\zeta_n \leq c/(\kappa^2r).
			\end{equation}
			
			\item $(ii_4)$: For any $m = 2,\ldots,r$,
			\begin{equation*}
				\begin{split}
					\sum_{i=1}^{p_1}\sum_{j=1}^{p_2} a_i^{(t+1)} b_j^{(t+1)} (a_m)_i (b_m)_j & = \langle a^{(t+1)}, a_m \rangle \cdot \langle b^{(t+1)}, b_m \rangle \\
					& \leq \left(|\langle a,a_m \rangle| + \langle a^{(t+1)}-a,a_m \rangle\right) \cdot \left(|\langle b,b_m \rangle| + \langle b^{(t+1)}-b,b_m \rangle\right) \\
					& \leq \left(|\langle a,a_m \rangle| + \|a^{(t+1)} - a\|_2\right) \cdot \left(|\langle b,b_m \rangle| + \|b^{(t+1)} - b\|_2\right) \\
					& \leq C(\mu + \delta^{(t+1)})^2.
				\end{split}
			\end{equation*}
			Here the last inequality comes from the incoherence assumption and the fact that
			\begin{equation*}
				\left\|u-v\right\|_2 \leq 2\sqrt{1 - \langle u, v \rangle^2} = 2\dist(u,v)
			\end{equation*}
			for any two unit vectors $u,v$ with $\langle u, v\rangle \geq 0$. Then it follows by the similar argument as \eqref{ineq:bound-b3} that
			\begin{equation}\label{ineq:bound-b4}
				\begin{split}
					\frac{1}{2} \cdot (ii_4) &\leq C(\mu+\delta^{(t+1)})^2 (r-1)\left(\lambda_{max}\|\Delta\|_{\cL^2} + \lambda_{max}^2\zeta_n\right) \\
					& \leq \frac{1}{8}\left\|\Delta\right\|_{\cL^2}^2 + C\left((\mu+\delta^{(t+1)})^4 (r-1)^2\lambda_{\max}^2 + (\mu+\delta^{(t+1)})^2(r-1)\lambda_{max}^2\zeta_n\right) \\
    & \leq  \frac{1}{8}\left\|\Delta\right\|_{\cL^2}^2 + C\left(\mu^4 + (\delta^{(t+1)})^4\right)(r-1)^2\lambda_{\max}^2 + C\left(\mu^2 + (\delta^{(t+1)})^2\right)(r-1)\lambda_{\max}^2 \zeta_n \\
    & \overset{\eqref{ineq:delta-zeta-upper}}{\leq} \frac{1}{8}\left\|\Delta\right\|_{\cL^2}^2 + C\mu^4(r-1)^2\lambda_{\max}^2 + C\mu^2(r-1)\lambda_{\max}^2 \frac{c}{\kappa^2 r} \\
    & \qquad + C(\delta^{(t+1)})^2(r-1)^2\lambda_{\max}^2\left(\frac{c}{r\kappa}\right)^2 + C(\delta^{(t+1)})^2(r-1)\lambda_{\max}^2 \frac{c}{r\kappa^2}\\
					& \leq \frac{1}{8}\left\|\Delta\right\|_{\cL^2}^2 + \frac{c_1}{\kappa^2}(\delta^{(t+1)})^2\lambda_{max}^2 + C\mu^2(r-1)^2\lambda_{max}^2.
				\end{split}
			\end{equation}
   In the last inequality, we used the property that $\mu\leq 1, \kappa\geq 1, r\geq 1$ by its definition.
		\end{itemize}
		Combining \eqref{ineq:bouna-a} with \eqref{ineq:bound-b1}, \eqref{ineq:bound-b2}, \eqref{ineq:bound-b3} and \eqref{ineq:bound-b4}, we obtain
		\begin{equation*}
			\begin{split}
				\|\Delta\|_{\cL^2}^2 & \leq C\left(\lambda_{max}^2\left(\zeta_n^2 + \frac{(p_1+p_2)\log n}{n}\right) + \cE^2 + \tau^2\left(\zeta_n + \sqrt{\frac{p_1+p_2}{n}}\right)^2\right) \\
				& \quad + \frac{2c_1}{\kappa^2}(\delta^{(t+1)})^2\lambda_{max}^2 + C\mu^2(r-1)^2\lambda_{max}^2.
			\end{split}
		\end{equation*}
		Taking the square root, and further applying \eqref{ineq:delta-xi-n-bound}, we obtain that
		\begin{equation}\label{ineq:Delta-L2-upper-bound}
			\begin{split}
				\|\Delta\|_{\cL^2} & \leq C\left(\lambda_{max}\sqrt{\zeta_n^2 + \frac{(p_1+p_2)\log n}{n}} +  \cE + \tau\left(\zeta_n + \sqrt{\frac{p_1+p_2}{n}}\right)\right) \\
				& \quad + \frac{c}{\kappa}\delta^{(t+1)}\lambda_{max} + C\mu(r-1)\lambda_{max},
			\end{split}
		\end{equation}
		\begin{equation}\label{ineq:Delta-LN-upper-bound}
			\begin{split}
				\|\Delta\|_{n} & \leq 2\|\Delta\|_{\cL^2} + C\zeta_n \|\Delta\|_\cH \\
				& \leq C\left( \lambda_{max} \sqrt{\zeta_n^2 + \frac{(p_1+p_2)\log n}{n}} + \cE + \tau\left(\zeta_n + \sqrt{\frac{p_1+p_2}{n}}\right)\right) \\
				& \quad + \frac{c}{\kappa}\delta^{(t+1)}\lambda_{max} + C\mu(r-1)\lambda_{max}.
			\end{split}
		\end{equation}
		Combining \eqref{ineq:delta-zeta-upper} and the other assumptions on SNR and incoherence, i.e.,
		\begin{equation*}
			\frac{n}{\log^2 n} \geq \kappa(p_1+p_2),\qquad
			\mu \leq \frac{c}{\kappa^2 r^2}, \qquad \lambda_{min} \geq C\kappa r\left\{\cE + \tau\left(\zeta_n + \sqrt{(p_1+p_2)/n}\right)\right\},
		\end{equation*}
		we further conclude that
		\begin{equation*}
			\max\left\{\left\|\Delta\right\|_{\cL^2}, \left\|\Delta\right\|_n\right\} \leq \frac{1}{2}\lambda_{min}. 
		\end{equation*}
		Then it follows that
		\begin{equation*}
			\begin{split}
				\left\|\tilde\xi^{(t+1)}\right\|_{\cL^2} & \geq \|\lambda_1\xi_1\|_{\cL^2} - \|\Delta\|_{\cL^2} \geq \lambda_{min} - \frac{1}{2}\lambda_{min} = \frac{1}{2}\lambda_{min}. \\ 
				\left\|\tilde\xi^{(t+1)}\right\|_{n} & \geq \|\lambda_1\xi_1\|_{n} - \|\Delta\|_{n} \\
				& \overset{\eqref{ineq:A-2}}{\geq} \lambda_{min}\left(\|\xi_1\|_{\cL^2} - \sqrt{C_2\zeta_n\|\xi_1\|_{\cL^2}\|\xi_1\|_\cH} - \zeta_n\|\xi_1\|_\cH\right) - \frac{1}{2}\lambda_{min} \\
				& \geq \frac{3}{4}\lambda_{min} - \frac{1}{2}\lambda_{min} = \frac{1}{4}\lambda_{min}.
			\end{split}
		\end{equation*}
		
		Since $\xi^{(t+1)} = \tilde \xi^{(t+1)} / \|\tilde \xi^{(t+1)}\|_{\cL^2}$ and $\xi_n^{(t+1)} = \tilde\xi_n^{(t+1)} / \|\tilde \xi_n^{(t+1)}\|$, we obtain
		\begin{equation}\label{ineq:dist-xit-bound}
			\begin{split}
				\dist(\xi^{(t+1)},\xi_1) \overset{\text{Lemma \ref{lm:projection}}}{\leq} \frac{\|\tilde \xi^{(t+1)} - \lambda_1\xi_1\|_{\mathcal{L}^2}}{\|\tilde\xi^{(t+1)}\|_{\cL^2}} \leq \frac{\|\Delta\|_{\cL^2}}{\lambda_{min}}, \\
				\dist(\xi^{(t+1)}_n,(\xi_1)_n) \overset{\text{Lemma \ref{lm:projection}}}{\leq} \frac{\|\tilde \xi_n^{(t+1)} - \lambda_1(\xi_1)_n\|_n}{\|\tilde\xi^{(t+1)}\|_n} \leq \frac{\|\Delta\|_n}{\lambda_{min}}.
			\end{split}
		\end{equation}
  
  Now the one-step error contraction is proved by combining \eqref{ineq:dist-xit-bound} with \eqref{ineq:Delta-L2-upper-bound} and \eqref{ineq:Delta-LN-upper-bound}.

		In addition, we also get that
		\begin{equation*}
			\left\|\xi^{(t+1)}\right\|_\cH = \frac{\left\|\tilde\xi^{(t+1)}\right\|_\cH}{\left\|\tilde\xi^{(t+1)}\right\|_{\cL^2}} \leq \frac{C_\xi\lambda_{max}}{\frac{1}{2}\lambda_{min}} = 2\kappa C_\xi.
		\end{equation*}
	\end{proof}
	\bigskip
	\begin{proof}[Proof of Lemma \ref{lm:one-step-constraction-tabular}]
		We follow the convention in the proof of Lemma \ref{lm:one-step-constraction-functional} by assuming $l=1$ and $\langle b_1, b^{(t)} \rangle, \langle \xi_1, \xi^{(t)} \rangle_{\cL^2} \geq 0$. By definition, $a^{(t+1)} = \tilde a^{(t+1)} /\|\tilde a^{(t+1)}\|_2$, where 
		\begin{equation*}
			\begin{split}
				(\tilde{a}^{(t+1)})_i & = \sum_{j=1}^{p_2}\sum_{k=1}^n (b^{(t)})_j \xi^{(t)}(s_k)\widetilde\bcY_{ijk} \\
				& = \sum_{j=1}^{p_2}\sum_{k=1}^n (b^{(t)})_j \xi^{(t)}(s_k)\left(\lambda_1 (a_1)_i(b_1)_j\xi_1(s_k) + \sum_{m=2}^r \lambda_m(a_m)_i(b_m)_j\xi_m(s_k) + \bcZ_{ijs_k} + \varepsilon_{ijk} \right).
			\end{split}
		\end{equation*}			
		Denote 
		\begin{equation*}
			\bar a^{(t+1)} = \lambda_1\left(\sum_{j=1}^{p_2} \sum_{k=1}^n  (b^{(t)})_j (b_1)_j \xi^{(t)}(s_k) \xi_1(s_k)\right) a_1 = \left(\lambda_1\langle b_1, b^{(t)}\rangle \langle  (\xi_1)_n,  (\xi^{(t)})_n\rangle\right) a_1.
		\end{equation*}
		Then it follows that
		\begin{equation}\label{ineq:a-bar-tilde}
			\begin{split}
				\left\|\tilde a^{(t+1)} - \bar a^{(t+1)}\right\|_2 &\leq  \left\|\sum_{m=2}^r\lambda_m \langle b_m, b^{(t)} \rangle \left\langle (\xi_m)_n, (\xi^{(t)})_n \right\rangle_{\cL^2}  a_m\right\|_2 + \left\|\sum_{j=1}^{p_2}\sum_{k=1}^n (b^{(t)})_j \xi^{(t)}(s_k)\bcZ_{\cdot js_k}\right\|_2 \\
				& \qquad + \left\|\sum_{j=1}^{p_2}\sum_{k=1}^n (b^{(t)})_j \xi^{(t)}(s_k) \varepsilon_{\cdot jk}\right\|_2.
			\end{split}
		\end{equation}
		We bound the three terms in the right-hand-side of \eqref{ineq:a-bar-tilde} separately.
		\begin{itemize}
			\item $\left\|\sum_{m=2}^r\lambda_m \langle b_m, b^{(t)} \rangle \left\langle (\xi_m)_n, (\xi^{(t)})_n \right\rangle_{\cL^2}  a_m\right\|_2$. By condition \eqref{ineq:A-1}, for any $m, m' = 1,\ldots,r$, $m\neq m'$, we have
			\begin{equation*}
				\begin{split}
					\left|\frac{1}{n}\langle (\xi_m)_n, (\xi^{(t)})_n \rangle - \langle \xi_m, \xi^{(t)} \rangle_{\cL^2}\right| & \leq  C\left(\zeta_n \left\|\xi_m\xi^{(t)}\right\|_{\cL^2} + \zeta_n^2 \left\|\xi_m \xi^{(t)}\right\|_\cH\right) \leq C\zeta_n, \\
					\left|\frac{1}{n}\langle (\xi_m)_n, (\xi_{m'})_n \rangle - \langle \xi_m, \xi_{m'} \rangle_{\cL^2}\right| & \leq  C\left(\zeta_n \left\|\xi_m \xi_{m'}\right\|_{\cL^2} + \zeta_n^2 \left\|\xi_m \xi_{m'}\right\|_\cH\right) \leq C\zeta_n.
				\end{split}
			\end{equation*}
			Here we use the assumptions that $\|\xi_m\|_{\cL^2} = \|\xi^{(t)}\|_{\cL^2} = 1$, $\|\xi_m\|_{\cH} \leq C_\xi, \|\xi^{(t)}\|_\cH \leq 2\kappa C_{\xi}'$ and the condition $\zeta_n \leq c/\kappa$. Then it follows that
			\begin{equation}\label{ineq:xi-N-incoherence}
				\begin{split}
					\max_{m\neq m'} \frac{1}{n}\left|\langle (\xi_m)_n, (\xi_{m'})_n \rangle \right|& \leq \max_{l\neq l'} \left|\langle \xi_l, \xi_{l'} \rangle_{\cL^2}\right| + C\zeta_n \leq \mu + C\zeta_n, \\
					\frac{1}{n}\langle (\xi_1)_n,  (\xi^{(t)})_n \rangle & \geq \langle\xi_1, \xi^{(t)}\rangle_{\cL^2} - C\zeta_n \geq \sqrt{1-(\delta^{(t)})^2} - C\zeta_n \geq 1/2,
				\end{split}
			\end{equation}
			where the last inequality comes from the assumptions on $\delta^{(t)}$ and $\zeta_n$.
			Therefore,
			\begin{equation*}
				\begin{split}
					\frac{1}{n}\left|\langle (\xi_m)_n, (\xi^{(t)})_n \rangle\right| & \leq \langle \xi_m, \xi^{(t)}\rangle_{\cL^2} + C\zeta_n  \leq \left|\langle \xi_m , \xi_1 \rangle_{\cL^2}\right| + \left\|\xi_1 - \xi^{(t)}\right\|_{\cL^2} + C\zeta_n \\
					& \overset{\eqref{ineq:xi-N-incoherence}}{\leq} \mu + 2\delta^{(t)} + C\zeta_n;
				\end{split}
			\end{equation*}
			similarly, one can prove 
			\begin{equation*}
				\langle b_m , b^{(t)}\rangle \leq \mu + 2\delta^{(t)}.
			\end{equation*}
			Note that here we do not have the term $\zeta_n$ since $b^{(t)}$ corresponds to a tabular mode and has error term $\zeta_n$ from discretization.
			
			Combining the above results, we have
			\begin{equation}\label{ineq:a-update-bound-1}
				\begin{split}
					& \left\|\sum_{m=2}^r\lambda_m \langle b_m, b^{(t)} \rangle \cdot \left\langle (\xi_m)_n, (\xi^{(t)})_n \right\rangle  a_m\right\|_2^2 \\
					\leq & \sum_{m=2}^r \lambda_m^2 \langle b_m, b^{(t)} \rangle^2 \left\langle (\xi_m)_n, (\xi^{(t)})_n \right\rangle^2 \\
					& \qquad + \sum_{\substack{s,s' \in [r]/\{1\} \\ m \neq m'}}\lambda_m\lambda_{m'} \langle b_m, b^{(t)}\rangle \langle b_{m'}, b^{(t)} \rangle \left\langle (\xi_m)_n , (\xi^{(t)})_n\right\rangle \left\langle (\xi_{m'})_n, (\xi^{(t)})_n\right\rangle \langle a_m, a_{m'}\rangle \\
					& \leq Cn^2(r-1)\lambda_{max}^2 (\mu+\delta)^2(\mu+\delta+\zeta_n)^2 \\
					& \qquad + Cn^2(r-1)^2\lambda_{max}^2\mu^2(\mu+\delta)^2(\mu+\delta+\zeta_n)^2 \\
					& \leq Cn^2(r-1)\lambda_{max}^2 (\mu+\delta)^2(\mu+\delta+\zeta_n)^2.
				\end{split}
			\end{equation}
			Here we use the assumption $\mu \leq r^{-1/2}$ to obtain the final inequality.

			\item $\left\|\sum_{j=1}^{p_2}\sum_{k=1}^n (b^{(t)})_j \xi^{(t)}(s_k)\bcZ_{\cdot js_k}\right\|_2$. First, let $ a^* \in \bbS^{p_1-1}$ such that
			\begin{equation*}
				\frac{1}{n}\sum_{i=1}^{p_1}\sum_{j=1}^{p_2}\sum_{k=1}^n a^*_i (b^{(t)})_j \xi^{(t)}(s_k)\bcZ_{ijs_k} \leq  \left\|\frac{1}{n}\sum_{j=1}^{p_2}\sum_{k=1}^n (b_1^{(t)})_j \xi^{(t)}_1(s_k)\bcZ_{\cdot js_k}\right\|_2.
			\end{equation*}	 
			Then following the same argument as \eqref{ineq:bound-b1}, we have
			\begin{equation*}
				\begin{split}
					& \frac{1}{n}\sum_{i=1}^{p_1}\sum_{j=1}^{p_2}\sum_{k=1}^n a^*_i (b^{(t)})_j \xi^{(t)}(s_k)\bcZ_{ijs_k} \\
					\leq & \left\|\sum_{i=1}^{p_1}\sum_{j=1}^{p_2} a^*_i (b^{(t)})_j\bcZ_{ijs_k} \right\|_\infty \cdot \left(\frac{1}{n}\sum_{k=1}^n \left|\xi^{(t)}(s_k)\right|\right) \\
					\leq & \cE \cdot \left\|\xi^{(t)}\right\|_n \\
					\overset{\eqref{ineq:A-2}}{\leq} &  \cE \cdot \sqrt{\|\xi^{(t)}\|_{\cL^2}^2 + C_2\zeta_n \|\xi^{(t)}\|_{\cL^2}\cdot\|\xi^{(t)}\|_\cH + C_2\zeta_n^2 \|\xi^{(t)}\|_\cH^2} \\
					\leq & \cE \cdot \sqrt{1+C_2\zeta_n \|\xi^{(t)}\|_\cH + C_2 \zeta_n^2 \|\xi^{(t)}\|_\cH^2} \leq 2\cE.
				\end{split}
			\end{equation*}
			Note that the last inequality comes from the assumption that $\|\xi_1^{(t)}\|_\cH \leq 2\kappa C_{\xi}$ and $\zeta_n \leq c/\kappa$ for a sufficiently small constant $c$.
			
			Therefore,
			\begin{equation}\label{ineq:a-update-bound-2}
				\left\|\sum_{j=1}^{p_2}\sum_{k=1}^n (b^{(t)})_j \xi^{(t)}(s_k)\bcZ_{\cdot js_k}\right\|_2 \leq 2n\cE.
			\end{equation}
			\item $\left\|\sum_{j=1}^{p_2}\sum_{k=1}^n (b^{(t)})_j \xi^{(t)}(s_k) \varepsilon_{\cdot jk}\right\|_2$. By Condition $(A_3)$, we have
			\begin{equation}\label{ineq:a-update-bound-3}
				\begin{split}
					& \left\|\sum_{j=1}^{p_2}\sum_{k=1}^n (b^{(t)})_j \xi^{(t)}(s_k) \varepsilon_{\cdot jk}\right\|_2  \leq \sup_{a \in \bbS^{p_1-1}, b \in \bbS^{p_2-1}} \frac{1}{n}\sum_{i=1}^{p_1}\sum_{j=1}^{p_2}\sum_{k=1}^n a_ib_j\xi^{(t)}(s_k)\varepsilon_{ijk}\\
					\overset{\eqref{ineq:A-3}}{\leq} & Cn\tau\left\{\left(\zeta_n + \sqrt{\frac{p_1+p_2}{n}}\right)\|\xi^{(t)}\|_{\cL^2} + \left(\zeta_n^2 + \frac{(p_1+p_2)\log n}{n}\right)\|\xi^{(t)}\|_{\cH}\right\} \\
					\leq & Cn\tau\left(\zeta_n + \sqrt{\frac{p_1+p_2}{n}}\right).
				\end{split}
			\end{equation}
		\end{itemize}
		Combining \eqref{ineq:a-bar-tilde}, \eqref{ineq:a-update-bound-1}, \eqref{ineq:a-update-bound-2} and \eqref{ineq:a-update-bound-3}, we obtain
		\begin{equation}
			\left\|\tilde a^{(t+1)} - \bar a^{(t+1)}\right\|_2 \leq Cn\left\{\sqrt{r-1}\lambda_{max}(\mu+\delta)(\mu+\delta+\zeta_n) + \cE + \tau\left(\zeta_n +\sqrt{\frac{p_1+p_2}{n}}\right) \right\}.
		\end{equation}
		Also note that 
		\begin{equation*}
			\left\|\bar a_1^{(t+1)}\right\|_2  = \lambda_1 \left|\langle b_1, b^{(t)}_1 \rangle\right| \cdot \left|\langle (\xi_1)_n,(\xi_1^{(t)})_n \rangle\right| \overset{\eqref{ineq:xi-N-incoherence}}{\geq} \frac{n\lambda_{min}}{4}.
		\end{equation*}
		Therefore, by Lemma \ref{lm:hilbert-rescale}, we obtain
		\begin{equation*}
			\dist(a_1, a_1^{(t+1)}) \leq C\sqrt{r-1}\kappa(\mu+\delta)(\mu+\delta+\zeta_n) + C\left\{\frac{\cE}{\lambda_{min}}  + \frac{\tau}{\lambda_{min}}\left(\zeta_n + \sqrt{\frac{p_1+p_2}{n}}\right)\right\}. 
		\end{equation*}
	\end{proof}

	\subsection{Additional Proofs}\label{sec:additional-proof}
	\begin{proof}[Proof of Theorem \ref{thm:initialization}]
		We only need to prove the bound of $\dist(a,a^{(0)})$ as the proof for the other tabular mode essentially follows. Denote $\widetilde \bY:= \cM_1(\widetilde\bcY)$ and $\bZ := \cM_1(\bcZ)$. Then we can write
		\begin{equation*}
			\widetilde \bY = \lambda a (b\otimes \xi_n)^\top + \bZ + \bZ_{\varepsilon},
		\end{equation*}	
		where $\bZ_{\varepsilon} \in \bbR^{p_1 \times p_2n}$ and has i.i.d. $N(0,\tau^2)$ entries. 
		
		Note that $a^{(0)}$ is the leading eigenvector of $\widetilde\bY\widetilde\bY^\top$, which can be decomposed as:
		\begin{equation*}
			\begin{split}
				\widetilde\bY\widetilde\bY^\top &= \lambda^2\cdot \|\xi_n\|_2^2 \cdot aa^\top +  \bZ\bZ^\top + \bZ_\varepsilon \bZ_\varepsilon^\top + \bZ\bZ_\varepsilon^\top + \bZ_\varepsilon \bZ^\top  \\
				& \qquad  + \lambda a(b \otimes \xi_n)^\top (\bZ + \bZ_\varepsilon)^\top +  \lambda  (\bZ + \bZ_\varepsilon)(b \otimes \xi_n)a^\top.
			\end{split}
		\end{equation*}
		Then,
		\begin{equation}\label{ineq:gram-decompose}
			\begin{split}
				& \left\|\left(\widetilde \bY\widetilde \bY^\top - \bbE\left[\bZ\bZ^\top\bigg|\{s_k\}_{k=1}^n\right] - \bbE \bZ_\varepsilon\bZ_\varepsilon^\top\right) - \lambda^2\|\xi_n\|^2 \cdot aa^\top\right\| \\
				\leq  &   \left\|\bZ_\varepsilon\bZ_\varepsilon^\top - \bbE\bZ_\varepsilon\bZ_\varepsilon^\top\right\| +	2\lambda \left\|a(b\otimes\xi_n)\bZ_{\varepsilon}^\top\right\| \\
				& + \left\|\bZ\bZ^\top - \bbE\left[\bZ\bZ^\top\bigg|\{s_k\}_{k=1}^n\right]\right\|  + 2\lambda \left\|a(b\otimes\xi_n)\bZ^\top\right\| + 2\left\|\bZ\bZ_\varepsilon^\top\right\|.
			\end{split}
		\end{equation}
		Before we proceed, we first provide the probabilistic upper and lower bounds for $\|\xi\|_n^2 = n^{-1}\|\xi_n\|_2^2$. Recall that $\|\xi\|_{\cL^2} = 1$ and $\|\xi\|_\cH \leq C_\xi$. By Lemma \ref{lm:MC-square-int}, with probability at least $1-n^{-9}$,
		\begin{equation}\label{ineq:xi-n-bound}
			\begin{split}
				\|\xi\|_n^2 & \leq 1 + C\zeta_n \leq \frac{3}{2}, \quad \|\xi\|_n^2 \geq 1 - C\zeta_n \geq \frac{1}{2}. \\
			\end{split}
		\end{equation}
		Now we bound the five terms in \eqref{ineq:gram-decompose} separately. First, by the tail bound of Wishart-type random matrix~(Lemma \ref{lm:hetero-wishart}), we have 
		\begin{equation*}
			\bbP\left(\left\|\bZ_\varepsilon\bZ_\varepsilon^\top - \bbE \bZ_\varepsilon\bZ_\varepsilon^\top\right\| \geq C\tau^2\left((\sqrt{p_1} + \sqrt{p_2n}+x)^2 - p_2n\right)\right) \leq e^{-x}.
		\end{equation*}
		Let $x = \sqrt{p_1}$. Then we have with probability at least $1-e^{-\sqrt{p_1}}$,
		\begin{equation}\label{ineq:Gram-bound-1}
			\left\|\bZ_\varepsilon\bZ_\varepsilon^\top - \bbE \bZ_\varepsilon\bZ_\varepsilon^\top\right\| \leq C\tau^2\left(p_1 + \sqrt{p_1p_2n}\right).
		\end{equation}
		Note that conditioning on $\{s_k\}_{k=1}^n$, $\left\|a(b\otimes\xi_n)^\top \bZ_{\varepsilon}^\top\right\| = \|(b\otimes \xi_n)^\top \bZ_{\varepsilon}^\top\|_2$ is the $l_2$ norm for a $p_1$-dimensional random vector with i.i.d. $N(0,\tau^2\|\xi_n\|^2)$ entries. Therefore, by Gaussian concentration, we have with probability at least $1-e^{-p_1}$
		\begin{equation}\label{ineq:Gram-bound-2}
			\left\|a(b\otimes \xi_n)^\top \bZ_{\varepsilon}^\top\right\| \leq C\tau\|\xi_n\|_2 \sqrt{p_1}  \overset{\eqref{ineq:xi-n-bound}}{\leq} C\tau \sqrt{np_1}.
		\end{equation}
		
		The analysis for $ \left\|\bZ\bZ^\top - \bbE\left[\bZ\bZ^\top\bigg|\{s_k\}_{k=1}^n\right]\right\|$ and $\|a(b\otimes\xi_n)\bZ^\top\|$ are more involved as the entries of $\bZ$ are dependent. We conduct our analysis conditioning on fixed values of $\{s_k\}_{k=1}^n$. Note that for any fixed pair $(i,j) \in [p_1] \times [p_2]$, 
		\begin{equation*}
			\left(\bcZ_{ijs_1}, \bcZ_{ijs_2}, \ldots, \bcZ_{ijs_n} \right)^\top \overset{i.i.d.}{\sim} N(0, \bSigma),
		\end{equation*}
		where $(\bSigma)_{ll} = \Var(\bcZ_{ijs_l}) \leq \sigma^2$ and it follows that
		\begin{equation}\label{ineq:discrete-cov-spectral-norm}
			\left\|\bSigma\right\| \leq \left\|\bSigma\right\|_* \leq n\sigma^2,
		\end{equation}
		where $\|\cdot\|$ and $\|\cdot\|_*$ are the matrix spectral and nuclear norms respectively.
		Let $\bSigma = \bU\bLambda\bU^\top$ be the eigenvalue decomposition of $\bSigma$ with $\bU \in \bbO_{n}$ and $\bLambda$ being a diagonal matrix. Let $\bW$ be a $p_1$-by-$(np_2)$ random matrix with i.i.d. $N(0,1)$ entries.  We can then equivalently represent 
		\begin{equation}\label{eq:Z-W-represent}
			\bZ = \bW \cdot \begin{bmatrix}
				\bLambda^{1/2}\bU^\top &  & & \\
				& \bLambda^{1/2}\bU^\top & & \\
				& & \vdots & \\
				& & & \bLambda^{1/2}\bU^\top
			\end{bmatrix},
		\end{equation}
		and it follows that
		\begin{equation}
			\bZ\bZ^\top = \tilde\bW\tilde\bW^\top
		\end{equation}
		for $\tilde\bW \in \bbR^{p_1 \times (np_2)}$ with independent mean-zero Gaussian random variables and $\Var(\bW_{ij}) = \bLambda_{k+1,k+1}$ iff $(j \text{ mod } n) = k$. Now applying the Concentration of
		Heteroskedastic Wishart-type Matrix~(Lemma \ref{lm:hetero-wishart}), we have
		\begin{equation*}
			\bbP\left(\left\|\tilde \bW\tilde \bW^\top - \bbE \tilde \bW\tilde \bW^\top\right\| \geq C\left((\sigma_R  + \sigma_C + \sigma_* \log p_1 + \sigma_* x)^2 - \sigma_R^2\right)\right) \leq e^{-x},
		\end{equation*}
		where $\sigma_R^2 = p_2 \sum_{k=1}^n \bLambda_{kk} = p_2 \|\bSigma\|_*^2$, $\sigma_C^2 = p_1\|\bSigma\|$ and $\sigma_* = \|\bSigma\|$. By specifying $x = \sqrt{p_1}$ and use the bounds \eqref{ineq:discrete-cov-spectral-norm}, we obtain that with probability at least $1-n^{-9}-e^{-\sqrt{p_1}}$,
		\begin{equation}\label{ineq:Gram-bound-3}
			\begin{split}
				\left\|\bZ\bZ^\top - \bbE\left[\bZ\bZ^\top\bigg|\{s_k\}_{k=1}^n\right]\right\|  &= \left\|\tilde\bW\tilde\bW^\top - \bbE \tilde\bW\tilde\bW^\top  \right\| \\
				& \leq p_1\|\bSigma\| + \sqrt{p_1p_2\|\bSigma\|\cdot\|\bSigma\|_*} \\
				& \leq 	n\sigma^2 \left(p_1 + \sqrt{p_1p_2}\right).
			\end{split}
		\end{equation}
		
		To bound $\left\|a(b\otimes \xi_n)\bZ^\top\right\|$, we also utilize the representation \eqref{eq:Z-W-represent} and it follows by the similar argument as \eqref{ineq:Gram-bound-2} that
		\begin{equation}\label{ineq:Gram-bound-4}
			\begin{split}
				\left\|a(b \otimes \xi_n)^\top\bZ^\top\right\| & \leq C\left\|(b \otimes \xi_n)^\top\diag(\bLambda^{1/2}\bU^\top,\bLambda^{1/2}\bU^\top,\ldots,\bLambda^{1/2}\bU^\top)\right\|_2\sqrt{p_1} \\
				& \leq C\|\xi_n\|_2 \cdot \|\Sigma\|^{1/2}  \cdot \sqrt{p_1} \overset{\eqref{ineq:xi-n-bound},\eqref{ineq:discrete-cov-spectral-norm}}{\leq} n\sqrt{p_1}.
			\end{split}
		\end{equation}
		Finally, note that $\bZ^\top \bZ_\varepsilon$ has the same distribution with $\bW\diag(\bLambda^{1/2},\ldots,\bLambda^{1/2})\bZ_\varepsilon^\top$, by Lemma \ref{lm:hetero-wishart}, we know that with probability at least $1-\exp(-cp_1)$, 
		\begin{equation}\label{ineq:Gram-bound-5}
			\left\|\bZ^\top \bZ_\varepsilon^\top\right\| \leq C\sigma\tau\left(\sqrt{p_1p_2\|\Sigma\|_*} + p_1\|\Sigma^{1/2}\|\right) \leq C\sigma\tau\left(\sqrt{np_1p_2} + \sqrt{n}p_1\right).
		\end{equation}
		Combining \eqref{ineq:gram-decompose} with \eqref{ineq:Gram-bound-1}, \eqref{ineq:Gram-bound-2}, \eqref{ineq:Gram-bound-3}, \eqref{ineq:Gram-bound-4} and \eqref{ineq:Gram-bound-5}, we obtain
		\begin{equation*}
			\begin{split}
				& \left\|\left(\hat \bY\hat \bY^\top - \bbE\left[\bZ\bZ^\top\bigg|\{s_k\}_{k=1}^n\right] - \bbE \bZ_\varepsilon\bZ_\varepsilon^\top\right) - \lambda^2\|\xi_n\|^2 \cdot aa^\top\right\| \\
				\lesssim & \tau^2(p_1+\sqrt{p_1p_2n}) + \lambda(\sqrt{n}\tau+n\sigma)  \sqrt{p_1} + (n\sigma^2 + \sqrt{n}\sigma\tau)\left(p_1 + \sqrt{p_1p_2}  \right). \\
			\end{split}
		\end{equation*}
		Since both $\bbE\left[\bZ\bZ^\top\bigg|\{s_k\}_{k=1}^n\right]$ and $\bbE \bZ_\varepsilon\bZ_\varepsilon^\top$ are some multipliers of the identity matrix, the leading eigenvector of $\left(\hat \bY\hat \bY^\top - \bbE\left[\bZ\bZ^\top\bigg|\{s_k\}_{k=1}^n\right] - \bbE \bZ_\varepsilon\bZ_\varepsilon^\top\right)$ is the same as that of $\hat\bY \hat\bY^\top$, i.e., $a^{(0)}$. 
		Then it follows by Wedin's Sin Theta's theorem~\citep{wedin1972perturbation} that
		\begin{equation*}
			\dist(a, a^{(0)}) \lesssim \frac{\tau^2(p_1+\sqrt{p_1p_2n}) + \lambda(\sqrt{n}\tau+n\sigma) \sqrt{p_1} + (n\sigma^2 + \sqrt{n}\sigma\tau)\left(p_1 + \sqrt{p_1p_2}  \right)}{\lambda^2\|\xi_n\|^2}.
		\end{equation*}
		Now the proof is completed by noticing $\|\xi_n\|^2 \geq n/2 $ and plugging in the signal-to-noise conditions
		\begin{equation*}
			\begin{split}
				\frac{\lambda}{\sigma} & \geq C\left(\sqrt{p_1} + \left(p_1p_2 \right)^{1/4}\right), \\
				\frac{\lambda}{\tau} & \geq C\left(\sqrt{\frac{p_1}{n}} + \left(\frac{p_1p_2}{n} \right)^{1/4}\right).
			\end{split}
		\end{equation*}
		
	\end{proof}

	\begin{proof}[Proof of Proposition \ref{prop:sampling-incoherence}]
		By the eigendecomposition of the covariance function $\bbG(s,t)$, we can equivalently write the Gaussian function $f_i(t) = \sum_{k=1}^\infty \mu_kX_{ik} \phi_k(t)$, where $X_{ik}\overset{i.i.d.}{\sim}N(0,1)$. Given the fact that $\sum_{k=1}^\infty \mu_k < \infty$, one can simply verify that 
		\begin{equation*}
			\left\|f_i\right\|_\cH = \sqrt{\sum_{k=1}^\infty\frac{\mu_k^2X_{ik}^2}{\mu_k}} = \sqrt{\sum_{k=1}^\infty \mu_kX_{ik}^2} < \infty,~~a.s.
		\end{equation*}
		Next we obtain an upper bound for $\langle f_i, f_j \rangle_{\cL^2}$ and a lower bound for $\|f_i\|_{\cL^2}$. Note that
		\begin{equation*}
			\langle f_1, f_2 \rangle_{\cL^2} = \sum_{k=1}^\infty \mu_k^2 X_{1k}X_{2k} := W.
		\end{equation*}
		Let $W_n := \sum_{k=1}^n \mu_k^2 X_{1k}X_{2k}$ be a sequence of random variables. Since $$\bbE\left|W_n - W\right| \leq \sum_{k=1}^n \mu_k^2 \bbE\left|X_{1k}\right|\cdot\bbE\left|X_{2k}\right| \rightarrow 0,\qquad \text{as } n\rightarrow \infty,$$
		we know that $W_n$ converges to $W$ in probability by Markov inequality. Then it follows by Hanson-wright inequality~ (see, e.g., Theorem 1.1 in \cite{rudelson2013hanson}),
		\begin{equation*}
			\bbP\left(\left|W_n\right| > t\right)\leq 2\exp\left(-c\min\left(\frac{t^2}{\sum_{k=1}^n\mu_k^4}, \frac{t}{\mu_1^2}\right)\right).
		\end{equation*}
		Setting $t = C\log r\cdot \sqrt{\sum_{k=1}^\infty\mu_k^4}$, we have
		\begin{equation*}
			\begin{split}
				\bbP\left(|W| > C\log r\cdot \sqrt{\sum_{k=1}^\infty\mu_k^4}\right)& = \lim_{n\rightarrow \infty}\bbP\left(\left|W_n\right| > C\log r\cdot \sqrt{\sum_{k=1}^\infty\mu_k^4}\right) \\
				& \leq 2\exp\left(-C\min\left(\log^2 r, \log r \sqrt{\sum_{k=1}^\infty \frac{\mu_k^4}{\mu_1^4}}\right)\right) \leq r^{-10}.
			\end{split}
		\end{equation*}
		Since $\langle f_i, f_j \rangle_{\cL^2} \overset{d}{=} W$ for any $i\neq j$, it follows by union bounds that
		\begin{equation}\label{ineq:fij-inner-upper}
			\max_{i \neq j} \langle f_i, f_j \rangle_{\cL^2} \leq C\log r\cdot \sqrt{\sum_{k=1}^\infty \mu_k^4}
		\end{equation}
		holds with probability at least $1-r(r-1)/2\cdot r^{-10} \geq 1-r^{-8}$.
		
		Then we provide the lower bound for $\|f_i\|_{\cL^2}^2$, which has the same distribution as $\sum_{k=1}^\infty \mu_k^2 X_{k}^2$ for $X_k \overset{i.i,d}{\sim} N(0,1)$. Let $Y := \sum_{k=1}^\infty \mu_k^2 (X_{k}^2-1)$ and consider the sequence $Y_n := \sum_{k=1}^n \mu_k^2 (X_{k}^2 - 1)$, one can see that $Y_n \overset{p}{\rightarrow} Y$ as $\sum_{k=1}^\infty \mu_k^2 < \infty$. Note that $Y_n$ is a weighted summation of independent centralized Chi-square distribution, by \cite[Theorem 6]{zhang2018non}, we have
		\begin{equation*}
			\bbP\left(Y_n \leq -t\right) \leq \exp\left(-\frac{t^2}{4\sum_{k=1}^\infty \mu_k^4}\right), \qquad \forall 0\leq t \leq c_0\frac{\sum_{k=1}^\infty \mu_k^4}{\mu_1^2},
		\end{equation*}
		where $c$ is some universal constant. Set $t = c_0\frac{\sum_{k=1}^\infty \mu_k^4}{\mu_1^2}$ and we get
		\begin{equation*}
			\begin{split}
				\bbP\left(\|f_i\|_{\cL^2}^2 \leq (1-c_0)\sum_{k=1}^\infty \mu_k^2\right) & =  \bbP\left(Y \leq -c\sum_{k=1}^\infty \mu_k^2\right) \\ 
				& \leq \bbP\left(Y \leq -c_0\frac{\sum_{k=1}^\infty \mu_k^4}{\mu_1^2}\right) \\
				& = \lim_{n\rightarrow \infty}  \bbP\left(Y_n \leq -c_0\frac{\sum_{k=1}^\infty \mu_k^4}{\mu_1^2}\right) \\
				& \leq \exp\left(-c\frac{\sum_{k=1}^\infty\mu_k^4}{\mu_1^4}\right) \leq \exp(-C\log r) \leq r^{-10}. 
			\end{split}
		\end{equation*} 
		Here the last inequality comes from the assumption $\log r \leq c\sum_{k=1}^\infty \mu_k^4/\mu_1^4$, by union bound, we get that with probability at least $1 - r^{-9}$ that 
		\begin{equation}\label{ineq:fi-norm-lower}
			\min_{i} \|f_i\|_\cL^2 \geq (1-c_0)\sum_{k=1}^\infty\mu_k^2.
		\end{equation}
		Combining \eqref{ineq:fij-inner-upper} and \eqref{ineq:fi-norm-lower}, we can see for any $i \neq j \in [r]$
		\begin{equation*}
			\frac{\langle f_i,f_j\rangle_{\cL^2}}{\left\|f_i\right\|_{\cL^2} \cdot \left\|f_j\right\|_{\cL^2}} \leq \frac{C\log r\cdot \sqrt{\sum_{k=1}^\infty \mu_k^4}}{(1-c_0)\sum_{k=1}^\infty \mu_k^2} \leq \frac{C\log r}{(1-c_0)\sqrt{\sum_{k=1}^\infty\mu_k^2/\mu_1^2}}
		\end{equation*}
		and the proof is complete.
	\end{proof}

	\begin{proof}[Proof of Proposition \ref{prop:Gaussian-process-xi}]
		To bound $\cE$, we apply an $\varepsilon$-net argument on the tabular modes (i.e., $a, b$) and then apply the Borell-TIS Inequality~\citep{adler2009random}.
		
		We first construct an $\varepsilon$-net $\{a^{(1)},\ldots,a^{(N_a)}\}$ of $\mathbb S^{p_1-1}$ such that 
		\begin{equation*}
			\sup_{a \in \mathbb S^{p_1-1}}\min_{l \in [N_a]}\|a - a^{(l)}\|_2 \leq \varepsilon
		\end{equation*}
		with $N_a \leq \left(3/\varepsilon\right)^{p_1}$ \citep[Corollary 4.2.13]{vershynin2018high}. An $\varepsilon$-net $\{b^{(1)},\ldots,b^{(N_b)}\}$ for $\mathbb S^{p_2-1}$ can be constructed similarly. We take $\varepsilon=1/4$ for the following calculation. For each fixed pair $(l_1,l_2) \in [N_a] \times [N_b]$, define 
		\begin{equation*}
			T_s^{(l_1,l_2)} = \sum_{i=1}^{p_1}\sum_{j=1}^{p_2} a^{(l_1)}_ib^{(l_2)}_j\bcZ_{ijs}.
		\end{equation*}
		By Assumption \ref{asmp:Z-Gaussian-H}, for each pair $(i,j) \in [N_a] \times [N_b]$, $\bcZ_{ij\cdot}$ are i.i.d. mean-zero Gaussian process. Therefore, $T_s^{(l_1,l_2)}$ has the same distribution as $\bcZ_{ij\cdot}$ and $\mathbb{E}\sup_s|T_s^{(l_1,l_2)}|= \mathbb{E}\sup_s|\bcZ_{ijs}| \leq m$, $\sup_s \mathbb{E}(T_s^{(l_1,l_2)})^2=\sup_s \mathbb{E}\bcZ_{ijs}^2\leq \sigma^2$. Applying Borell-TIS Inequality (Lemma \ref{lm:Borell-TIS}) yields 
		\begin{equation*}
			\bbP\left(\sup_s\left|T_s^{(l_1,l_2)}\right| \geq m + x\sigma\right) \leq 2\exp(-x^2/2).
		\end{equation*}
		Setting $x = 3\sqrt{p_1+p_2}$ and applying union bounds, we obtain
		\begin{equation}\label{ineq:borell-union}
			\begin{split}
				\bbP\left(\max_{l_1,l_2}\sup_s\left|T_s^{(l_1,l_2)}\right| \geq m + 3\sigma\sqrt{p_1+p_2}\right) & \leq 2N_aN_b\exp(-9(p_1+p_2)/2) \\
				& \leq 2\exp\left((\log(3/\varepsilon)-8)(p_1+p_2)\right) \\
				& \leq 2\exp(-(p_1+p_2)).
			\end{split}
		\end{equation}
		Note that by definition,
		\begin{equation*}
			\begin{split}
				\cE = \sup_s \|\bcZ_{\cdot\cdot s}\| = \sup_s \sup_{\substack{a \in \bbS^{p_1-1} \\ b \in \bbS^{p_2-1} }} \left|\sum_{i=1}^{p_1}\sum_{j=1}^{p_2} a_ib_j\bcZ_{ijs}\right|.
			\end{split}
		\end{equation*}
		Let $a^* \in \bbS^{p_1-1}, b^* \in \bbS^{p_2-1}$ such that
		\begin{equation*}
			\cE = \sup_s  \left|\sum_{i=1}^{p_1}\sum_{j=1}^{p_2} a_i^*b_j^*\bcZ_{ijs}\right|.
		\end{equation*}
		Then one can find some $a^{(l_1^*)}, b^{(l_2^*)}$ such that
		\begin{equation*}
			\|a^* - a^{(l_1^*)}\|_2 \leq \varepsilon,\qquad \|b^* - b^{(l_2^*)}\|_2 \leq \varepsilon.
		\end{equation*}
		Then it follows that
		\begin{equation*}
			\begin{split}
				\cE & = \sup_s  \left|\sum_{i=1}^{p_1}\sum_{j=1}^{p_2} \left(a_i^{(l_1^*)} + a_i^* - a_i^{(l_1^*)}\right)\left(b_j^{(l_2^*)} + b_j^* - b_j^{(l_2^*)}\right)\bcZ_{ijs}\right| \\
				& \leq \sup_s T_s^{(l_1^*,l_2^*)} + 2\varepsilon \cE \\
				& \leq \max_{l_1,l_2}\sup_s\left|T_s^{(l_1,l_2)}\right| + \frac{1}{2}\cE.
			\end{split}
		\end{equation*}
		Combining this with \eqref{ineq:borell-union}, we obtain that with probability at least $1-C\exp(-c(p_1+p_2))$,
		\begin{equation*}
			\cE \leq 2m + 6\sigma\sqrt{p_1+p_2}.
		\end{equation*}
	\end{proof}
	
	\begin{proof}[Proof of Proposition \ref{prop:optimization-solution}]
	    We assume the kernel matrix $\bK$ is invertible while the same results could be easily extended to non-invertible kernels.  According to the Representor Theorem~\citep{kimeldorf1971some}, the solution for \eqref{ineq:alg-update-xi} admits the form $\xi (\cdot) = \sum_{k=1}^n \beta_k \bbK(\cdot, s_k)$ for some vector $\beta \in \bbR^n$. Therefore, the loss function defined in \eqref{ineq:alg-update-xi} can be written as
	    \begin{equation*}
	        \begin{split}
	            & \left\|\cM_3\left(\widetilde \bcY\right) - \bK \beta \left(a^{(t)} \otimes b^{(t)}\right)^\top\right\|_\tF^2 \\
	            & \qquad = \tr\left((a^{(t)}\otimes b^{(t)})\beta^\top \bK^2 \beta \left(a^{(t)} \otimes b^{(t)}\right)^\top\right) - 2\tr\left(\left(a^{(t)} \otimes b^{(t)}\right) \beta^\top \bK \cM_3(\widetilde\bcY)\right) + \left\|\cM_3\left(\widetilde \bcY\right)\right\|_\tF^2 \\
	            & \qquad = \beta^\top \bK^2 \beta - 2\beta^\top \bK \tilde y^{(t)} + \left\|\cM_3\left(\widetilde \bcY\right)\right\|_\tF^2.
	        \end{split}
	    \end{equation*}
	    Also note that 
	    \begin{equation*}
			\left\|\xi\right\|_\cH^2 = \sum_{k_1,k_2 \in [n]} \beta_{k_1}\beta_{k_2} \left\langle \bbK(\cdot,s_{k_1}),\bbK(\cdot,s_{k_2}) \right\rangle_\cH = \sum_{k_1,k_2 \in [n]} \beta_{k_1}\beta_{k_2} \bbK(s_{k_1},s_{k_2}) = \beta^{\top} \bK \beta.
		\end{equation*}
		In conclusion, the original optimization \eqref{ineq:alg-update-xi} can be equivalently formalized as
		\begin{equation}\label{eq:beta-optimize}
		    \argmin_{\beta: \beta^\top \bK \beta \leq \alpha^2} \beta^\top \bK^2 \beta - 2\beta^\top \bK \tilde y^{(t)}.
		\end{equation}
		Here $\alpha = C_\xi \lambda_{max}$. Consider the corresponding Lagrangian:
		\begin{equation*}
		    L(\beta, \mu) = \beta^\top \bK^\top \beta - 2\beta^\top \bK \tilde y^{(t)} + \mu \left(\beta^\top \bK \beta - \alpha^2\right).
		\end{equation*}
		Since \eqref{eq:beta-optimize} is a convex optimization, there exist at least one local minima and all the local minimas must be global minima. We denote it as $\beta^*$. Then Karush–Kuhn–Tucker conditions implies that there exists some $\mu^*$ such that
		\begin{equation}\label{eq:KKTs}
		    \frac{\partial L}{\partial \beta}(\beta^*, \mu^*) = 0,\qquad \beta^{*\top} \bK \beta^* \leq \alpha^2, \qquad \mu^* \geq 0, \qquad \mu^*\left(\beta^{*\top} \bK \beta^* - \alpha\right) = 0.
		\end{equation}
	    By the first condition, we have
	    \begin{equation*}
	        2\bK^2 \beta^* - 2 \bK \tilde y^{(t)} + 2\mu^* \bK \beta = 0,
	    \end{equation*}
	    which implies
	    \begin{equation*}
	        \beta^* = \left(\bK + \mu^*\bI\right)^{-1} \tilde y^{(t)}.
	    \end{equation*}
	    Now consider the following two scenarios:
	    \begin{itemize}
	        \item $\tilde y^{(t)} \bK^{-1} \tilde y^{(t)} \leq \alpha^2$. In this case, by setting $\mu^* = 0$, conditions \eqref{eq:KKTs} are satisfied and we have $\beta^* = \left(\bK + \mu^*\bI\right)^{-1} \tilde y^{(t)}$.
	        \item $\tilde y^{(t)} \bK^{-1} \tilde y^{(t)} > \alpha^2$. One can see that $\mu^* = 0$ no longer meet the conditions \eqref{eq:KKTs}. Therefore, we must have $\mu^* > 0$ and $\beta^{*\top} \bK \beta^{*} - \alpha^2 = 0$. In other words, we have $\mu^*$ satisfying
	        \begin{equation*}
	            \tilde y^{(t)\top} \left(\bK + \mu \bI\right)^{-1} \bK \left(\bK + \mu \bI\right)^{-1} y^{(t)} = \alpha^2
	        \end{equation*}
	    \end{itemize}
	    Combining these two scenarios, we proved the proposition.
	    
	\end{proof}

	\section{Properties of \texorpdfstring{$Q_n(\delta)$}{Lg} and \texorpdfstring{$\zeta_n$}{Lg}}\label{sec:prop-zeta}
	Recall the definitions of $Q_n(\delta)$ and $\zeta_n$:
	\begin{equation*}
		\begin{split}
			Q_n(\delta) &= \frac{1}{\sqrt{n}}\left(\sum_{k=1}^\infty \mu_k \wedge \delta^2\right)^{1/2},\\
			\zeta_n &= \inf\left\{\zeta \geq \sqrt{\frac{\log n}{n}}: Q_n(\delta) \leq \zeta \delta + \zeta^2,\quad \forall \delta \in (0,1]\right\}.
		\end{split}
	\end{equation*}
	We discuss some important properties of $Q_n(\delta)$ and $\zeta_n$ in this section, which are used to establish several technical results.
	
	The function $Q_n(\delta)$ is firstly introduced by \cite{mendelson2002geometric}, which characterizes the geometry for the general statistical learning problem in RKHS $\cH$ via the corresponding eigenvalues of the kernel operators. It is closely related to both local Rademacher complexity and Gaussian complexity, as stated in the following proposition.
	\begin{Proposition}\label{prop:Qn-complexity}
		Let $\{\epsilon_k\}_{k=1}^n$ be i.i.d. Rademacher variables with $\bbP(\epsilon_k = 1) = \bbP(\epsilon_k = -1) = 1/2$; $\{\omega_k\}_{k=1}^n$ be i.i.d. Gaussian random variables; and $\{s_k\}_{k=1}^n$ be i.i.d. uniformly distributed random variables on $[0,1]$. Let $\cF(\delta) = \{f \in \bbR^{[0,1]}: \|f\|_\cH \leq 1, \|f\|_{\cL^2} \leq \delta\}$. Define the local Rademacher complexity and Gaussian complexity as:
		\begin{equation*}
			\begin{split}
				\mathcal R_n (\delta) & = \frac{1}{n}\bbE\sup_{f \in \cF(\delta)} \left|\sum_{k=1}^n \epsilon_k f(s_k)\right|, \\
				\mathcal G_n (\delta) & = \frac{1}{n}\bbE\sup_{f \in \cF(\delta)} \left|\sum_{k=1}^n \omega_k f(s_k)\right|.
			\end{split}
		\end{equation*}
		Then, there exist absolute constants $c_\cR,c_\cG$ and $C_\cR,C_\cG$ such that for every $\delta \geq 1/\sqrt{n}$, 
		\begin{equation*}
			\begin{split}
				c_\cR Q_n(\delta) & \leq \cR_n(\delta) \leq C_\cR Q_n(\delta), \\
				c_\cG Q_n(\delta) & \leq \cG_n(\delta) \leq C_\cG Q_n(\delta).
			\end{split}
		\end{equation*}
	\end{Proposition}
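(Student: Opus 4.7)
\textbf{Proof plan for Proposition \ref{prop:Qn-complexity}.}

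The plan is to reduce both complexities to suprema of Gaussian/Rademacher linear functionals over an ellipsoid in $\ell^2$ via Mercer's eigen-expansion, and match these to $Q_n(\delta)$ up to absolute constants. Expanding any $f\in\cH$ in the Mercer basis as $f=\sum_j a_j\phi_j$ gives $\|f\|_{\cL^2}^2=\sum_j a_j^2$ and $\|f\|_\cH^2=\sum_j a_j^2/\mu_j$. Setting $w_j:=\min(\mu_j,\delta^2)$, I would first show that $\cF(\delta)$ is sandwiched (up to a factor of $\sqrt 2$) between the single ellipsoids
\[
    \mathcal{E}_c(\delta):=\Bigl\{(a_j): \sum_j a_j^2/w_j\le c\Bigr\},\qquad c\in\{1,2\}.
\]
Indeed, $\sum_j a_j^2/w_j=\sum_{\mu_j\le\delta^2}a_j^2/\mu_j+\sum_{\mu_j>\delta^2}a_j^2/\delta^2\le \|f\|_\cH^2+\|f\|_{\cL^2}^2/\delta^2$, giving the inclusion $\cF(\delta)\subset\mathcal{E}_2(\delta)$; the reverse inclusion $\mathcal{E}_1(\delta)\subset\cF(\delta)$ is immediate from $w_j\le\mu_j$ and $w_j\le\delta^2$.

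Second, for the Gaussian complexity I would write $\sum_{k=1}^n\omega_k f(s_k)=\sum_j a_j Z_j$ with $Z_j:=\sum_{k=1}^n\omega_k\phi_j(s_k)$. By the Cauchy--Schwarz inequality on the ellipsoid, the supremum over $\mathcal{E}_c(\delta)$ equals $(c\sum_j w_j Z_j^2)^{1/2}$, so
\[
    \cG_n(\delta)\asymp \frac{1}{n}\,\bbE\Bigl(\sum_j w_j Z_j^2\Bigr)^{1/2}.
\]
Since $\bbE Z_j^2=n\int_0^1\phi_j^2=n$, Jensen's inequality gives the upper bound $\cG_n(\delta)\le\sqrt{2}\,Q_n(\delta)$. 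For the matching lower bound I would apply Paley--Zygmund to $X:=\sum_j w_j Z_j^2$: compute $\bbE X=n\sum_j w_j$ and bound $\bbE X^2\le C(\bbE X)^2$ using Gaussian fourth moments together with the fact that $\bbE\,\text{Cov}(Z_j,Z_{j'}\mid\{s_k\})=n\delta_{jj'}$ from the orthonormality of $\{\phi_j\}$ in $\cL^2$. This yields $\bbE\sqrt{X}\gtrsim\sqrt{\bbE X}$, hence $\cG_n(\delta)\gtrsim Q_n(\delta)$.

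Third, for the Rademacher case I would either repeat the argument with $\epsilon_k$ in place of $\omega_k$ (invoking Khintchine--Kahane to handle the $L^1$--$L^2$ comparison of Rademacher sums) or use the standard equivalence $\cR_n(\delta)\asymp\cG_n(\delta)$ valid for symmetric convex classes with finite $\ell^2$-radius: one direction is the classical $\cG_n\le\sqrt{\pi/2}\,\cR_n$ (contraction), and the other follows on the ellipsoid via the scalar Khintchine inequality applied coordinatewise, which loses only absolute constants.

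The main obstacle will be the lower-bound step, specifically controlling $\bbE X^2/(\bbE X)^2$ uniformly in $\delta$. The variables $Z_j$ are Gaussian only conditionally on $\{s_k\}_{k=1}^n$, and their empirical cross-covariances $\sum_k\phi_j(s_k)\phi_{j'}(s_k)$ do not vanish pathwise. Here the assumption $\delta\ge 1/\sqrt n$ is essential: it implies that the ``effective'' block of indices $\{j:\mu_j>\delta^2\}$ has size at most comparable to $n$, so that a truncation argument reduces the problem to a finite-dimensional ellipsoid whose empirical Gram matrix concentrates; the remaining tail $\{j:\mu_j\le\delta^2\}$ contributes only $O(\delta^2)$ per coordinate and can be handled by a direct variance bound. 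Assembling these two regimes delivers the matching lower constant and completes the equivalences.
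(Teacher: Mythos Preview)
Your ellipsoid sandwich and the Cauchy--Schwarz/Jensen upper bound for $\cG_n(\delta)$ are exactly what the paper does (with the cosmetic reparametrization $a_j=\beta_j\sqrt{\mu_j}$), so that half is fine.

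The lower bounds diverge. The paper does not attempt Paley--Zygmund at all: it simply invokes \cite[Theorem~41]{mendelson2002geometric} for the two-sided Rademacher estimate $\cR_n(\delta)\asymp Q_n(\delta)$, and then obtains the Gaussian lower bound from the classical comparison $\cR_n(\delta)\le\sqrt{\pi/2}\,\cG_n(\delta)$. This is much shorter than your second-moment route and sidesteps precisely the obstacle you flag, namely controlling $\bbE X^2/(\bbE X)^2$ when the empirical cross-covariances $\sum_k\phi_j(s_k)\phi_{j'}(s_k)$ do not vanish and the eigenfunctions $\phi_j$ are not assumed uniformly bounded. Your truncation idea (split at $\mu_j\gtrless\delta^2$ and use $\delta\ge 1/\sqrt n$ to cap the number of ``large'' modes) is in the right spirit---it is essentially how Mendelson's lower bound is proved---but as written it is only a sketch; turning it into a uniform bound on the second moment without extra assumptions on $\{\phi_j\}$ is real work that the paper's citation avoids.

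One correction: you have the comparison inequality backward. The inequality that holds for \emph{any} class is $\cR_n\le\sqrt{\pi/2}\,\cG_n$ (write $g_k=|g_k|\,\mathrm{sign}(g_k)$ and apply Jensen/contraction in the $|g_k|$); the reverse direction $\cG_n\lesssim\cR_n$ carries a $\sqrt{\log n}$ factor in general. On this particular ellipsoid your Khintchine argument does give a log-free reverse comparison, so the conclusion survives, but the ``classical contraction'' direction you quote should be flipped---and, once flipped, it is exactly the step the paper uses to pass from the cited Rademacher lower bound to the Gaussian one.
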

	\begin{proof}
		The first inequality comes from a direct implication of \cite[Theorem 41]{mendelson2002geometric} by taking the probability measure $\mu$ as $\text{Unif}(0,1)$. The second inequality can be similarly established and we present it here for completeness. Recall $\{(\mu_k,\phi_k)\}_{k=1}^\infty$ are the eigenvalue-eigenvector pairs of $\bbK$. For any $f \in \cF(\delta)$, we can write
		\begin{equation*}
			f = \sum_{k=1}^\infty \beta_k \sqrt{\mu_k} \phi_k
		\end{equation*} 
		with coefficients $\{\beta_k\}_{k=1}^\infty$ satisfying
		\begin{equation*}
			\|f\|_{\cH}^2 = \sum_{k=1}^\infty \beta_k^2 \leq 1, \quad \|f\|_{\cL^2}^2 = \sum_{k=1}^\infty \mu_k\beta_k^2 \leq \delta^2,
		\end{equation*}
		which implies that
		\begin{equation}\label{ineq:beta-series-bound}
			\sum_{k=1}^\infty \left(1 \vee \frac{\mu_k}{\delta^2} \right)\beta_k^2 \leq 2.
		\end{equation}
		Now we define the function set
		\begin{equation*}
			\cF'(\delta) = \left\{f = \sum_{k=1}^\infty \beta_k \sqrt{\mu_k} \phi_k,\quad \sum_{k=1}^\infty \left(1 \vee \frac{\mu_k}{\delta^2} \right)\beta_k^2 \leq 2\right\}.
		\end{equation*}
		By the reasoning above we know that $\cF(\delta) \subset \cF'(\delta)$, and it follows that
		\begin{equation*}
			\begin{split}
				\bbE \sup_{f \in \cF(\delta)}\left|\sum_{k=1}^n\omega_k f(s_k)\right|^2 & \leq \bbE\sup_{f \in \cF'(\delta)}\left|\sum_{k=1}^n\omega_k f(s_k)\right|^2 \\
				& = \bbE\sup_{\beta \text{ satisfies } \eqref{ineq:beta-series-bound} }\left|\sum_{i=1}^\infty\sum_{k=1}^n\omega_k \beta_i \sqrt{\mu_i}\phi_i(s_k)\right|^2 \\
				& = \bbE\sup_{\beta \text{ satisfies } \eqref{ineq:beta-series-bound} }\left|\sum_{i=1}^\infty\left(\beta_i\sqrt{1\vee \frac{\mu_i}{\delta^2}}\right)\left(\sqrt{\mu_i\wedge \delta^2}\sum_{k=1}^n\omega_k \phi_i(s_k)\right)\right|^2 \\
				& \leq 2\bbE \sum_{i=1}^\infty(\mu_i \wedge \delta^2) \left(\sum_{k=1}^n \omega_k \phi_i(s_k)\right)^2 \\ 
				& = 2 \sum_{i=1}^\infty (\mu_i\wedge\delta^2)\bbE_s\sum_{k=1}^n \phi_i^2(s_k) = 2n^2Q_n^2(\delta).
			\end{split}
		\end{equation*}
		Then, by Jensen's inequality, we have
		\begin{equation*}
			\cG_n(\delta) \leq \frac{1}{n}\left(\bbE \sup_{f \in \cF(\delta)}\left|\sum_{k=1}^n\omega_k f(s_k)\right|^2\right)^{1/2} \leq \sqrt{2}Q_n(\delta),
		\end{equation*}
		which proves one side of the second inequality in the statement;
		the other side can be obtained by utilizing the natural inequality that $\cR_n(\delta) \leq \sqrt{\pi/2} \cG_n(\delta)$~\citep[Lemma 4]{bartlett2002rademacher} and the inequality between $\cR_n(\delta)$ and $Q_n(\delta)$.
		
	\end{proof}
	
	Next, we provide an explicit form of $\zeta_n$ for the two special RKHSs studied in Section \ref{sec:theory-init}.
	\begin{Proposition}\label{prop:zeta-rate}
		Suppose $\cH$ is finite-dimensional such that $\sum_{k=1}^\infty 1_{\{\mu_k > 0\}} = d$ for some $\log n \leq d \leq n$,  then we have $\zeta_n = \sqrt{d/n}$; suppose $\cH = W^{\alpha,2}$ for $\alpha > 1/2$, then we have $\zeta_n \leq Cn^{-\alpha/(2\alpha+1)}$.
	\end{Proposition}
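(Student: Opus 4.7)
The overall strategy is to produce, for each kernel, an explicit upper envelope for $Q_n(\delta)$ on $\delta\in(0,1]$, and then read off the smallest $\zeta$ for which the function $\zeta\delta+\zeta^2$ dominates this envelope. Throughout, I will use that $Q_n(\delta)$ is non-decreasing in $\delta$, and that $\zeta\delta+\zeta^2$ splits naturally at $\delta=\zeta$: the linear term $\zeta\delta$ dominates for $\delta\ge\zeta$, while the constant term $\zeta^2$ dominates for $\delta\le\zeta$. So it suffices to check the defining inequality at the critical point $\delta=\zeta$ (for the ``small $\delta$'' regime, since $Q_n$ is non-decreasing) and then for $\delta\in[\zeta,1]$.

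For the finite-dimensional case, the bound $\mu_k\wedge\delta^2\le\delta^2$ and the fact that only $d$ eigenvalues are nonzero immediately give $Q_n(\delta)\le\delta\sqrt{d/n}$. Trying $\zeta=\sqrt{d/n}$, the linear part $\zeta\delta=\delta\sqrt{d/n}$ already dominates $Q_n(\delta)$ for every $\delta\in(0,1]$, so the defining inequality holds. The side condition $\zeta\ge\sqrt{\log n/n}$ is ensured by the assumption $d\ge\log n$. Hence $\zeta_n\le\sqrt{d/n}$.

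For the Sobolev case $\cH=W^{\alpha,2}$ with eigenvalues $\mu_k\asymp k^{-2\alpha}$, the main step is to control the truncation sum. Split at the threshold $k^\ast:=\lceil\delta^{-1/\alpha}\rceil$, the integer beyond which $\mu_k\le\delta^2$. Then
\begin{equation*}
\sum_{k=1}^\infty(\mu_k\wedge\delta^2)\ \lesssim\ k^\ast\delta^2+\sum_{k>k^\ast}k^{-2\alpha}\ \lesssim\ \delta^{2-1/\alpha},
\end{equation*}
using the integral bound $\sum_{k>k^\ast}k^{-2\alpha}\lesssim (k^\ast)^{1-2\alpha}\asymp\delta^{2-1/\alpha}$ (valid since $\alpha>1/2$). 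This gives the envelope $Q_n(\delta)\lesssim n^{-1/2}\delta^{1-1/(2\alpha)}$. Now try $\zeta=Cn^{-\alpha/(2\alpha+1)}$ for a suitable constant $C$. For $\delta\in[\zeta,1]$, dividing the required inequality $Q_n(\delta)\le\zeta\delta$ by $\delta$ reduces it to $n^{-1/2}\delta^{-1/(2\alpha)}\lesssim\zeta$; since the left side is decreasing in $\delta$, the worst case is $\delta=\zeta$, which yields exactly $\zeta^{1+1/(2\alpha)}\gtrsim n^{-1/2}$, i.e.\ $\zeta\gtrsim n^{-\alpha/(2\alpha+1)}$. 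For $\delta\in(0,\zeta)$, monotonicity gives $Q_n(\delta)\le Q_n(\zeta)\lesssim\zeta^2$, which is absorbed into $\zeta^2$. Finally, $n^{-\alpha/(2\alpha+1)}\ge\sqrt{\log n/n}$ for $n$ large, so the floor constraint is satisfied.

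The only real subtlety is verifying the critical-point calculation cleanly without tracking constants that would clutter the argument; the rest is a routine envelope-and-split argument. Carrying the constant $C$ through the Sobolev chain (so that the $\lesssim$ above becomes a true $\le$) is the mildly delicate point, and I would absorb the kernel-dependent constants from $\mu_k\asymp k^{-2\alpha}$ into $C$ at the outset.
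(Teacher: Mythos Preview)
Your proposal is correct. For the finite-dimensional case your argument matches the paper's exactly. For the Sobolev case you take a genuinely different route: you split the sum at the $\delta$-dependent threshold $k^\ast=\lceil\delta^{-1/\alpha}\rceil$, which yields the sharp single-term envelope $Q_n(\delta)\lesssim n^{-1/2}\delta^{1-1/(2\alpha)}$, and then you run a separate critical-point check to show this envelope fits under $\zeta\delta+\zeta^2$ with $\zeta\asymp n^{-\alpha/(2\alpha+1)}$. The paper instead splits at the \emph{$n$-dependent} threshold $k^\ast=\lceil n^{1/(2\alpha+1)}\rceil$, bounding the head by $k^\ast\delta^2$ and the tail by $\sum_{k>k^\ast}k^{-2\alpha}\lesssim (k^\ast)^{1-2\alpha}$; this immediately produces the two-term form $Q_n(\delta)\le C(n^{-\alpha/(2\alpha+1)}\delta+n^{-2\alpha/(2\alpha+1)})$, so the value of $\zeta_n$ can be read off without any further verification. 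The paper's split is more economical for this particular statement since it lands directly in the $\zeta\delta+\zeta^2$ shape; your approach gives a tighter intermediate bound on $Q_n$ itself (optimal in $\delta$ up to constants), at the cost of the extra critical-point step. Both are clean and correct.
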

	\begin{proof}
		Note that similar results can be found in \citep{koltchinskii2010sparsity,raskutti2012minimax}, and we present it here for completeness.
		If $\cH$ is finite-dimensional, we have
		\begin{equation*}
			Q_n(\delta) = \frac{1}{\sqrt{n}}\left(\sum_{k=1}^d \mu_k \vee \delta^2 \right)^{1/2} \leq \sqrt{\frac{d}{n}}\delta.
		\end{equation*}
		Specifying $\zeta_n = \sqrt{d/n}$ naturally yields $Q_n(\delta) \leq \zeta_n\delta + \zeta_n^2$ for any $\delta \in (0,1]$.
		
		Now suppose $\cH = W^{\alpha,2}$ and we have $\mu_k \leq Ck^{-2\alpha}$. Then one can bound
		\begin{equation*}
			\begin{split}
				Q_n(\delta) & \leq \frac{C}{\sqrt{n}} \left(\sum_{k=1}^{\lceil n^{1/(2\alpha+1)}\rceil}(k^{-2\alpha} \wedge \delta^2) + \sum_{k=\lceil n^{1/(2\alpha+1)}\rceil+1}^\infty (k^{-2\alpha} \wedge \delta^2) \right)^{1/2} \\
				& \leq C\left(\sqrt{\frac{\lceil n^{1/(2\alpha+1)}\rceil}{\sqrt{n}}} \delta + \sqrt{\frac{\sum_{k=\lceil n^{1/(2\alpha+1)}\rceil+1}^\infty k^{-2\alpha}}{n}}\right) \\
				& \leq C\left(n^{-\frac{\alpha}{2\alpha+1}}\delta + \sqrt{\frac{\left(n^{1/(2\alpha+1)}\right)^{1-2\alpha}}{n}}\right) \\
				& = C\left(n^{-\frac{\alpha}{2\alpha+1}}\delta + n^{-\frac{2\alpha}{2\alpha+1}}\right).
			\end{split}
		\end{equation*}
		Therefore, we have $\zeta_n \leq Cn^{-\frac{\alpha}{2\alpha+1}}$.
	\end{proof}

	\section{Technical Lemmas}\label{sec:lemma}
	\begin{Lemma}\label{lm:functional-Rademacher}
		Let $\mathcal F = \{f \in [0,1]: \|f\|_\infty \leq b\}$ and $\mathcal F_\alpha = \{f \in \mathcal F: \|f\|_{\cL^2} \leq \alpha\}$ for some $b,\alpha>0$. Let $s_1,\ldots,s_n \overset{i.i.d.}{\sim} \text{Unif}(0,1)$. Then for any $x>0$, 
		\begin{equation*}
			\begin{split}
				& \bbP\left(\sup_{f \in \mathcal F_\alpha} \left\{\int_0^1 f(t) dt - \frac{1}{n}\sum_{s=1}^n f(s_k)\right\} \leq 4\mathcal R_n \mathcal F_\alpha + \alpha\sqrt{\frac{2 x}{n}} + \frac{2bx}{n}\right) \geq 1 - e^{-x},
			\end{split}
		\end{equation*}
		where $\mathcal R_n \mathcal F_\alpha = \bbE\sup_{f \in \mathcal F_\alpha} \frac{1}{n}\sum_{k=1}^n\epsilon_k f(s_k)$ with $\{\epsilon_k\}_{k=1}^n$ being i.i.d. Rademacher variables.
	\end{Lemma}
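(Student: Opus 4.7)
The plan is to combine a variance-aware concentration inequality for the supremum of an empirical process with the classical symmetrization bound. Set
$$Z := \sup_{f \in \cF_\alpha}\left\{\int_0^1 f(t)\,dt - \frac{1}{n}\sum_{k=1}^n f(s_k)\right\}.$$
Every $f \in \cF_\alpha$ satisfies $\|f\|_\infty \leq b$ and $\Var(f(s_1)) \leq \bbE f(s_1)^2 = \|f\|_{\cL^2}^2 \leq \alpha^2$. Applying Bousquet's form of Talagrand's inequality to the class $-\cF_\alpha$ (which satisfies identical sup-norm and variance bounds, and whose centered empirical-process supremum is exactly $Z$) yields, with probability at least $1 - e^{-x}$,
$$Z \leq \bbE Z + \sqrt{\frac{2x(\alpha^2 + 2b\,\bbE Z)}{n}} + \frac{bx}{3n}.$$

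Next, I would decouple the cross term via subadditivity of the square root and the AM-GM estimate $\sqrt{4bx\,\bbE Z/n} \leq \bbE Z + bx/n$, which together produce
$$Z \leq 2\,\bbE Z + \alpha\sqrt{\frac{2x}{n}} + \frac{4bx}{3n}.$$
Finally, the classical symmetrization inequality gives
$$\bbE Z \leq 2\,\bbE\sup_{f \in \cF_\alpha}\left|\frac{1}{n}\sum_{k=1}^n \epsilon_k f(s_k)\right| = 2\,\cR_n\cF_\alpha,$$
so $2\bbE Z \leq 4\cR_n\cF_\alpha$; together with $\tfrac{4}{3} \leq 2$, this delivers the advertised bound $4\cR_n\cF_\alpha + \alpha\sqrt{2x/n} + 2bx/n$.

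The delicate point is in selecting the variance-sensitive form of Talagrand's inequality: a direct application of McDiarmid's bounded-differences inequality (since changing a single $s_k$ moves $Z$ by at most $2b/n$) would only produce a deviation term of order $b\sqrt{x/n}$ rather than the sharper $\alpha\sqrt{x/n}$, and the latter is essential in the regime $\alpha \ll b$ which is precisely where this lemma is invoked elsewhere in the paper (to control $\zeta_n$-type local complexities). Once Bousquet's inequality is in hand, the remaining algebra together with the standard symmetrization step is entirely routine, and the countability/separability hypothesis needed to apply Talagrand's inequality can be handled by passing to a countable dense subclass of $\cF_\alpha$ in the usual way.
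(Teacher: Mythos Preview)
Your proposal is correct and follows essentially the same route as the paper: the paper simply invokes Theorem~2.1 of Bartlett--Bousquet--Mendelson (2005), and that theorem is proved precisely by combining Bousquet's version of Talagrand's inequality with symmetrization, which is exactly what you have unpacked. Your commentary on why the variance-sensitive form (yielding $\alpha\sqrt{x/n}$ rather than $b\sqrt{x/n}$) is essential is accurate and matches how the lemma is used downstream.
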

	\begin{proof}
		The inequality comes from a direct application of \cite[Theorem 2.1]{bartlett2005local} by setting $a=0$, $r = \alpha^2$. 
	\end{proof}
	
	\bigskip
	\begin{Lemma}\label{lm:MC-int}
		Let $\{s_k\}_{k=1}^n$ be a collection of i.i.d. uniform random variables in $[0,1]$ and assume $\zeta_n < 1$. Then,
		\begin{equation*}
			\mathbb P\left(\left|\int_0^1 f(t)dt - \frac{1}{n}\sum_{k=1}^n f(s_k)\right|\leq C\left(\zeta_n\|f\|_{\cL^2} + \zeta_n^2 \|f\|_{\mathcal H}\right) ,\quad \forall f \in \mathcal H\right) \geq 1-Cn^{-9}.
		\end{equation*}
	\end{Lemma}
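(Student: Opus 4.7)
The plan is to prove the uniform bound via a peeling (stratification) argument that combines the localized Rademacher complexity bound in Lemma \ref{lm:functional-Rademacher} with the identification $\cR_n\cF_\alpha \asymp Q_n(\alpha)$ from Proposition \ref{prop:Qn-complexity} and the fixed-point inequality $Q_n(\alpha) \leq \zeta_n\alpha + \zeta_n^2$ coming from the very definition \eqref{eq:zeta_n}. Since both sides of the claimed inequality are $1$-homogeneous in $f$, I first reduce to the case $\|f\|_\cH \leq 1$. Under Assumption \ref{asmp:K-regularity}(a), this further implies $\|f\|_\infty \leq \|f\|_\cH \leq 1$ and $\|f\|_{\cL^2} \leq 1$, so the candidate class is uniformly bounded by $b=1$.

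For each $\alpha \in (0,1]$, introduce $\cF_\alpha := \{f \in \cH : \|f\|_\cH \leq 1,\ \|f\|_{\cL^2}\leq\alpha\}$ and the centered functional $\Phi(f):=\int_0^1 f(t)\,dt - n^{-1}\sum_{k=1}^n f(s_k)$. Applying Lemma \ref{lm:functional-Rademacher} with $b=1$ and $x = 10\log n$ yields, with probability at least $1-n^{-10}$,
$$
\sup_{f\in\cF_\alpha}\Phi(f) \;\leq\; 4\cR_n\cF_\alpha + \alpha\sqrt{20\log n/n} + 20\log n/n.
$$
By Proposition \ref{prop:Qn-complexity}, $\cR_n\cF_\alpha \lesssim Q_n(\alpha)$, and by the definition of $\zeta_n$ we have $Q_n(\alpha) \leq \zeta_n\alpha + \zeta_n^2$. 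Since $\zeta_n \geq \sqrt{\log n/n}$, the deterministic tail terms $\alpha\sqrt{\log n/n}$ and $\log n/n$ are also $\lesssim \zeta_n\alpha + \zeta_n^2$. Because $\cF_\alpha$ is symmetric under $f\mapsto -f$, the same bound holds for $\sup_{f \in \cF_\alpha}|\Phi(f)|$, giving
$$
\sup_{f\in\cF_\alpha}|\Phi(f)| \;\leq\; C(\zeta_n\alpha + \zeta_n^2)
\qquad\text{w.p.}\geq 1-n^{-10}.
$$

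Now peel: take $\alpha_j = 2^{-j}$ for $j=0,1,\ldots,K$ with $K = \lceil\log_2(1/\zeta_n)\rceil = O(\log n)$ (using $\zeta_n \geq \sqrt{\log n/n}$). A union bound over these $K+1$ scales keeps the total failure probability $\lesssim n^{-10}\log n \leq Cn^{-9}$, so with that probability the preceding display holds simultaneously for all $\alpha_j$. Given any $f$ with $\|f\|_\cH\leq 1$: if $\|f\|_{\cL^2}>\zeta_n$, pick $j$ so that $\alpha_j/2 < \|f\|_{\cL^2} \leq \alpha_j$, giving $|\Phi(f)| \leq C(\zeta_n\alpha_j+\zeta_n^2) \leq C'(\zeta_n\|f\|_{\cL^2}+\zeta_n^2)$; if $\|f\|_{\cL^2}\leq\zeta_n$, apply the bound at $j=K$ so that $|\Phi(f)| \leq C(\zeta_n\alpha_K+\zeta_n^2) \lesssim \zeta_n^2$. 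Either way $|\Phi(f)| \leq C'(\zeta_n\|f\|_{\cL^2}+\zeta_n^2)$ whenever $\|f\|_\cH\leq 1$; applying this to $f/\|f\|_\cH$ and rescaling yields the claim $|\Phi(f)| \leq C'(\zeta_n\|f\|_{\cL^2}+\zeta_n^2\|f\|_\cH)$ for every $f\in\cH$.

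The main obstacle is not in any individual step but in correctly translating the definition of $\zeta_n$ as a sub-root-type fixed point into a uniform interpolating bound across the two regimes $\|f\|_{\cL^2}\gtrsim\zeta_n$ and $\|f\|_{\cL^2}\ll\zeta_n$; this is exactly what the dyadic peeling achieves, and the cost is only an $O(\log n)$ factor in failure probability, which is harmless because Lemma \ref{lm:functional-Rademacher} gives a Bernstein-type (exponential) tail. A minor but essential verification is the uniform boundedness $\|f\|_\infty \leq \|f\|_\cH$ from Assumption \ref{asmp:K-regularity}(a), which is what legitimizes the choice $b=\|f\|_\cH$ in Lemma \ref{lm:functional-Rademacher}.
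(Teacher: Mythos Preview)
Your proposal is correct and follows essentially the same approach as the paper's proof: reduce to $\|f\|_\cH\leq 1$, apply Lemma~\ref{lm:functional-Rademacher} with $b=1$ and $x=10\log n$ at dyadic scales $\alpha_j=2^{-j}$ down to $\zeta_n$, bound $\cR_n\cF_{\alpha_j}$ via Proposition~\ref{prop:Qn-complexity} and the defining inequality $Q_n(\alpha)\leq\zeta_n\alpha+\zeta_n^2$, and union bound over $O(\log n)$ slices. Your write-up is in fact slightly more careful than the paper's in that you explicitly invoke the symmetry $f\mapsto -f$ to pass from the one-sided bound of Lemma~\ref{lm:functional-Rademacher} to the absolute value.
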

	\begin{proof}
		When $\|f\|_\cH = 0$, we have $f = 0$ and the proof is trivial; when $\|f\|_\cH > 0$, it suffices to show the inequality holds for all $f \in \cH$ such that $\|f\|_\cH = 1$. By the definition of $\zeta_n$, we know that $\sqrt{\log n/n} \leq \zeta_n < 1$. Therefore, for any $n \geq 2$, we can find some $J \in \mathbb N$ such that $2^{-J} \leq \zeta_n \leq 2^{-J+1}$, and $J \leq 1 + \frac{1}{2}\log_2(n/\log n) \leq 3\log n$.
		
		Now we apply Lemma \ref{lm:functional-Rademacher} sequentially. For any $j \in \{1,\ldots,J\}$, let $\cF_j:=\{f: \|f\|_\cH = 1, 2^{-j} \leq \|f\|_{\cL^2} \leq 2^{-j+1}\}$. Then we have with probability at least $1-e^{-x}$ that,
		\begin{equation}\label{ineq:Rn-peeling-j}
			\begin{split}
				\sup_{\substack{f \in \cF_j}}\left|\int_{0}^1 f(t)dt - \frac{1}{n}\sum_{k=1}^n f(s_k)\right| & \leq C\left(\mathcal R_n \cF_j + 2^{-j+1}\sqrt{\frac{x}{n}} + \frac{x}{n}\right) \\
				& \leq C\left(\zeta_n 2^{-j+1} + \zeta_n^2  + 2^{-j+1}\sqrt{\frac{x}{n}} + \frac{x}{n}\right) \\
				& \leq C\left(2\zeta_n \|f\|_{\cL^2} + \zeta_n^2  + 2\|f\|_{\cL^2}\sqrt{\frac{x}{n}} + \frac{x}{n}\right).
			\end{split}
		\end{equation}
		Here the first inequality comes from the facts that $\|f\|_{\cL^2} \leq 2^{-j+1}$ and $\|f\|_\infty \leq  \|f\|_\cH \leq 1$ for any $f \in \cF_j$, while the second inequality comes from the relationship between $\cR_n \cF_j$ and $\zeta_n$ (Proposition \ref{prop:Qn-complexity}) and the definition of $\zeta_n$: 
		\begin{equation*}
			\bbE \cR_n \cF_j \leq C Q_n(2^{-j+1}) \leq C\left(\zeta_n 2^{-j+1} + \zeta_n^2\right).
		\end{equation*}
		Setting $x = 10\log n$ (so that $\sqrt{x/n}\leq \sqrt{10}\zeta_n$), we have with probability at least $1-n^{-10}$ that
		\begin{equation}\label{ineq:Rn-peeling-term}
			\begin{split}
				\sup_{\substack{f \in \cF_j}}\left|\int_{0}^1 f(t)dt - \frac{1}{n}\sum_{k=1}^n f(s_k)\right| \leq C\left(\zeta_n \|f\|_{\cL^2} + \zeta_n^2  \right).
			\end{split}	
		\end{equation}
		Then by a union bound argument, 
		\begin{equation*}
			\bbP\left(\sup_{\substack{f: \|f\|_\cH=1, \|f\|_{\cL^2}\geq 2^{-J}}}\left|\int_{0}^1 f(t)dt - \frac{1}{n}\sum_{k=1}^n f(s_k)\right| \geq C\left(\zeta_n \|f\|_{\cL^2} + \zeta_n^2  \right)\right) \leq J n^{-10}.
		\end{equation*}
		For $f$ such that $\|f\|_{\cL^2} \leq 2^{-J} \leq \zeta_n$, we directly apply Lemma \ref{lm:functional-Rademacher}. By the similar argument as \eqref{ineq:Rn-peeling-j} and \eqref{ineq:Rn-peeling-term}, we know that with probability at least $1-n^{-10}$,
		\begin{equation}\label{ineq:Rn-small-term}
			\sup_{\substack{f: \|f\|_\cH = 1, \|f\|_{\cL^2}\leq 2^{-J}}}\left|\int_{0}^1 f(t)dt - \frac{1}{n}\sum_{k=1}^n f(s_k)\right| \leq C\left(2^{-J}\zeta_n + \zeta_n^2\right) \leq C\zeta_n^2.
		\end{equation}
		Combining \eqref{ineq:Rn-peeling-term} and \eqref{ineq:Rn-small-term}, we conclude that
		\begin{equation*}
			\bbP\left(\sup_{\substack{f: \|f\|_\cH=1}}\left|\int_{0}^1 f(t)dt - \frac{1}{n}\sum_{k=1}^n f(s_k)\right| \leq C\left(\zeta_n \|f\|_{\cL^2} + \zeta_n^2  \right)\right) \leq (J+1) n^{-10} \leq Cn^{-9},
		\end{equation*}
		and the proof is complete.
	\end{proof}
	
	\bigskip
	\begin{Lemma}\label{lm:MC-square-int}
		Let $\{s_k\}_{k=1}^n$ be a collection of i.i.d. uniform random variables on $[0,1]$ and $\zeta_n<1$. Then,
		\begin{equation*}
			\mathbb P\left( \left|\|f\|_{\cL^2}^2 - \frac{1}{n}\sum_{k=1}^n f^2(s_k)\right| \leq C\|f\|_\cH\cdot \left(\zeta_n\|f\|_{\cL^2} + \zeta_n^2\|f\|_\cH\right),\quad \forall f \in \cH \right) \geq 1-n^{-9}.
		\end{equation*}
	\end{Lemma}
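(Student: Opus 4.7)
The plan is to reduce this statement to Lemma \ref{lm:MC-int} applied to $f^2$ rather than $f$, since the left-hand side is exactly $\left|\int_0^1 f^2(t)\,dt - \frac{1}{n}\sum_{k=1}^n f^2(s_k)\right|$. The previous lemma gives a uniform bound over $\cH$, so I just need to verify that $f^2 \in \cH$ and to relate $\|f^2\|_{\cL^2}$ and $\|f^2\|_\cH$ back to the norms of $f$ itself.

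The two key inputs are both supplied by Assumption \ref{asmp:K-regularity}. First, the multiplicative inequality \eqref{ineq:H-cauchy} gives $\|f^2\|_\cH \leq C_\cH \|f\|_\cH^2$, which in particular guarantees $f^2 \in \cH$ so that Lemma \ref{lm:MC-int} is applicable. Second, the chain of inequalities derived right after Assumption \ref{asmp:K-regularity} yields $\|f\|_\infty \leq \|f\|_\cH$, and therefore
\begin{equation*}
\|f^2\|_{\cL^2} = \left(\int_0^1 f^4(t)\,dt\right)^{1/2} \leq \|f\|_\infty \|f\|_{\cL^2} \leq \|f\|_\cH \|f\|_{\cL^2}.
\end{equation*}

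With these two bounds in hand, the proof is a one-line substitution: on the high-probability event from Lemma \ref{lm:MC-int} (which has probability at least $1-Cn^{-9}$), we apply that lemma with $f$ replaced by $f^2$ to get
\begin{equation*}
\left|\|f\|_{\cL^2}^2 - \tfrac{1}{n}\sum_{k=1}^n f^2(s_k)\right| \leq C\bigl(\zeta_n \|f^2\|_{\cL^2} + \zeta_n^2 \|f^2\|_\cH\bigr) \leq C\|f\|_\cH\bigl(\zeta_n \|f\|_{\cL^2} + C_\cH\zeta_n^2 \|f\|_\cH\bigr),
\end{equation*}
which is the desired inequality after absorbing $C_\cH$ into the generic constant $C$. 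Note that the event from Lemma \ref{lm:MC-int} is a statement uniform in $f$, so the same event covers all $f \in \cH$ here; no additional union bound is needed.

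I do not anticipate any serious obstacle: the only conceptual content is recognizing that Assumption \ref{asmp:K-regularity} exactly provides what is needed to ``lift'' the linear Monte Carlo bound of Lemma \ref{lm:MC-int} to the quadratic functional $f \mapsto \|f\|_{\cL^2}^2$. The only minor point to verify carefully is the probability accounting, since Lemma \ref{lm:MC-int} yields $1 - Cn^{-9}$ while the statement here is $1 - n^{-9}$; this can be reconciled either by adjusting the constant $C$ in the conclusion or by tracking the constants in the proof of Lemma \ref{lm:MC-int} slightly more tightly (e.g., taking $x = 11\log n$ in place of $10\log n$ to compensate).
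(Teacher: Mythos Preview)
Your proposal is correct and follows essentially the same approach as the paper: set $g=f^2$, bound $\|g\|_{\cL^2}\le\|f\|_\infty\|f\|_{\cL^2}\le\|f\|_\cH\|f\|_{\cL^2}$ and $\|g\|_\cH\le C_\cH\|f\|_\cH^2$ via Assumption~\ref{asmp:K-regularity}, then invoke Lemma~\ref{lm:MC-int}. The paper's proof is the same one-line substitution and does not address the $Cn^{-9}$ versus $n^{-9}$ discrepancy you flagged.
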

	\begin{proof}
		Let $g := f^2$. Then $\|g\|_{\cL^2} = \|f^2\|_{\cL^2} \leq \|f\|_\infty \|f\|_{\cL^2} \leq \|f\|_\cH \|f\|_{\cL^2}$ and $\|g\|_\cH \leq C_\cH \|f\|_\cH^2$. Applying Lemma \ref{lm:MC-int} on $g$, we conclude the proof.
	\end{proof}
	
	\bigskip
	\begin{Lemma}\label{lm:Gaussian-complexity}
		Suppose $\varepsilon_{ijk} \overset{i.i.d.}{\sim} N(0,1)$, and $\{s_k\}_{k=1}^n$ are i.i.d. uniform random variables on $[0,1]$. Assume $\zeta_n \leq c_1$ for some sufficiently small constant $c_1$. Then, with probability at least $1-Cn^{-9}-C\log n \exp(-c(p_1+p_2))$, for any $f \in \cH$,
		\begin{equation*}
			\sup_{a \in \mathbb S^{p_1-1}, b \in \mathbb {S}^{p_2-1}} \frac{1}{n}\sum_{i=1}^{p_1}\sum_{j=1}^{p_2}\sum_{k=1}^n a_ib_jf(s_k)\varepsilon_{ijk} \lesssim \left(\zeta_n^2 + \frac{(p_1+p_2)\log n}{n}\right)\|f\|_\cH + \left(\zeta_n + \sqrt{\frac{p_1+p_2}{n}}\right)\|f\|_{\cL^2}.
		\end{equation*}
	\end{Lemma}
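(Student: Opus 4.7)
Define the trilinear Gaussian form
\[
X_{a,b,f} \;:=\; \frac{1}{n}\sum_{i=1}^{p_1}\sum_{j=1}^{p_2}\sum_{k=1}^{n} a_i\, b_j\, f(s_k)\,\varepsilon_{ijk},
\]
so the quantity to bound is $\sup_{a,b} X_{a,b,f}$. Both sides of the target inequality are linear under $f \mapsto cf$, so by rescaling it suffices to prove the estimate uniformly over $f \in \cH$ with $\|f\|_{\cH} \le 1$. Throughout, I condition on $\{s_k\}$ and on the event of Lemma \ref{lm:MC-square-int}, on which $\|f\|_n^2 \le 2\|f\|_{\cL^2}^2 + C\zeta_n^2\|f\|_\cH^2$ uniformly in $f$; this event has probability at least $1 - Cn^{-9}$.

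\textbf{Step 1 (net on the tabular modes).} For fixed $f$, the map $(a,b)\mapsto X_{a,b,f}$ is the bilinear form of the matrix $M(f)\in\bbR^{p_1\times p_2}$ with $M_{ij}(f)=n^{-1}\sum_k f(s_k)\varepsilon_{ijk}$. Picking $(1/4)$-nets $\cN_a\subset\bbS^{p_1-1}$, $\cN_b\subset\bbS^{p_2-1}$ with $|\cN_a|\le 12^{p_1}$, $|\cN_b|\le 12^{p_2}$, the standard discretization gives $\sup_{a,b}X_{a,b,f}\le 2\max_{a\in\cN_a,b\in\cN_b}X_{a,b,f}$, valid for every $f$, so
\[
\sup_{f}\sup_{a,b} X_{a,b,f} \;\le\; 2\max_{a\in\cN_a,\,b\in\cN_b}\sup_{f} X_{a,b,f}.
\]
It therefore suffices to control $\sup_{f}X_{a,b,f}$ for each $(a,b)$ in the net and then pay a union bound of size $12^{p_1+p_2}$.

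\textbf{Step 2 (Gaussian supremum for fixed $(a,b)$).} Conditional on $\{s_k\}$, the process $f\mapsto X_{a,b,f}$ is centered Gaussian with variance $\|f\|_n^2/n$, since $\tilde\varepsilon_k:=\sum_{i,j}a_ib_j\varepsilon_{ijk}$ are i.i.d.\ $N(0,1)$. On the shell $\cF(\delta):=\{f\in\cH:\|f\|_\cH\le 1,\ \|f\|_{\cL^2}\le \delta\}$, Proposition \ref{prop:Qn-complexity} together with the definition of $\zeta_n$ gives
\[
\bbE_\varepsilon\sup_{f\in\cF(\delta)} X_{a,b,f} \;=\; \cG_n(\delta) \;\le\; C\,Q_n(\delta) \;\le\; C(\zeta_n\delta+\zeta_n^2),
\]
while the event of Lemma \ref{lm:MC-square-int} controls the variance proxy by $\sup_{f\in\cF(\delta)}\|f\|_n^2/n \le C(\delta+\zeta_n)^2/n$. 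The Borell--TIS inequality then yields, for any $u>0$,
\[
\bbP\!\left(\sup_{f\in\cF(\delta)} X_{a,b,f} \ge C(\zeta_n\delta+\zeta_n^2)+u\right)\le 2\exp\!\left(-\frac{u^2 n}{C(\delta+\zeta_n)^2}\right).
\]
Choosing $u=C_1(\delta+\zeta_n)\sqrt{(p_1+p_2)/n}$ with $C_1$ large and union-bounding over the $12^{p_1+p_2}$ net points, with probability at least $1-\exp(-c(p_1+p_2))$,
\[
\sup_{a,b}\sup_{f\in\cF(\delta)}X_{a,b,f} \;\le\; C\bigl[\zeta_n\delta+\zeta_n^2 + (\delta+\zeta_n)\sqrt{(p_1+p_2)/n}\bigr].
\]

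\textbf{Step 3 (peeling and conclusion).} Choose $J=\lceil\log_2(1/\zeta_n)\rceil\le C\log n$ and apply Step 2 at $\delta_j=2^{-j+1}$ for $j=0,1,\dots,J$; this incurs an additional factor $\log n$ in the union bound and total failure probability $\le Cn^{-9}+C\log n\cdot\exp(-c(p_1+p_2))$. For any $f$ with $\|f\|_\cH\le 1$, either $\|f\|_{\cL^2}\le \zeta_n$ (apply the $j=J$ bound) or $\|f\|_{\cL^2}\in(\delta_{j+1},\delta_j]$ for some $j$, and plugging $\delta_j\le 2\|f\|_{\cL^2}+2\zeta_n$ into the display gives
\[
X_{a,b,f} \;\lesssim\; \|f\|_{\cL^2}\bigl(\zeta_n+\sqrt{(p_1+p_2)/n}\bigr) + \zeta_n^2 + \zeta_n\sqrt{(p_1+p_2)/n}.
\]
Absorbing the cross term via AM--GM, $\zeta_n\sqrt{(p_1+p_2)/n}\le \zeta_n^2+(p_1+p_2)/n\le \zeta_n^2+(p_1+p_2)\log n/n$, and undoing the rescaling $\|f\|_\cH\le 1$ delivers the stated bound.

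\textbf{Main obstacle.} The delicate point is cleanly interleaving the three suprema. The $(a,b)$ directions are handled by the routine $1/4$-net argument that gives the $\sqrt{(p_1+p_2)/n}$ scaling, but the $f$ direction demands the sharp bound $Q_n(\delta)\le \zeta_n\delta+\zeta_n^2$ coming from the very definition of $\zeta_n$; this is what forces the coefficient of $\|f\|_{\cL^2}$ to be $\zeta_n$ rather than a constant. Matching the Borell--TIS variance proxy to the shell level $\delta$, so that the peeling telescope in $\|f\|_{\cL^2}$ closes against $\|f\|_{\cL^2}$ itself rather than against $1$, is the core calculation and the source of the $(p_1+p_2)\log n/n$ slack in the final estimate.
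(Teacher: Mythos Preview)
Your overall architecture---net on $(a,b)$, Gaussian concentration at each shell $\cF(\delta)$, then peeling in $\delta$---matches the paper's. The gap is in Step~2, where you write
\[
\bbE_\varepsilon\sup_{f\in\cF(\delta)} X_{a,b,f}\;=\;\cG_n(\delta).
\]
You have announced that you condition on $\{s_k\}$ throughout, and indeed you must: the process $f\mapsto X_{a,b,f}$ is Gaussian only given $\{s_k\}$, so Borell--TIS applies only conditionally. But then the left-hand side above is the \emph{conditional} mean, a random function of $\{s_k\}$, whereas $\cG_n(\delta)$ in Proposition~\ref{prop:Qn-complexity} is the \emph{full} expectation over both $\omega$ and $s$. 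These are not equal, and Proposition~\ref{prop:Qn-complexity} says nothing about the conditional quantity. Nor does the event of Lemma~\ref{lm:MC-square-int} help: following the proof of Proposition~\ref{prop:Qn-complexity} conditionally gives $\bbE_\varepsilon[(\sup_f X)^2]\le \tfrac{2}{n}\sum_i(\mu_i\wedge\delta^2)\|\phi_i\|_n^2$, but the eigenfunctions satisfy $\|\phi_i\|_\cH=\mu_i^{-1/2}$, so Lemma~\ref{lm:MC-square-int} does not control $\|\phi_i\|_n^2$ uniformly in $i$, and you cannot reduce this to $Q_n^2(\delta)$ on the stated event.

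The paper avoids this entirely by \emph{not} conditioning on $\{s_k\}$. It treats $(\omega_k,s_k)$ as a single i.i.d.\ observation and applies Adamczak's Talagrand-type inequality for empirical processes with unbounded envelope \cite[Theorem~4]{adamczak2008tail} to the class $\{(\omega,s)\mapsto n^{-1}\omega g(s):g\in\cF(\delta)\}$. This concentrates $\hat Z(\omega;\delta)$ directly around the unconditional mean $\cG_n(\delta)$, with a sub-Gaussian tail governed by the weak variance $\sigma^2=\delta^2/n$ and a sub-exponential tail governed by the $\psi_1$-envelope $\|\max_k\sup_g|\omega_kg(s_k)|/n\|_{\psi_1}\lesssim(\log n)/n$. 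The sub-exponential tail is precisely what forces the choice $t\gtrsim (p_1+p_2)\log n/n$ and produces that term in the final bound; in your argument it appears only as an artifact of the AM--GM on $\zeta_n\sqrt{(p_1+p_2)/n}$, but that step is premature until the conditional mean is actually controlled. Your approach is salvageable if you add a separate high-probability bound (over $\{s_k\}$) on the conditional Gaussian width---for instance via Bernstein on $\tfrac{1}{n}\sum_k\sum_i(\mu_i\wedge\delta^2)\phi_i^2(s_k)$, uniformly over the peeling grid---but as written, Step~2 does not go through.
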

	\begin{proof}
		We assume $\|f\|_\cH = 1$ without loss of generality. Let $\{a^{(1)},\ldots, a^{(N_a)}\} \subset \bbS^{p_1-1}$ and $\{b^{(1)},\ldots, b^{(N_b)}\} \subset \bbS^{p_2-1}$ be the $\varepsilon$-nets of $\bbS^{p_1-1}$ and $\bbS^{p_2-1}$ respectively with $N_a \leq (3/\varepsilon)^{p_1}$ and $N_b \leq (3/\varepsilon)^{p_2}$ (they can be constructed based on \cite[Corollary 4.2.13]{vershynin2018high}). Then for any $i_a \in [N_a]$ and $i_b \in [N_b]$, we define the following quantities: 
		\begin{equation*}
			\begin{split}
				\hat Z_{i_a,i_b}(\varepsilon; \delta) & = \sup_{\substack{\|g\|_\cH \leq 1 \\ \|g\|_{\cL^2} \leq \delta}}\left|\frac{1}{n}\sum_{i=1}^{p_1}\sum_{j=1}^{p_2}\sum_{k=1}^n a_i^{(i_a)}b_j^{(i_b)}g(s_k)\varepsilon_{ijk}\right|. 
			\end{split}
		\end{equation*}	
		
		We aim to show the following concentration inequality of $ \hat Z_{i_a,i_b}(\varepsilon; \delta)$ for each pair of fixed $(i_a,i_b)$:
		\begin{equation}\label{ineq:concen-hat-Z}
			\begin{split}
				&\mathbb P\left(\hat Z_{i_a,i_b}(\varepsilon; \delta) > C\left(\zeta_n^2 + \frac{(p_1+p_2)\log n}{n} + \delta\left(\zeta_n + \sqrt{\frac{p_1+p_2}{n}}\right)\right),~~\exists \delta \in (0,1] \right) \\ & \qquad \leq C\log n\cdot \exp(-C(p_1+p_2)).
			\end{split}
		\end{equation}
		We first complete the proof given \eqref{ineq:concen-hat-Z}. By applying union bounds, we obtain that with probability at least $1-C\log n\exp\left(-c(p_1 + p_2)\right)$,
		\begin{equation}\label{ineq:conecn-hat-Z-max}
			\max_{i_a \in [N_a], i_b \in [N_b]} \hat Z_{i_a,i_b}(\varepsilon; \delta) \leq C\left(\zeta_n^2 + \frac{(p_1+p_2)\log n}{n} + \delta\left(\zeta_n + \sqrt{\frac{p_1+p_2}{n}}\right)\right), \qquad \forall \delta \in (0,1].
		\end{equation}
		Now for any $f$, let $a_f \in \bbS^{p_1-1}, b_f \in \bbS^{p_2-1}$ such that
		\begin{equation*}
			T := \sup_{a \in \mathbb S^{n-1}, b \in \mathbb {S}^{p-1}} \frac{1}{n}\sum_{i=1}^{p_1}\sum_{j=1}^{p_2}\sum_{k=1}^n a_ib_jf(s_k)\varepsilon_{ijk} = \frac{1}{n}\sum_{i=1}^{p_1}\sum_{j=1}^{p_2}\sum_{k=1}^n (a_f)_i(b_f)_jf(s_k)\varepsilon_{ijk}.
		\end{equation*}
		Let $l_a \in [N_a], l_b \in [N_b]$ such that $\|a_f - a^{(l_a)}\| \leq \varepsilon$, $\|b_f - b^{(l_b)}\| \leq \varepsilon$. Then we have
		\begin{equation*}
			\begin{split}
				T & = \frac{1}{n}\sum_{i=1}^{p_1}\sum_{j=1}^{p_2}\sum_{k=1}^n \left(a_i^{(l_a)} + (a_f)_i - a_i^{(l_a)}\right)\left(b_j^{(l_b)} + (b_f)_j - b_j^{(l_b)}\right)f(s_k)\varepsilon_{ijk} \\
				& \leq \hat Z_{i_a,i_b}(\varepsilon; \|f\|_{\cL^2}) + (\varepsilon+\varepsilon^2) T.
			\end{split}
		\end{equation*}
		Setting $\varepsilon = 1/2$, we obtain 
		\begin{equation*}
			T \leq 4\max_{i_a \in [N_a], i_b \in [N_b]} \hat Z_{i_a,i_b}(\varepsilon; \|f\|_{\cL^2}),
		\end{equation*}
		and the conclusion can be obtained by applying \eqref{ineq:conecn-hat-Z-max}.
		
		Now we focus on the proof for \eqref{ineq:concen-hat-Z}. First note that we can rewrite
		\begin{equation*} 
			\hat Z_{i_a,i_b}(\varepsilon, \delta) = \sup_{\substack{\|g\|_\cH \leq 1 \\ \|g\|_{\cL^2} \leq \delta}}\left|\frac{1}{n}\sum_{k=1}^n\left(\sum_{i=1}^{p_1}\sum_{j=1}^{p_2} a_ib_j \varepsilon_{ijk}\right)  g(s_k)\right|.
		\end{equation*}
		Since both $a^{(i_a)},b^{(i_b)}$ are unit vectors, $\left\{\sum_{i=1}^{p_1}\sum_{j=1}^{p_2} a_ib_j \varepsilon_{ijk}\right\}_{k=1}^n$ are always i.i.d. standard normal random variables and the distribution of $\hat Z_{i_a,i_b}(\varepsilon; \delta)$ is independent of the index $(i_a,i_b)$. Therefore, we only need to analyze the following random variable:
		\begin{equation*}
			\hat Z(\omega; \delta) := \sup_{\substack{\|g\|_\cH \leq 1 \\ \|g\|_{\cL^2} \leq \delta}}\left|\frac{1}{n}\sum_{k=1}^n w_k g(s_k)\right|,
		\end{equation*}
		where $w_k \overset{i.i.d.}{\sim} N(0,1)$. Denote $\cG_n(\delta) = \bbE \hat Z(\omega;\delta)$, which is the local Gaussian complexity. By Proposition \ref{prop:Qn-complexity}, we have 
		\begin{equation}\label{ineq:CQ-Q}
			\cG_n(\delta) \leq CQ_n(\delta) \leq C\left(\zeta_n\delta + \zeta_n^2\right),\qquad \forall \delta \in (0,1].
		\end{equation}
		Here $C$ is some universal constant. To bound the difference between $\hat Z(\omega;\delta)$ and $\cG_n(\delta)$, we apply Adamczak inequality~\citep{adamczak2008tail}. Let $f(\omega_k,s_k) = n^{-1}\omega_k g(s_k)$ and consider the class $\cF(\delta) = \{f: \|g\|_{\cH}\leq 1, \|g\|_{\cL^2} \leq \delta\}$. The weak variance of $\cF$ can be calculated:
		\begin{equation*}
			\begin{split}
				\sigma^2 & := \sup_{f \in \cF(\delta)} \sum_{k=1}^n \bbE f_k(\omega_k, s_k)^2 = \sup_{\substack{\|g\|_\cH \leq 1 \\ \|g\|_{\cL^2} \leq \delta}} n^{-2} \sum_{k=1}^n \bbE \omega_k^2 \bbE g(s_k)^2 \leq \delta^2/n.
			\end{split}
		\end{equation*}
		On the other hand, recall the $\psi_1$-Orlicz norm of a random variable $X$ is defined as $\|X\|_{\psi_1} = \inf\{\lambda > 0:\bbE\exp(|X|/\lambda)\leq 2\}$. We have
		\begin{equation*}
			\begin{split}
				\left\|\max_{k\in [n]} \sup_{f \in \cF(\delta)} |f(\omega_k,s_k)|\right\|_{\psi_1} & \leq C\log n\max_{k \in [n]} \left\|\sup_{f \in \cF(\delta)} |f(\omega_k,s_k)|\right\|_{\psi_1} \\
				& \leq \frac{C\log n}{n} \left\|\sup_{\substack{\|g\|_\cH \leq 1 \\ \|g\|_{\cL^2} \leq \delta}} g(s_1)|\omega_1|\right\|_{\psi_1} \\
				& \leq \frac{C\log n}{n} \cdot  \sup_{\substack{\|g\|_\cH \leq 1 \\ \|g\|_{\cL^2} \leq \delta}}\|g\|_\infty \cdot \left\||\omega_1|\right\|_{\psi_1} \leq \frac{C\log n}{n}.
			\end{split}
		\end{equation*}
		Here the first inequality comes from \cite[Theorem 4]{pisier1983some}. Then by applying Adamczak inequality~\cite[Theorem 4]{adamczak2008tail}, we obtain
		\begin{equation}\label{ineq:Adam-bound}
			\bbP\left(\hat Z(\omega;\delta) \geq 2\cG_n(\delta) + t\right) \leq \exp\left(-\frac{cnt^2}{\delta^2}\right) + 3\exp\left(-\frac{cnt}{\log n}\right).
		\end{equation}
		Set $t = C\left(\delta\sqrt{\frac{p_1+p_2}{n}} + \frac{(p_1+p_2)\log n}{n}\right)$ in \eqref{ineq:Adam-bound} and combine it with \eqref{ineq:CQ-Q}, then we proved for any fixed $\nu \in (0,1]$, 
		\begin{equation}\label{ineq:1-slice-peeling}
			\bbP\left(\hat Z(\omega;\delta) \geq C\left(\zeta_n^2 + \frac{(p_1+p_2)\log n}{n} + \delta\left(\zeta_n + \sqrt{\frac{p_1+p_2}{n}}\right)\right)\right) \leq C\exp(-5(p_1+p_2)).
		\end{equation}
		
		Now we extend above argument from fixed $\delta$ to uniform $\delta \in (0,1]$ via peeling technique. Let $J \in \mathbb N$ such that $2^{-J} \leq \zeta_n \leq 2^{-J+1}$. Since $\zeta_n \geq \sqrt{\log n /n}$, we have $J \leq C\log n$. By setting $\delta = \zeta_n, 2^{-J},2^{-J+1},\cdots,1/2,1$ in \eqref{ineq:1-slice-peeling}, we know that with probability at least $1-C(J+2)\exp(-5(p_1+p_2))$,
		\begin{equation}\label{ineq:peeling-slice-1}
			\begin{split}
				\hat Z(\omega,\zeta_n) & \leq C\left(\zeta_n^2 + \frac{(p_1+p_2)\log n}{n}\right),
			\end{split}
		\end{equation}
		\begin{equation}\label{ineq:peeling-slice-rest}
			\begin{split}
				\hat Z(\omega,2^{-j+1}) & \leq  C\left(\zeta_n^2 + \frac{(p_1+p_2)\log n}{n} + 2^{-j+1}\left(\zeta_n + \sqrt{\frac{p_1+p_2}{n}}\right)\right) ,\qquad 1 \leq j \leq J.     
			\end{split}
		\end{equation}
		
		Now for any $\delta \leq \zeta_n$, \eqref{ineq:peeling-slice-1} implies that
		\begin{equation}\label{ineq:leq-zeta}
			\hat Z(\omega, \delta) \leq \hat Z(\omega,\zeta_n) \leq C\left(\zeta_n^2 + \frac{(p_1+p_2)\log n}{n}\right).
		\end{equation}
		For $\delta \in (\zeta_n, 1]$, we can find some $j \in [J]$ such that $2^{-j} \leq \delta \leq 2^{-j+1}$. Then it follows by \eqref{ineq:peeling-slice-rest} that
		\begin{equation}\label{ineq:geq-zeta}
			\begin{split}
				\hat Z(\omega,\delta) & \leq \hat Z(\omega, 2^{-j+1}) \\
				& \leq 2C\left(\zeta_n^2 + \frac{(p_1+p_2)\log n}{n} + 2^{-j}\left(\zeta_n + \sqrt{\frac{p_1+p_2}{n}}\right)\right) \\
				& \leq 2C\left(\zeta_n^2 + \frac{(p_1+p_2)\log n}{n} + \delta\left(\zeta_n + \sqrt{\frac{p_1+p_2}{n}}\right)\right).
			\end{split}
		\end{equation}
		Therefore, \eqref{ineq:concen-hat-Z} is proved by combining \eqref{ineq:leq-zeta} and \eqref{ineq:geq-zeta}.
	\end{proof}
	
	\bigskip
	\begin{Lemma}\label{lm:hilbert-rescale}
		Let $\cH_0$ be a Hilbert space with inner products $\langle \cdot, \cdot \rangle_{\cH_0}$ and induced norm $\|\cdot\|_{\cH_0}$. Let $x_1, x_2$ be any two non-zero elements in $\cH_0$. Then,
		\begin{equation*}
			\sqrt{1 - \left\langle \frac{x_1}{\|x_1\|_{\cH_0}}, \frac{x_2}{\|x_2\|_{\cH_0}} \right\rangle_{\cH_0}^2} \leq \frac{\|x_1-x_2\|_{\cH_0}}{\|x_1\|_{\cH_0} \vee \|x_2\|_{\cH_0}}.
		\end{equation*}
	\end{Lemma}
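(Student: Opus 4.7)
The plan is to reduce the inequality to a purely two-dimensional (even one-dimensional-angle) geometric statement, namely: if $\theta$ is the angle between $x_1$ and $x_2$, then $\|x_1-x_2\|_{\cH_0} \ge (\|x_1\|_{\cH_0} \vee \|x_2\|_{\cH_0})\sin\theta$, because the left-hand side of the desired inequality is exactly $|\sin\theta|$. Write $a=\|x_1\|_{\cH_0}$, $b=\|x_2\|_{\cH_0}$ and $\rho = \langle x_1,x_2\rangle_{\cH_0}/(ab)$, so the quantity under the square root on the LHS equals $1-\rho^2$. The target therefore becomes
\begin{equation*}
(a\vee b)^2(1-\rho^2)\;\le\;\|x_1-x_2\|_{\cH_0}^2.
\end{equation*}

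First I would handle the two symmetric ``projection'' bounds separately. Project $x_1$ onto the one-dimensional subspace $\mathrm{span}(x_2)$: the orthogonal decomposition of $x_1$ along $x_2/\|x_2\|_{\cH_0}$ and its orthogonal complement gives
\begin{equation*}
\mathrm{dist}\bigl(x_1,\mathrm{span}(x_2)\bigr)^2 \;=\; \|x_1\|_{\cH_0}^2 - \Bigl\langle x_1,\tfrac{x_2}{\|x_2\|_{\cH_0}}\Bigr\rangle_{\cH_0}^{\!2} \;=\; a^2(1-\rho^2).
\end{equation*}
Since $x_2 \in \mathrm{span}(x_2)$, we have $\|x_1-x_2\|_{\cH_0}^2 \ge a^2(1-\rho^2)$. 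The completely symmetric argument, projecting $x_2$ onto $\mathrm{span}(x_1)$, yields $\|x_1-x_2\|_{\cH_0}^2 \ge b^2(1-\rho^2)$. Taking the maximum of these two lower bounds gives exactly $(a\vee b)^2(1-\rho^2) \le \|x_1-x_2\|_{\cH_0}^2$, which is the squared form of the claim; taking square roots (both sides nonnegative) and dividing by $a\vee b$ finishes the proof.

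There is no real obstacle here, since everything reduces to Pythagoras in a Hilbert space. The only point requiring a sentence of care is the degenerate case $\rho^2=1$, where $x_1$ and $x_2$ are collinear: the LHS of the lemma is $0$ and the inequality holds trivially, so one can assume $1-\rho^2>0$ throughout. No additional assumptions (beyond $x_1,x_2\ne 0$ so the normalizations make sense) are needed.
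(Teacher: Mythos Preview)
Your proof is correct. The paper's own argument reaches the same pair of inequalities $\|x_1-x_2\|_{\cH_0}^2 \ge \|x_1\|_{\cH_0}^2(1-\rho^2)$ and $\|x_1-x_2\|_{\cH_0}^2 \ge \|x_2\|_{\cH_0}^2(1-\rho^2)$, but via a purely algebraic route: it applies the AM--GM inequality $2uv \le u^2+v^2$ with $u=\rho$ and $v=\|x_2\|_{\cH_0}/\|x_1\|_{\cH_0}$, then rearranges. Your projection argument is a geometric packaging of the same identity --- both ultimately reduce to $(\|x_2\|_{\cH_0} - \rho\,\|x_1\|_{\cH_0})^2 \ge 0$ --- but your phrasing makes the meaning of the bound (the distance to the line through $x_2$ is a lower bound for the distance to $x_2$ itself) more transparent, and also makes it immediately clear why equality can fail to be tight. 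The paper's version is marginally quicker to write down; yours is easier to remember and to generalize (e.g., replacing $x_2$ by any point on its span).
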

	\begin{proof}
		By Arithmetic-Geometric mean inequality, we have
		\begin{equation*}
			\frac{2\langle x_1, x_2 \rangle_{\cH_0}}{\|x_1\|_{\cH_0}^2} \leq \frac{\langle x_1, x_2 \rangle^2_{\cH_0}}{\|x_1\|_{\cH_0}^2\|x_2\|_{\cH_0}^2} + \frac{\|x_2\|_{\cH_0}^2}{\|x_1\|_{\cH_0}^2}.
		\end{equation*}
		Rearrange this inequality and one obtains
		\begin{equation*}
			\begin{split}
				1 - \frac{\langle x_1, x_2 \rangle^2_{\cH_0}}{\|x_1\|_{\cH_0}^2\|x_2\|_{\cH_0}^2} & \leq 1 + \frac{\|x_2\|_{\cH_0}^2}{\|x_1\|_{\cH_0}^2} - \frac{2\langle x_1, x_2 \rangle_{\cH_0}}{\|x_1\|_{\cH_0}^2} \\
				& = \frac{\|x_1\|_{\cH_0}^2 + \|x_2\|_{\cH_0}^2 - 2\langle x_1, x_2 \rangle_{\cH_0}}{\|x_1\|_{\cH_0}^2} \\
				& = \frac{\|x_1-x_2\|_{\cH_0}^2}{\|x_1\|_{\cH_0}^2}.
			\end{split}
		\end{equation*}
		Changing the positions of $x_1$ and $x_2$, one can similarly obtain that
		\begin{equation*}
			\begin{split}
				1 - \frac{\langle x_1, x_2 \rangle^2_{\cH_0}}{\|x_1\|_{\cH_0}^2\|x_2\|_{\cH_0}^2} \leq \frac{\|x_1-x_2\|_{\cH_0}^2}{\|x_2\|_{\cH_0}^2}.
			\end{split}
		\end{equation*}
		The proof is completed by combining the above two inequalities.
	\end{proof}
	
	\bigskip
	\begin{Lemma}[Borell-TIS Inequality]\label{lm:Borell-TIS}
		Suppose $f_t$ is a centered Gaussian process, almost surely bounded on a compact set $\cT$. Let $\sigma^2 = \sup_{t \in \cT}\bbE f_t^2$. Then, for any $x\geq 0$,
		\begin{equation*}
			\bbP\left(\sup_t |f_t| - \bbE\sup_t |f_t| \geq x\sigma\right) \leq e^{-x^2/2}.
		\end{equation*}
	\end{Lemma}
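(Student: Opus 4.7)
The plan is to reduce the statement to the classical Gaussian concentration inequality for Lipschitz functions via two standard steps: a separability/approximation argument to pass to finite index sets, and a Lipschitz-function representation of $\sup_t |f_t|$ on a Euclidean space under the standard Gaussian measure.

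First I would exploit the separability of the Gaussian process. Since $f_t$ is almost surely bounded on the compact set $\cT$, we may (by a standard regularization / Doob separability argument) choose a countable dense subset $\{t_k\}_{k \geq 1} \subset \cT$ such that $\sup_{t \in \cT} |f_t| = \sup_{k \geq 1} |f_{t_k}|$ almost surely. Writing $\cT_N := \{t_1, \ldots, t_N\}$ and $M_N := \sup_{t \in \cT_N} |f_t|$, the sequence $M_N$ is monotone nondecreasing and converges almost surely (and in $L^1$, by Fernique's theorem applied to the bounded Gaussian process) to $\sup_{t \in \cT} |f_t|$. Hence it suffices to establish the inequality for each finite $\cT_N$ with $\bbE M_N \leq \bbE \sup_t |f_t|$ on the centering side, and then take $N \to \infty$.

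Next, for a fixed finite index set $\{t_1, \ldots, t_N\}$, let $\Sigma \in \bbR^{N \times N}$ be the covariance matrix of $(f_{t_1}, \ldots, f_{t_N})$ and factor $\Sigma = A A^\top$ (for instance $A = \Sigma^{1/2}$). Then $(f_{t_1}, \ldots, f_{t_N})^\top$ has the same law as $A g$ for $g \sim N(0, I_N)$, so each $f_{t_i}$ is represented as $\langle a_i, g \rangle$ where $a_i$ is the $i$-th row of $A$ and $\|a_i\|_2^2 = \Sigma_{ii} = \bbE f_{t_i}^2 \leq \sigma^2$. Define $F : \bbR^N \to \bbR$ by $F(x) = \max_{i \in [N]} |\langle a_i, x\rangle|$. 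Using $|\max_i u_i - \max_i v_i| \leq \max_i |u_i - v_i|$ together with $\bigl||u| - |v|\bigr| \leq |u - v|$ and Cauchy--Schwarz, one has
\begin{equation*}
|F(x) - F(y)| \leq \max_{i} |\langle a_i, x - y\rangle| \leq \max_i \|a_i\|_2 \, \|x - y\|_2 \leq \sigma \, \|x - y\|_2,
\end{equation*}
so $F$ is $\sigma$-Lipschitz on $\bbR^N$.

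The concluding step invokes the classical Gaussian concentration inequality for Lipschitz functions: for any $L$-Lipschitz $F : \bbR^N \to \bbR$ and $g \sim N(0, I_N)$,
\begin{equation*}
\bbP\bigl(F(g) - \bbE F(g) \geq t\bigr) \leq \exp\!\bigl(-t^2/(2L^2)\bigr), \qquad t \geq 0,
\end{equation*}
which is a standard consequence of the Gaussian isoperimetric inequality (equivalently Ehrhard's inequality, or Gross's log-Sobolev inequality combined with the Herbst argument), and may be cited from a standard reference such as Ledoux's monograph. Applying this with $L = \sigma$ and $t = x\sigma$ gives the desired bound for each $\cT_N$, and the monotone limit $N \to \infty$ transfers the inequality to $\cT$. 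The only nontrivial point is really the Lipschitz constant calculation, which is where the quantity $\sigma^2 = \sup_t \bbE f_t^2$ enters exactly as stated; the separability step and the cited Lipschitz concentration bound are both routine and require no further work.
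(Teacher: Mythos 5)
Your proposal is correct, but it takes a different route from the paper: the paper does not prove this lemma at all, it simply cites \cite[Theorem 2.1.1]{adler2009random}, whereas you give a self-contained reduction to the Gaussian concentration inequality for Lipschitz functions. Your three steps are all sound: (i) the finite-dimensional representation $f_{t_i} \overset{d}{=} \langle a_i, g\rangle$ with $\|a_i\|_2^2 = \bbE f_{t_i}^2 \leq \sigma^2$ and the bound $|F(x)-F(y)| \leq \max_i |\langle a_i, x-y\rangle| \leq \sigma\|x-y\|_2$ correctly identify $\sigma$ as the Lipschitz constant, and this is exactly where $\sup_t \bbE f_t^2$ enters; (ii) the Lipschitz concentration bound $\bbP(F(g)-\bbE F(g)\geq t)\leq e^{-t^2/(2L^2)}$ with constant $1$ is the standard consequence of Gaussian isoperimetry (or log-Sobolev plus Herbst), so the stated constant $e^{-x^2/2}$ comes out correctly; (iii) the passage to the full index set works because $M_N \uparrow \sup_t|f_t|$ a.s.\ with $\bbE M_N \leq \bbE\sup_t|f_t|$, so $\bbP(\sup_t|f_t| > \bbE\sup_t|f_t| + x\sigma) \leq \lim_N \bbP(M_N > \bbE M_N + x\sigma) \leq e^{-x^2/2}$, and the non-strict version follows by letting $x' \uparrow x$. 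Two minor caveats worth acknowledging: the measurability of $\sup_{t\in\cT}|f_t|$ for a general a.s.\ bounded process requires the separability convention you invoke (which is also implicit in the paper's statement), and finiteness of $\bbE\sup_t|f_t|$ (needed for the centering to make sense) must be taken from Fernique/Borell as you do. What your route buys is transparency about where $\sigma$ enters and independence from the specific reference; what the citation buys is brevity and coverage of the measure-theoretic fine print, since the Lipschitz concentration input you rely on rests on the same isoperimetric machinery as the cited theorem.
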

	\begin{proof}
		See \cite[Theorem 2.1.1]{adler2009random}.
	\end{proof}
	
	\bigskip
	\begin{Lemma}[Heteroskedastic Wishart-type Concentration]\label{lm:hetero-wishart}
		Let $\bZ$ be a $p_1$-by-$p_2$ random matrix with independent mean-zero Gaussian entries and $\Var(\bZ_{ij}) = \sigma_{ij}^2$. Denote 
		\begin{equation*}
			\sigma_C^2 = \max_j \sum_{i=1}^{p_1}\sigma_{ij}^2,\qquad \sigma_R^2 = \max_i \sum_{j=1}^{p_2} \sigma_{ij}^2,\qquad \sigma_*^2 = \max_{i,j} \sigma_{ij}^2.
		\end{equation*}
		Then there exists some constant $C$, such that
		\begin{equation*}
			\bbP\left(\left\|\bZ\bZ^\top - \bbE\bZ\bZ^\top\right\| \geq C\left\{\left(\sigma_R + \sigma_C + \sigma_*\sqrt{\log(p_1 \wedge p_2)} + \sigma_* x\right)^2 - \sigma_R^2\right\}\right) \leq e^{-x}.
		\end{equation*}
	\end{Lemma}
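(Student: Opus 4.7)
The plan is a two-step reduction: first, obtain a sharp high-probability bound on $\|\bZ\|$ itself; then convert it to a bound on $\|\bZ\bZ^\top - \bbE\bZ\bZ^\top\|$ via a short deterministic argument exploiting the diagonal structure of $\bbE\bZ\bZ^\top$.

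For the first step, write $\bZ_{ij} = \sigma_{ij} g_{ij}$ with $\{g_{ij}\}$ i.i.d.\ standard Gaussian and combine two ingredients. (i) The sharp Bandeira--Van Handel bound for heteroskedastic Gaussian matrices,
\begin{equation*}
\bbE \|\bZ\| \leq C_1\bigl(\sigma_R + \sigma_C + \sigma_* \sqrt{\log (p_1 \wedge p_2)}\bigr),
\end{equation*}
which captures the correct dependence on the variance profile. (ii) Gaussian Lipschitz concentration: since perturbing $g$ by $\Delta g$ changes $\bZ$ by at most $\sigma_*\|\Delta g\|_2$ in Frobenius (hence spectral) norm, the map $g\mapsto \|\bZ(g)\|$ is $\sigma_*$-Lipschitz, so Borell--TIS (Lemma~\ref{lm:Borell-TIS} applied to the Gaussian process $u^\top\bZ v$) yields $\bbP(\|\bZ\|\geq \bbE\|\bZ\|+\sigma_* t)\leq e^{-t^2/2}$. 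Choosing $t=\sqrt{2x}\leq 1+x$ (absorbing the mild $\sqrt{x}$-vs-$x$ slack into the constant, since the claim is essentially vacuous for $x\lesssim 1$) and writing $M := C(\sigma_R+\sigma_C+\sigma_*\sqrt{\log(p_1\wedge p_2)}+\sigma_* x)$ gives $\bbP(\|\bZ\|\leq M)\geq 1-e^{-x}$.

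For the second step, note that $\bbE\bZ\bZ^\top=\diag(D_1,\ldots,D_{p_1})$ with $D_i:=\sum_j\sigma_{ij}^2\in[\underline\sigma_R^2,\sigma_R^2]$, $\underline\sigma_R^2:=\min_i D_i$. For any unit $u\in\bbR^{p_1}$,
\begin{equation*}
u^\top\bigl(\bZ\bZ^\top-\bbE\bZ\bZ^\top\bigr)u
= \|\bZ^\top u\|_2^2 - \sum_{i=1}^{p_1} u_i^2 D_i
\in \bigl[-\sigma_R^2,\; \|\bZ\|^2-\underline\sigma_R^2\bigr].
\end{equation*}
Taking the supremum in $u$ on the event $\{\|\bZ\|\leq M\}$ yields $\|\bZ\bZ^\top-\bbE\bZ\bZ^\top\|\leq \max(\sigma_R^2,\; M^2-\underline\sigma_R^2)\leq M^2$, which matches the form $M^2-\sigma_R^2$ of the lemma up to a constant factor: in the interesting regime $M^2\gg\sigma_R^2$ the two differ negligibly, and in the regime $M\asymp\sigma_R$ the claimed $-\sigma_R^2$ refinement is absorbed into $C$.

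The main obstacle is the sharpness of this conversion step: the naive triangle bound $\|\bZ\bZ^\top\|+\|\bbE\bZ\bZ^\top\|\leq M^2+\sigma_R^2$ misses the $-\sigma_R^2$ improvement, which has to be recovered by exploiting the positive-semidefiniteness of $\bZ\bZ^\top$ in tandem with the diagonal structure of $\bbE\bZ\bZ^\top$. An alternative route producing the expansion $M^2-\sigma_R^2 = 2\sigma_R(\sigma_C+\sigma_*\sqrt{\log(p_1\wedge p_2)}+\sigma_* x)+(\sigma_C+\sigma_*\sqrt{\log(p_1\wedge p_2)}+\sigma_* x)^2$ term-by-term is to apply matrix Bernstein directly to the column decomposition $\bZ\bZ^\top-\bbE\bZ\bZ^\top = \sum_{j=1}^{p_2}(z_j z_j^\top - \bbE z_j z_j^\top)$, with sub-exponential variance proxy $\nu\leq \sigma_R^2\sigma_C^2$ and envelope scale $\sigma_*^2$. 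Either route suffices; the Bandeira--Van Handel route sketched above is the more economical and isolates the $\sqrt{\log(p_1\wedge p_2)}$ factor cleanly.
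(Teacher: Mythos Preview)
The paper does not prove this lemma; it simply cites \cite{cai2020non-asymptotic}, Theorem~5. Your proposal is therefore a genuine attempt at a self-contained argument, and Step~1 (Bandeira--Van Handel for $\bbE\|\bZ\|$ combined with Gaussian Lipschitz concentration) is correct and yields $\bbP(\|\bZ\|\leq M)\geq 1-e^{-x}$ with $M=C_1(\sigma_R+A)$, $A:=\sigma_C+\sigma_*\sqrt{\log(p_1\wedge p_2)}+\sigma_* x$.

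The gap is in Step~2. Your conversion gives at best $\|\bZ\bZ^\top-\bbE\bZ\bZ^\top\|\leq \max(\sigma_R^2,\,M^2-\underline\sigma_R^2)$, and you claim this matches $C'\{(\sigma_R+A)^2-\sigma_R^2\}=C'(2\sigma_R A+A^2)$ up to constants. It does not. Two independent failures: (i) the Bandeira--Van Handel constant $C_1$ is strictly larger than $1$, so even when $\underline\sigma_R=\sigma_R$ you get $M^2-\sigma_R^2=(C_1^2-1)\sigma_R^2+C_1^2(2\sigma_R A+A^2)$, retaining a $\sigma_R^2$ term; (ii) $\underline\sigma_R$ can be $0$ (one zero-variance row suffices), in which case your bound is just $M^2$. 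Either way you obtain a bound of order $\sigma_R^2$ in the regime $A\ll\sigma_R$, whereas the lemma gives order $\sigma_R A$. Concretely, take i.i.d.\ $N(0,\tau^2)$ entries with $p_2\gg p_1$: the lemma gives $C\tau^2(\sqrt{p_1p_2}+p_1)$, your argument gives $C\tau^2 p_2$. This is exactly the regime in which the lemma is invoked in the proof of Theorem~\ref{thm:initialization} (there $p_2$ is replaced by $p_2 n$), so the loss is fatal for the application.

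The $-\sigma_R^2$ cancellation is the whole content of the lemma and cannot be recovered from a bound on $\|\bZ\|$ alone: once you pass through $\|\bZ\|^2$ you have already paid $\sigma_R^2$. A correct proof must work directly with the centered second-order chaos $\bZ\bZ^\top-\bbE\bZ\bZ^\top$, e.g.\ via a moment/decoupling argument or a Hanson--Wright--type bound on $u^\top(\bZ\bZ^\top-\bbE\bZ\bZ^\top)u$ uniformly over an $\varepsilon$-net, as in the cited reference. Your matrix-Bernstein alternative is closer in spirit but as sketched has the wrong envelope scale ($\sigma_C^2$, not $\sigma_*^2$) and will not produce the $\sigma_*\sqrt{\log(p_1\wedge p_2)}$ term without additional truncation work.
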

	\begin{proof}
		See \cite[Theorem 5]{cai2020non-asymptotic}.
	\end{proof}
	
	\bigskip
	\begin{Lemma}\label{lm:independent-matrix-product}
		Let $\bZ_1 \in \bbR^{m_1 \times n}, \bZ_2 \in \bbR^{m_2 \times n}$ be two independent random matrices with i.i.d. $N(0,1)$ entries and $\bLambda = \diag(\lambda_1,\ldots, \lambda_n)$ with $\lambda_1 \geq \ldots \geq \lambda_n$. Then for any $t>0$,
		\begin{equation*}
			\bbP\left(\left\|\bZ_1 \bLambda \bZ_2^\top\right\| > C\left(\sqrt{(m_1+m_2+t)\sum_{k=1}^n \lambda_k^2} + (m_1+m_2+t)\lambda_1\right)\right) \leq \exp(-c(m_1+m_2+t)).
		\end{equation*}
	\end{Lemma}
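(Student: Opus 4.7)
The statement to prove is a spectral-norm concentration bound for $\bZ_1 \bLambda \bZ_2^\top$, where $\bZ_1, \bZ_2$ are independent standard Gaussian matrices and $\bLambda$ is diagonal. My plan is the classic route: control the bilinear form pointwise via Hanson--Wright, then promote it to a uniform bound on $\bbS^{m_1-1} \times \bbS^{m_2-1}$ by an $\varepsilon$-net argument.

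\textbf{Step 1: Pointwise bilinear bound.} Fix unit vectors $u \in \bbS^{m_1-1}$ and $v \in \bbS^{m_2-1}$. By rotational invariance of the Gaussian distribution, $X := \bZ_1^\top u \in \bbR^n$ and $Y := \bZ_2^\top v \in \bbR^n$ are independent standard Gaussian vectors, and
\[
u^\top \bZ_1 \bLambda \bZ_2^\top v \;=\; X^\top \bLambda Y \;=\; \sum_{k=1}^n \lambda_k X_k Y_k.
\]
Viewing $W = (X^\top,Y^\top)^\top \in \bbR^{2n}$ as a standard Gaussian vector and writing $X^\top \bLambda Y = W^\top A W$ with $A$ the symmetric block matrix having $\tfrac{1}{2}\bLambda$ in the off-diagonal blocks, we have $\bbE W^\top A W = 0$, $\|A\|_F^2 = \tfrac{1}{2}\sum_k \lambda_k^2$, and $\|A\| = \tfrac{1}{2}\lambda_1$. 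The Hanson--Wright inequality then yields, for any $s > 0$,
\[
\bbP\!\left(\left|u^\top \bZ_1 \bLambda \bZ_2^\top v\right| \geq s\right) \;\leq\; 2\exp\!\left(-c\min\!\left\{\frac{s^2}{\sum_{k=1}^n \lambda_k^2},\;\frac{s}{\lambda_1}\right\}\right).
\]

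\textbf{Step 2: Net covering and union bound.} Let $\mathcal N_1 \subset \bbS^{m_1-1}$ and $\mathcal N_2 \subset \bbS^{m_2-1}$ be $\tfrac{1}{4}$-nets, chosen so that $|\mathcal N_1| \leq 9^{m_1}$ and $|\mathcal N_2| \leq 9^{m_2}$. A standard discretization argument gives
\[
\left\|\bZ_1 \bLambda \bZ_2^\top\right\| \;\leq\; 2\max_{u \in \mathcal N_1,\, v\in \mathcal N_2} \left|u^\top \bZ_1 \bLambda \bZ_2^\top v\right|.
\]
Taking a union bound over $\mathcal N_1 \times \mathcal N_2$ and setting
\[
s \;=\; C\!\left(\sqrt{(m_1+m_2+t)\sum_{k=1}^n \lambda_k^2} \;+\; (m_1+m_2+t)\lambda_1\right)
\]
for a sufficiently large constant $C$, the $9^{m_1+m_2}$ cardinality factor is absorbed by the exponent from Step~1, yielding the claimed tail bound $\exp(-c(m_1+m_2+t))$.

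\textbf{Expected difficulty.} There is no real obstacle here; both ingredients are standard. The only care needed is to verify that the two regimes in the Hanson--Wright tail translate correctly into the sub-Gaussian term $\sqrt{(m_1+m_2+t)\sum_k \lambda_k^2}$ and the sub-exponential term $(m_1+m_2+t)\lambda_1$ after the net union bound, which is routine. If one prefers to avoid the block-matrix maneuver, an equivalent route is to condition on $X$ (so $X^\top \bLambda Y \mid X \sim N(0, \|\bLambda X\|_2^2)$) and then apply Hanson--Wright to $\|\bLambda X\|_2^2$; this gives the same final rate.
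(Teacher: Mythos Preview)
Your proposal is correct and follows essentially the same route as the paper: apply Hanson--Wright to the bilinear form $u^\top \bZ_1 \bLambda \bZ_2^\top v = \sum_k \lambda_k X_k Y_k$ for fixed unit vectors, then lift to the operator norm via a $\tfrac14$-net and union bound. The only cosmetic difference is that you spell out the block-matrix realization of the quadratic form, whereas the paper invokes Hanson--Wright directly on the cross-term sum.
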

	\begin{proof}
		Note that 
		\begin{equation*}
			\left\|\bZ_1 \bLambda \bZ_2^\top\right\| = \sup_{\substack{u \in \bbS^{m_1-1} \\ v \in \bbS^{m_2-1}}} u^\top \bZ_1\bLambda_1 \bZ_2^\top v.
		\end{equation*}
		We bound the above random variable via $\varepsilon$-net argument. Let $\{u^{(i)}\}_{i=1}^{N_1}$ and $\{v^{(j)}\}_{j=1}^{N_2}$ be the $\varepsilon$-net of $\bbS^{m_1-1}$ and $\bbS^{m_2-1}$ respectively such that
		\begin{equation*}
			N_1 \leq (\varepsilon/3)^{m_1},\qquad N_2 \leq (\varepsilon/3)^{m_2}.
		\end{equation*}
		For each fixed $(i,j) \in [N_1] \times [N_2]$, $\bZ_1^\top u^{(i)}$ and $\bZ_2^\top v^{(j)}$ are independent $n$-dimensional random vectors with distribution $N(0, \bI)$. Therefore,
		\begin{equation}
			u^{(i)\top} \bZ_1\bLambda \bZ_2^\top v^{(j)} \overset{d}{=} \sum_{k=1}^n \lambda_k x_ky_k,
		\end{equation}
		where $x_1,\ldots,x_n,y_1,\ldots,y_n$ are independent $N(0,1)$ random variables and $\overset{d}{=}$ denotes equal in distribution. By the Hanson-Wright's inequality~\citep{rudelson2013hanson}, we have
		\begin{equation}
			\bbP\left(u^{(i)\top} \bZ_1\bLambda \bZ_2^\top v^{(j)} > t\right) = \bbP\left(\sum_{k=1}^n \lambda_k x_k y_k > t\right) \leq 2\exp\left(-c\min\left\{\frac{t^2}{\sum_{k=1}^n \lambda_k^2}, \frac{t}{\lambda_1}\right\}\right).
		\end{equation}
		Replace $t$ with $C\sqrt{(m_1+m_2 + t)\sum_{k=1}^n \lambda_k^2} + (m_1+m_2 + t)\lambda_1 $, apply union bound on all the pairs $(i,j)$, and we obtain:
		\begin{equation}\label{ineq:lm-matrix-product-1}
			\begin{split}
				&\bbP\left(\max_{i,j} u^{(i)\top} \bZ_1\bLambda \bZ_2^\top v^{(j)} > \sqrt{(m_1+m_2+t)\sum_{k=1}^n \lambda_k^2} + (m_1+m_2+t)\lambda_1\right) \\
				& \qquad\qquad \leq 2\exp\left(-c(m_1+m_2+t)\right).
			\end{split}
		\end{equation}
		Now let $(u^*, v^*) = \argmax_{u,v} u^\top \bZ_1\bLambda_1 \bZ_2^\top v$, then there exists $(i^*,j^*) \in [N_1] \times [N_2]$ such that 
		\begin{equation*}
			\|u^* - u^{(i^*)}\| \leq \varepsilon, \qquad \|v^* - v^{(j^*)}\| \leq \varepsilon.
		\end{equation*}
		Then it follows that
		\begin{equation*}
			\begin{split}
				\left\|\bZ_1\bLambda\bZ_2^\top\right\| &= u^{*\top} \bZ_1\bLambda_1 \bZ_2^\top v^{*} \\
				& = u^{(i^*)\top} \bZ_1\bLambda_1 \bZ_2^\top v^{(j^*)} + \left(u^{*} - u^{(i^*)}\right)^\top \bZ_1\bLambda_1 \bZ_2^\top v^{(j^*)} + u^{*\top}\bZ_1\bLambda_1 \bZ_2^\top \left(v^* - v^{(j^*)}\right) \\
				& \leq u^{(i^*)\top} \bZ_1\bLambda_1 \bZ_2^\top v^{(j^*)} + 2\varepsilon \cdot \left\|\bZ_1\bLambda\bZ_2^\top\right\|.
			\end{split} 
		\end{equation*}
		Take $\varepsilon = 1/4$ and we obtain
		\begin{equation*}
			\left\|\bZ_1 \bLambda \bZ_2^\top\right\| \leq 2u^{(i^*)\top} \bZ_1\bLambda_1 \bZ_2^\top v^{(j^*)}.
		\end{equation*}
		Now the proof is completed by utilizing \eqref{ineq:lm-matrix-product-1}.
	\end{proof}

    \begin{Lemma}\label{lm:projection}
        Suppose $u,v$ are two vectors of the same dimension. For any $\lambda\in \mathbb{R}$, we have
        $$\dist(u, v) \leq \frac{\|u - \lambda v\|_2}{\|u\|_2}.$$
        Suppose $\xi, \eta \in \mathcal{L}_2$. For any $\lambda\in \mathbb{R}$, we have
        $$\dist(\xi, \eta) \leq \frac{\|\xi - \lambda \eta\|_{\mathcal{L}_2}}{\|\xi\|_{\mathcal{L}_2}}.$$
    \end{Lemma}
    \begin{proof}
        \begin{equation*}
            \begin{split}
                & \dist(u, v) = \sqrt{1 - \frac{\langle u, v\rangle^2}{\|u\|_2^2 \|v\|_2^2}} \leq \frac{\|u-\lambda v\|_2}{\|u\|_2}\\
                \Leftarrow  \quad & 1 - \frac{\langle u, v\rangle^2}{\|u\|_2^2\|v\|_2^2} \leq \frac{\|u\|_2^2 - 2\lambda\langle u, v\rangle + \lambda^2 \|v\|^2_2}{\|u\|_2^2}\\
                \Leftarrow \quad & 0 \leq \langle u, v\rangle^2 - 2\lambda\|v\|_2^2 \langle u, v\rangle + \lambda^2\|v\|_2^4 \\
                \Leftarrow \quad & 0 \leq \left(\langle u,v\rangle - \lambda\|v\|_2^2\right)^2.
            \end{split}
        \end{equation*}
        The proof for $\dist(\xi, \eta) \leq \frac{\|\xi - \lambda \eta\|_{\mathcal{L}_2}}{\|\xi\|_{\mathcal{L}_2}}$ follows similarly.
    \end{proof}
 
\end{document}